\declaretheorem[name=Theorem,numberwithin=section]{theorem}
\declaretheorem[name=Proposition,numberwithin=section]{proposition}
\declaretheorem[name=Definition,style=definition]{definitionRE}
\declaretheorem[name=Lemma,numberwithin=section]{lemma}
\declaretheorem[name=Definition,style=definition,numberwithin=section]{definition}
\declaretheorem[name=Corollary,numberwithin=section]{corollary}
\declaretheorem[name=Remark,style=remark,numberwithin=section]{remark}
\declaretheorem[name=Example,style=definition,numberwithin=section]{example}
\newcommand{\tr}{\operatorname{Tr}}
\newcommand{\bra}[1]{\langle #1 \vert}
\newcommand{\ket}[1]{\vert #1 \rangle}
\newcommand{\innerCG}[2]{ \left\langle #1\left| \right.  #2 \right\rangle }
\newcommand{\LL}[0]{ \mathcal{L}}
\newcommand{\RR}[0]{\mathcal{R} }
\newcommand{\CC}[0]{\Theta}
\newcommand{\II}[0]{\mathbb{I}}
\newcommand{\HN}[0]{\mathcal{H}^{\otimes N}}
\newcommand{\myinv}[1]{#1^{\scalebox{0.9}[1.0]{-}1}}
\tikzset{
	mpstensor/.pic = {
		\draw (-0.5,0)--(0.5,0);
		\draw (0,0)--(0,0.5);
		\draw[fill=white] (-0.25,-0.25) rectangle (0.25,0.25);
		\node at (0,0) {\tikzpictext};
	}
}
\tikzset{
	mpstensorMED/.pic = {
		\draw (-0.5,0)--(0.5,0);
		\draw (0,0)--(0,0.5);
		\draw[fill=white] (-0.3,-0.3) rectangle (0.3,0.3);
		\node at (0,0) {\tikzpictext};
	}
}
\tikzset{
	mpstensorBIG/.pic = {
		\draw (-0.65,0)--(0.65,0);
		\draw (0,0)--(0,0.65);
		\draw[fill=white] (-0.35,-0.35) rectangle (0.35,0.35);
		\node at (0,0) {\tikzpictext};
	}
}
\tikzset{
	horizontalmatrix/.pic = {
		\draw (-0.5,0)--(0.5,0);
		%\draw (0,0)--(0,0.5);
		\draw[fill=white] (-0.25,-0.25) rectangle (0.25,0.25);
		\node at (0,0) {\tikzpictext};
	}
}
\tikzset{
	horizontalmatrixMED/.pic = {
		\draw (-0.5,0)--(0.5,0);
		%\draw (0,0)--(0,0.5);
		\draw[fill=white] (-0.3,-0.3) rectangle (0.3,0.3);
		\node at (0,0) {\tikzpictext};
	}
}
\tikzset{
	horizontalmatrixBIG/.pic = {
		\draw (-0.65,0)--(0.65,0);
		%\draw (0,0)--(0,0.5);
		\draw[fill=white] (-0.35,-0.35) rectangle (0.35,0.35);
		\node at (0,0) {\tikzpictext};
	}
}
\tikzset{
	horizontalmatrixWIDE/.pic = {
		\draw (-0.8,0)--(0.8,0);
		%\draw (0,0)--(0,0.5);
		\draw[fill=white] (-0.5,-0.35) rectangle (0.5,0.35);
		\node at (0,0) {\tikzpictext};
	}
}
\tikzset{
	horizontalmatrixWIDE1/.pic = {
		\draw (-1,0)--(1,0);
		%\draw (0,0)--(0,0.5);
		\draw[fill=white] (-0.7,-0.35) rectangle (0.7,0.35);
		\node at (0,0) {\tikzpictext};
	}
}
\tikzset{
	verticalmatrix/.pic = {
		\draw (0,-0.5)--(0,0.5);
		%\draw (0,0)--(0,0.5);
		\draw[fill=white] (-0.25,-0.25) rectangle (0.25,0.25);
		\node at (0,0) {\tikzpictext};
	}
}
\tikzset{
	verticalmatrixMED/.pic = {
		\draw (0,-0.5)--(0,0.5);
		%\draw (0,0)--(0,0.5);
		\draw[fill=white] (-0.3,-0.3) rectangle (0.3,0.3);
		\node at (0,0) {\tikzpictext};
	}
}
\tikzset{
	verticalmatrixBIG/.pic = {
		\draw (0,-0.65)--(0,0.65);
		%\draw (0,0)--(0,0.5);
		\draw[fill=white] (-0.35,-0.35) rectangle (0.35,0.35);
		\node at (0,0) {\tikzpictext};
	}
}
\tikzset{
	verticalmatrixSuperWIDE/.pic = {
		\draw (0,-0.65)--(0,0.65);
		%\draw (0,0)--(0,0.5);
		\draw[fill=white] (-0.6,-0.35) rectangle (0.6,0.35);
		\node at (0,0) {\tikzpictext};
	}
}
\tikzset{
	verticalmatrixWIDE/.pic = {
		\draw (0,-0.65)--(0,0.65);
		%\draw (0,0)--(0,0.5);
		\draw[fill=white] (-0.5,-0.35) rectangle (0.5,0.35);
		\node at (0,0) {\tikzpictext};
	}
}
\tikzset{
	verticalmatrixSuperWIDE/.pic = {
		\draw (0,-0.65)--(0,0.65);
		%\draw (0,0)--(0,0.5);
		\draw[fill=white] (-0.8,-0.35) rectangle (0.8,0.35);
		\node at (0,0) {\tikzpictext};
	}
}
\tikzset{
	pics/projector/.style 2 args = {code={
			\pic[pic text = #1] at (-0.75,0) {gauge};
			\pic[pic text = #2] at (0.75,0) {gauge};
			\pic[pic text = #1] at (0,0.75) {gaugevert};
			\pic[pic text = #2] at (0,-0.75) {gaugevert};
		}}
	}
	\tikzset{
		pepsoverlap/.pic = {
			\draw (-0.5,0.05)--(0.5,0.05);
			\draw (0.05,-0.5)--(0.05,0.5);
			\draw (-0.5,-0.05)--(0.5,-0.05);
			\draw (-0.05,-0.5)--(-0.05,0.5);
			\draw[fill=white] (-0.25,-0.25) rectangle (0.25,0.25);    
		}
	}
	\tikzset{
		every picture/.style = {
			baseline={([yshift=-.5ex]current bounding box.center)}, 
			scale=1.2,
			transform shape,
			font=\scriptsize
		}
	}
\begin{document}
  
%  
%  \begin{frontmatter}
%  	
  		\title{Classification of Matrix Product States with a Local (Gauge) Symmetry}
  		
  	\date{\today}
  	\author{Ilya Kull}
  	\author{Andras Molnar}
  	\author{Erez Zohar}
   	\author{J. Ignacio Cirac}
  	\address{Max-Planck-Institut f\"ur Quantenoptik, Hans-Kopfermann-Stra\ss e 1, 85748 Garching, Germany.}
  	
  	\begin{abstract}
  	Matrix Product States (MPS) are a particular type of one dimensional tensor network states, that have been  applied to the study of numerous quantum many body problems. One of their key features is the possibility to describe and encode symmetries on the level of a single building block (tensor), and hence they provide a natural playground for the study of symmetric systems. In particular, recent works have proposed to use MPS (and higher dimensional tensor networks) for the study of systems with local symmetry that appear in the context of gauge theories. In this work we classify MPS which exhibit local invariance under arbitrary gauge groups. We study the respective tensors and their structure, revealing known constructions that follow known gauging procedures, as well as different, other types of possible gauge invariant states.
  	\end{abstract}
  	
%  	\begin{keyword}
%  		Tensor network states \sep Matrix product states \sep Lattice gauge theories
%  	\end{keyword}
 
\maketitle

\section{Introduction}
Gauge theories play a  paramount role in modern physics.   Through  the gauge principle, the   theories describing the fundamental interactions in the standard model of particle physics are obtained  by lifting the global symmetries of the interaction-free matter theories to be  local symmetries, minimally coupled \cite{Peskin:257493} to a gauge field.  Moreover, they also emerge as effective low-energy descriptions in several condensed matter scenarios \cite{Fradkin2013}. 
Historically, the gauging procedure  was first conceived  as a transformation of a Lagrangian or Hamiltonian describing a physical  system; however, it can be performed on the level of quantum states as well, irrespective of dynamics associated to a specific theory.

In spite of their central role in the standard model, non-Abelian gauge theories still involve puzzles to be solved. Their complete understanding still poses a significant challenge due to non-perturbative phenomena (e.g.\ low energy QCD). Among the various approaches proposed to tackle the strongly coupled regime, a particularly  general and successful one is lattice gauge theory \cite{Wilson:1974sk}.  
 Monte Carlo sampling of Wilson's Euclidean lattice version of gauge theories    has so far been  the most successful method of numerical   simulation, 
 nevertheless, it suffers from its own drawbacks.
The  sign problem  \cite{Troyer:2004ge} prevents application to systems with large fermionic densities,  and the use of Euclidean time does not allow to  study real time evolution and  non-equilibrium phenomena in general scenarios.
In order to describe real-time evolution of such theories, one is forced to abandon the Monte Carlo approach, and search for other methods. In this context, the Hamiltonian formulation of Kogut and Susskind \cite{Kogut1975} has   been receiving  renewed interest,  with two recent approaches coming from the  quantum information and quantum optics community: quantum simulation, using optical, atomic or solid-state systems  \cite{Zohar:2015hwa,Wiese:2013uua},
and tensor network states.
 
The representation of quantum many-body states as tensor networks  is connected to  White's  density-matrix renormalization group \cite{White1992}, and  in the case of one dimensional spin lattices is known as matrix product states (MPS) \cite{Verstraete2008}. 
Among many useful properties of tensor networks, one which makes them well suited to the description of states with symmetries, is the ability to encode the symmetry on the level of a single  tensor (or a few) describing the state. In the case of  global symmetries, both for   MPS and for certain classes of PEPS in 2D (Projected Entangled Pair States - the generalization of MPS to higher dimensional lattices), the relation between the symmetry of the state and the properties of the tensor is well understood  \cite{Perez-Garcia2010}.
Tensor networks studies of lattice gauge theories have so far included  numerical works (e.g., mass spectra, thermal states, real time dynamics and string breaking, phase diagrams etc.\ for the Schwinger model and others) 
\cite{ Banuls2013, Shimizu2014, Buyens2014, Silvi2014, 2014arXiv1411.0020B,PhysRevLett.112.201601, Saito:2014bda,Kuhn2015, PhysRevD.92.034519,PhysRevD.93.094512, PhysRevX.6.011023, Buyens:2016hhu, Silvi:2016cas,PhysRevD.93.085012, Banuls:2016hhv, PhysRevD.94.085018,Saito:2015ryj, PhysRevLett.118.071601, 2017EPJWC13704001B,Banuls:2017ena}, 
furthermore, several theoretical formulations of classes of gauge invariant tensor network states have been proposed \cite{Tagliacozzo2014,Haegeman:2014maa,Zohar:2015eda,Zohar:2015jnb,Zohar:2016wcf}.
In all  of the latter the construction method follows the ones common to conventional gauge theory formulations: symmetric tensors are used to describe the matter degree of freedom, and later on a gauge field degree of freedom is added, or, alternatively - a pure gauge field theory  is considered. While the usefulness of tensor networks in  lattice gauge theories has certainly been demonstrated by the above mentioned works, so far there were few attempts (e.g.\ \cite{Buyens2014}) to generally classify tensor network states with local symmetry. 
  
In this paper,
starting   from the assumption of  a local symmetry, we find necessary and sufficient conditions to be satisfied by the tensors encoding a MPS. Similar work was done in \cite{Buyens2014} for MPS with local U(1) symmetry and  with open boundary conditions. We focus on  translation-invariant MPS, and deal with arbitrary finite or compact Lie groups. Clearly, one could come up with arbitrarily complicated  constructions of states with a local symmetry (e.g.\ by using many kinds of symmetric tensors). Our analysis is therefore limited to three physically meaningful settings corresponding to:  states describing {matter},  {pure gauge field} states  and states of both matter and gauge field. In our analysis the matter degrees of freedom are represented by  ``spins''; this could in principle be extended to fermionic systems, and in particular to Majorana fermions.

For states describing only matter we find that local symmetries can only be  trivial,   and show how to gauge such states by adding another degree of freedom.
When investigating pure gauge states we show that local symmetry in MPS requires a specific structure of the Hilbert space describing the gauge field degree of freedom.
In  Wilson's lattice gauge theories, in order to obtain minimal coupling in a continuum limit, the gauge field degree of freedom  is set as a group element in the same representation as the one acting on the matter  \cite{Wilson:1974sk}. In the Hamiltonian formulation, the corresponding Hilbert space is isomorphic to $L^2(G)$, equipped with the left and right regular representations \cite{Zohar:2014qma}, and  is referred to by  Kogut and Susskind  as   ``the rigid rotator'' (in the $SU(2)$ case) \cite{Kogut1975}. The structure that  we find for the gauge field Hilbert space  is more general and contains   the rotator-like space introduced by Kogut and Susskind as a particular case.

In the matter and gauge field setting  we show that, similar to the case of MPS with a global symmetry, the tensor describing the matter degree of freedom is  a (generalized) vector operator, and its structure is therefore determined by the Wigner-Eckart theorem; the gauge field tensor's structure is simpler: it is an intertwining map that translates the physical symmetry operators into a group action on the virtual (bond) spaces. This is a one dimensional version of the construction principle used in \cite{Zohar:2015jnb} - our work describes the sense in which this construction method is unique and the available structural and parametric freedom in choosing the tensors. However, the structure we derive allows for more general gauge invariant MPS, namely, ones that do not arise as a result of gauging a global symmetry or coupling matter to a pure gauge field. We construct examples of such states: while possessing a local symmetry when coupled to each other, the matter and gauge field degrees of freedom do not retain their  individual  symmetries when separated. Finally, we discuss mutual implications between the condition of {local symmetry of the pure gauge field} and the condition of {global symmetry of the matter}  when the two can be coupled to each other to produce a MPS with  local symmetry. 

The paper is organized as follows. In \cref{sec:Formalism} we introduce the basic notation and define the settings which will be investigated in subsequent sections. \cref{sec:RsultsOverview} presents a summary of our results. In \cref{sec:GlobSymm} we review the known classification of MPV with a global symmetry.  In   \Cref{sec:detProofs} we derive the proofs of the   stated results. 

\section{Formalism} \label{sec:Formalism}
 In this section we    introduce the MPS formalism and the notation used in this paper.
 We   present the different settings of states and symmetries that will be the focus of investigation in subsequent sections. 
 We    motivate the choices of those settings, and relate them to physical theories.
 This section covers all the definitions and the essential background needed in order for our results to be stated in \cref{sec:RsultsOverview}.

\subsection{Matrix product vectors}
	 We   consider matrix product vectors (MPV) rather than states (MPS). The   distinction is emphasized because MPV can refer to unnormalized MPS as well as to matrix product operators, to which our results can also be applied. Moreover, in the following we shall define symmetries  in terms of equalities between vectors and not states, i.e.\ we shall not  allow a phase difference. For a comprehensive introduction to MPS we refer the reader to \cite{Perez-Garcia2007, Verstraete2008,Cirac2017}. In the following we shall review the basic definitions, and quote essential results.
	
	Let $\mathcal{H}$ be a $d$-dimensional Hilbert space. 
	A matrix product vector (MPV) is a vector $\ket{\psi_A^N} \in \HN$ given by
	\begin{equation} \label{eq:MPVdef}
		\ket{\psi_A^N} = \sum_{\{i\}}\tr \left( A^{i_1}A^{i_2}\ldots A^{i_N} \right) \ket{i_1 i_2 \ldots i_N} \ ,
	\end{equation}
	where $\{A^i | i= 1,\ldots,d\}$ are $D\times D $ matrices and $\{\ket{i} | i= 1,\ldots,d\}$ is an orthonormal basis in $\mathcal{H}$. The dimension of the matrices - $D$ - is called the bond dimension of $A$.  We say that the tensor $A$, which consists of the matrices $A^i$, generates the MPV  $\ket{\psi_A^N} $; in fact, it generates a family of vectors: $\left\{ \ket{\psi_A^N } | N\in \mathbb{N} \right\}$. We refer to the entire family of vectors as the MPV generated by $A$. 
	
	A MPV of this form is translationally invariant (TI). It is possible to describe vectors that are not TI in a similar way, with a different tensor associated with each tensor copy of $\mathcal{H}$. Throughout this paper we consider only TI-MPV.
	
	In order to avoid cumbersome notation involving many indices, we will use the graphical notation commonly used in tensor networks. Each tensor is denoted by a rectangle with lines connected to it. Each line corresponds to an index of the tensor.
	For example, the tensor $A$ generating the MPV above is represented as:
	\begin{equation*}
		\begin{tikzpicture}[baseline=-1mm]
		\pic[pic text = $A$] at (0,0) {mpstensor};
		 \end{tikzpicture} \ ,
	\end{equation*}
	where the top line corresponds to the physical index: $i= 1,\ldots,d$, and the horizontal lines - to the (``virtual'' or ``bond'') matrix indices: $\alpha=1,\ldots,D$. Contraction of tensor indices is indicated by connecting the respective lines. If $M$ is a square matrix, i.e.\ a rank 2 tensor, then $\tr(M)$ is denoted by:
	\begin{equation*}
		\begin{tikzpicture}[baseline=-1mm]
		\pic[pic text = $M$] at (0,0) {horizontalmatrix};
		\draw (-.5,0)--(-.5,-.5)--(0.5,-.5)--(.5,0);
		\end{tikzpicture} \ .
	\end{equation*}
	The coefficient corresponding to  the $\ket{i_1 i_2 \ldots i_N}$ basis element of the MPV $\ket{\psi^N_A}$ in \cref{eq:MPVdef} is denoted  by:
	\begin{equation*}
		\begin{tikzpicture}[baseline=-1mm]
		\foreach \i in {0,1,2,4} \pic[pic text = $A$] at (\i,0) {mpstensor};
		\node at (0,.75) {$i_{1}$};
		\node at (1,.75) {$i_{2}$};
		\node at (2,.75) {$i_{3}$};
		\node at (4,.75) {$i_{N}$}; 
		\node at (3,0) {$\ldots$}; 
		\draw (-.5,0)--(-.5,-.5)--(4.5,-.5)--(4.5,0);
		\end{tikzpicture} \ ,
	\end{equation*}
	where we specified the values of the  physical indices.  We identify the  MPV of length $N$ generated by $A$ with the set of its coefficients  and denote the MPV as:
 	\begin{equation*}
		\begin{tikzpicture}[baseline=-1mm]
		\foreach \i in {0,1,2,4} \pic[pic text = $A$] at (\i,0) {mpstensor};
		\node at (3,0) {$\ldots$}; 
		\draw (-.5,0)--(-.5,-.5)--(4.5,-.5)--(4.5,0);
		\end{tikzpicture} \ .
	\end{equation*}
	
	\begin{definition} \label{def:blocking}
		Let $A$ be a tensor composed of matrices $\{A^i\}$.
		Blocking of  $b$ copies of   $A$  defines a new tensor denoted by $A_{\times b}$, which is composed of the matrices given by the $b$-fold products of  $A^i$, and are numbered by an  index $I:=(i_1,i_2,\ldots,i_b)$:
		\begin{equation*}
			\left\{ 
			(A_{\times b})^{I}  = A^{i_1}A^{i_2}\ldots A^{i_b}
			\left| \right. \ i_1,i_2,\ldots,i_b = 1,\ldots,d_A 
			\right\} \ .
		\end{equation*}
		 The new index $I$  corresponds to the basis $\left\{\ket{I}:= \ket{i_1}\otimes \ket{i_2}\otimes ,\ldots, \otimes \ket{i_b} \right\}$ of $ \mathcal{H}^{\otimes b} $. Graphically:
		 \begin{equation*}
			  \begin{tikzpicture}[baseline=-1mm]
			  \foreach \i in {0} \pic[pic text = $A_{\times b}$] at (\i,0) {mpstensorMED};
			  \node at (0,.75) {$I$};
			  \end{tikzpicture} \ = \
			  \begin{tikzpicture}[baseline=-1mm]
			  \foreach \i in {0,1,3} \pic[pic text = $A$] at (\i,0) {mpstensorMED};
			  \node at (0,.75) {$i_{1}$};
			  \node at (1,.75) {$i_{2}$};
			  \node at (3,.75) {$i_{b}$}; 
			  \node at (2,0) {$\ldots$}; 
			  \end{tikzpicture} \ .
		\end{equation*}
		The MPV of length $N$ generated by $A_{\times b}$ is  $\ket{\psi^N_{A_{\times b}}} \in \left( \mathcal{H}^{\otimes b} \right)^{\otimes N}$.
	\end{definition}

 	\begin{definition}[Injective tensor] \label{def:Injectivity}
 		A tensor $A$ consisting of $D\times D $ matrices $\{A^i\}_{i=1}^d$  is injective if 
 		\begin{equation*} 
	 		span \left\{ A^{i} \left|\right. i =1,\ldots, d    \right\} = \mathcal{M}_{D\times D} \ ,
 		\end{equation*}
 		where $\mathcal{M}_{D\times D}$ is the algebra of $D\times D $ matrices.
	 	\end{definition}
	 	
	\begin{definition}
		Let $A$ be a tensor consisting of matrices  $\{A^i\}_{i=1}^d$. The completely positive (CP) map associated with $A$ is defined by:
		\begin{equation*}
		E_A(\cdot)=\sum_{i=1}^D A^i \cdot {A^i}^\dagger \ , 
		\end{equation*}
		i.e., the matrices $\{A^i\}$ are the  Kraus operators of $E_A$  \cite{Wolf2012a}.
	\end{definition}
	 	
 	\begin{definition}[Normal tensor] \label{def:NormalTensor}
 		a tensor $A$, consisting of $D\times D $ matrices $\{A^i\}_{i=1}^d$,  is normal if there exists $L\in \mathbb{N}$ such  that:
 		\begin{equation*}  
	 		span\left\{ A^{i_1}A^{i_2}\ldots A^{i_L}\left|\right.{i_1,i_2,\ldots,i_L} = 1,\ldots, d  \right\} = \mathcal{M}_{D\times D} \ ,
 		\end{equation*}
 		where $\mathcal{M}_{D\times D}$ is the algebra of $D\times D $ matrices. 
 		That is, $A$ is normal if it becomes injective   after blocking   a sufficient number of its copies.   
 		In addition we require that  the  spectral radius of the CP map $E_A$ is equal to $1$.
 	\end{definition}
	 	
 	\begin{remark} 
 		If a tensor becomes injective after  blocking  $L_0$ copies, it is also injective when  blocking any number $L \geq L_0$ of copies. There is an upper bound on the minimal number of copies of a normal tensor needed to be blocked   in order for the blocked tensor  to  be injective, which depends only on its bond dimension \cite{Sanz2010}.
 	\end{remark}
	 	
 	\begin{proposition} \label{prop:NormalIFFPrimitive}
 		A tensor is normal (\Cref{def:NormalTensor}) iff the CP map   associated with it is primitive (irreducible and non-periodic). \textup{\cite{Wolf2012a}} 
 	\end{proposition}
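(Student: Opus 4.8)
This is a standard fact about completely positive (CP) maps (as the citation to \cite{Wolf2012a} indicates), which I would assemble by proving the two directions of the equivalence separately. The translation between the two sides is the following. ``$A$ normal'' asks, first, that some blocked tensor $A_{\times L}$ be injective, i.e.\ $\mathrm{span}\{A^{i_1}\cdots A^{i_L}\}=\mathcal{M}_{D\times D}$ --- equivalently, that the $L$-th power $E_A^{L}$ have \emph{full Choi rank} $D^{2}$, since the Kraus operators of $E_A^{L}$ are exactly the length-$L$ products of the $A^{i}$ and the Choi rank equals the dimension of their span --- and, second, that the spectral radius of $E_A$ equal $1$. ``$E_A$ primitive'' asks that $E_A$ be irreducible (the $A^{i}$ have no common nontrivial invariant subspace) and aperiodic. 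The spectral-radius clause is a normalization --- unaffected by, and always enforceable through, a rescaling $A\mapsto cA$, which also leaves irreducibility and aperiodicity untouched --- so I would keep it in force throughout and focus on the equivalence of ``$E_A^{L}$ has full Choi rank for some $L$'' with primitivity of $E_A$.

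For ``full Choi rank for some $L$ $\Rightarrow$ primitive'', the key observation is that full Choi rank of $E_A^{L}$ makes it \emph{positivity improving}: $E_A^{L}(\rho)>0$ (strictly positive definite) for every $\rho\geq 0$ (positive semidefinite), $\rho\neq 0$. Indeed, writing $A^{I}:=A^{i_1}\cdots A^{i_L}$, for any unit vector $w$
\[
\langle w\vert E_A^{L}(\rho)\vert w\rangle=\sum_{I}\langle (A^{I})^{\dagger}w\vert\,\rho\,\vert (A^{I})^{\dagger}w\rangle\ \geq\ 0,
\]
and this vanishes only if $(A^{I})^{\dagger}w\in\ker\rho$ for every $I$; since the $(A^{I})^{\dagger}$ span $\mathcal{M}_{D\times D}$, the vectors $\{(A^{I})^{\dagger}w\}$ already span $\mathbb{C}^{D}$ whenever $w\neq 0$, forcing $\ker\rho=\mathbb{C}^{D}$. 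A positivity-improving CP map is irreducible (a common invariant subspace $V\subsetneq\mathbb{C}^{D}$ of the $A^{I}$ would force $E_A^{L}$ to map positive semidefinite operators supported in $V$ to operators supported in $V$, none of them positive definite, a contradiction) and aperiodic (all powers of a positivity-improving map are again positivity improving, whereas if $E_A$ had period $p>1$ then $E_A^{L}$ --- or $E_A^{kL}$ for a suitable $k$ --- would be reducible or periodic; but $\mathrm{span}\{A^{I}:|I|=kL\}=\mathcal{M}_{D\times D}$ as well for every $k$, so every such power is positivity improving, a contradiction). Hence $E_A$ is primitive.

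For the converse ``primitive $\Rightarrow$ full Choi rank for some $L$'', I would invoke the Perron--Frobenius theory for positive linear maps on the cone of positive semidefinite matrices (Evans--H{\o}egh-Krohn; see \cite{Wolf2012a}): if $E_A$ is primitive with spectral radius $\lambda$, then $\lambda$ is a simple eigenvalue, the only one of modulus $\lambda$, and the right eigenoperator $X$ of $E_A$ and the left eigenoperator $Y$ (normalized by $\tr(YX)=1$) are both strictly positive definite. Then $\lambda^{-n}E_A^{n}\to\Phi_{\infty}$ with $\Phi_{\infty}(\rho)=\tr(Y\rho)\,X$, so the Choi matrices converge, $\lambda^{-n}J(E_A^{n})\to J(\Phi_{\infty})=X\otimes Y^{T}$, which has full rank $D^{2}$ because $X$ and $Y$ are invertible. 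Since matrix rank is lower semicontinuous and $\operatorname{rank}J(E_A^{n})=\dim\mathrm{span}\{A^{I}:|I|=n\}\leq D^{2}$ for every $n$, it follows that $\mathrm{span}\{A^{I}:|I|=n\}=\mathcal{M}_{D\times D}$ for all sufficiently large $n$, i.e.\ $A_{\times n}$ becomes injective; together with the normalization $\lambda=1$ this is precisely the statement that $A$ is normal. (A concrete value of $n$, polynomial in $d$ and $D$, is supplied by the quantum Wielandt inequality \cite{Sanz2010} quoted in the remark above.)

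The step I expect to be the main obstacle is this last one: upgrading ``some power of $E_A$ eventually maps the cone into its interior'' to ``a single fixed blocking length makes $A_{\times L}$ injective''. The argument above resolves it cleanly via the convergence $\lambda^{-n}J(E_A^{n})\to X\otimes Y^{T}$ together with semicontinuity of rank, but it rests on the one genuinely nontrivial input: the Perron--Frobenius statement that the peripheral spectrum of a primitive CP map consists of a single simple eigenvalue with strictly positive left and right eigenoperators. The remaining ingredients --- Choi rank equals the dimension of the span of the Kraus operators; the Kraus operators of $E_A^{L}$ are the length-$L$ products of the $A^{i}$; ``irreducible $\Leftrightarrow$ no common invariant subspace''; ``positivity improving $\Rightarrow$ irreducible and aperiodic'' --- are standard and can be cited from \cite{Wolf2012a}.
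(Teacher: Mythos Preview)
The paper does not supply its own proof of this proposition; it is quoted as a known fact with a citation to \cite{Wolf2012a}, so there is nothing to compare your argument against. Your proof is correct and is essentially the standard one: the identification of injectivity of $A_{\times L}$ with full Choi rank of $E_A^{L}$, the equivalence of full Choi rank with the positivity-improving property, and the Perron--Frobenius convergence $\lambda^{-n}J(E_A^{n})\to X\otimes Y^{T}$ together with lower semicontinuity of rank are exactly the ingredients one finds in \cite{Wolf2012a}. One small point worth tightening is the aperiodicity step: your parenthetical ``$E_A^{L}$ --- or $E_A^{kL}$ for a suitable $k$ --- would be reducible or periodic'' is correct but terse; the clean statement is that if $E_A$ were irreducible with period $p>1$, then choosing $k$ with $p\mid kL$ makes $E_A^{kL}$ block-diagonal with respect to the cyclic decomposition and hence reducible, contradicting the fact that $\mathrm{span}\{A^{I}:|I|=kL\}=(\mathrm{span}\{A^{I}:|I|=L\})^{k}=\mathcal{M}_{D\times D}$ for every $k\geq 1$.
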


 	\begin{definition}[Canonical form] \label{def:CF}
 		A tensor $A$ is in CF  if the matrices $A^i$ are block diagonal and have the following structure: 
 		\begin{equation} \label{eq:NTexpansion}
	 		A^i=\oplus_{k=1}^n \nu_k A^i_k  \ ,
 		\end{equation}
 		where $\{A_k\}$ are normal tensors and $\nu_k$ are constants. 
 	\end{definition}
	 	
 	\begin{definition}[Canonical form II] \label{def:CFII}
		 $A$ is in CFII if in addition to being in CF, for any $k$ appearing in \cref{eq:NTexpansion} the CP map $E_{A_k}$ is trace preserving, and has a positive full rank diagonal fixed point $\Lambda_k>0$.
 	\end{definition}

 	\begin{proposition} \label{prop:wlogCF}  
 		Let $\ket{\psi^N_A}$ be the MPV generated by a tensor $A$. 
 		If the CP map  $E_A$ has no periodic irreducible blocks, then 
	 		there exists  a tensor $\tilde{A}$ in CF (or CFII) such that:
	 	\begin{equation*}
		 	\ket{\psi^N_A}=\ket{\psi^N_{\tilde{A}}}\ ,  \forall N\in \mathbb{N} \ .
	 	\end{equation*}
	 	 If $E_A$ does have periodic blocks, then there exist a tensor $\tilde{A}$ in CF (of CFII) and  $b\in \mathbb{N}$ such that:
	 	\begin{equation*}
	 		\ket{\psi^{N}_{A_{\times b} }}=\ket{\psi^{N}_{\tilde{A}}} \ , \forall N\in \mathbb{N} \ , 
	  	\end{equation*}	
	  	where $A^{\times b}$ is the tensor obtained by blocking  $b$ copies of $A$ (\cref{def:blocking}). \textup{\cite{Cirac2017}}
 	\end{proposition}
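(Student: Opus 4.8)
The plan is to reach the canonical form through a sequence of transformations that leave the generated MPV unchanged, built on two elementary observations. The first is \emph{gauge freedom}: for any invertible $X$ the tensor $A'$ defined by $(A')^i = X^{-1}A^iX$ satisfies $\ket{\psi^N_{A'}} = \ket{\psi^N_A}$ for all $N$, by cyclicity of the trace. The second is the \emph{vanishing of off-diagonal contributions}: if, after a gauge transformation, every $A^i$ is block upper triangular with respect to one fixed decomposition $\mathbb{C}^D = \bigoplus_k W_k$, then each product $A^{i_1}\cdots A^{i_N}$ is again block upper triangular with $k$-th diagonal block $A^{i_1}_k\cdots A^{i_N}_k$, whence $\tr(A^{i_1}\cdots A^{i_N}) = \sum_k \tr(A^{i_1}_k\cdots A^{i_N}_k)$ and therefore $\ket{\psi^N_A} = \ket{\psi^N_{\bigoplus_k A_k}}$ for all $N$. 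Thus, for the purpose of generating the MPV, a block-triangular tensor may always be replaced by the block-\emph{diagonal} tensor built from its diagonal parts.

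First I would bring $A$ to block upper triangular form with irreducible diagonal blocks. A subspace $V\subseteq\mathbb{C}^D$ invariant under every $A^i$ is precisely a common invariant subspace of the Kraus operators of $E_A$, and $E_A$ is irreducible exactly when it has no such $V$ other than $\{0\}$ and $\mathbb{C}^D$. Choosing a maximal chain $0 = V_0\subsetneq V_1\subsetneq\cdots\subsetneq V_n = \mathbb{C}^D$ of commonly invariant subspaces and passing to an adapted basis makes each $A^i$ block upper triangular with diagonal blocks $A^i_k$ acting on $V_k/V_{k-1}$, and maximality forces each $E_{A_k}$ to be irreducible. Discarding any block on which $E_{A_k}$ vanishes (it contributes nothing for $N\ge1$) and rescaling the rest by $A^i_k\mapsto A^i_k/\sqrt{\mu_k}$, where $\mu_k>0$ is the spectral radius of $E_{A_k}$ (a genuine eigenvalue, by Perron--Frobenius for irreducible CP maps), I obtain blocks whose CP maps have spectral radius $1$; the scalars pulled out become the constants $\nu_k$ of \cref{def:CF}, and by the second observation $\bigoplus_k\nu_k A_k$ still generates $\ket{\psi^N_A}$.

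It remains to make the irreducible blocks \emph{normal}. If $E_A$ has no periodic irreducible blocks, every $E_{A_k}$ obtained above is irreducible and aperiodic, hence primitive, so by \cref{prop:NormalIFFPrimitive} each $A_k$ is normal and $\tilde A := \bigoplus_k\nu_k A_k$ is in canonical form with $\ket{\psi^N_{\tilde A}} = \ket{\psi^N_A}$. If some $E_{A_k}$ are periodic, let $b$ be a common multiple of their periods; blocking $b$ copies of $\bigoplus_k\nu_k A_k$ yields $\bigoplus_k\nu_k^b(A_k)_{\times b}$, whose $k$-th block has transfer operator $E_{A_k}^{\circ b}$ (\cref{def:blocking}), and by the standard cyclic decomposition of a periodic irreducible CP map a block of period $p\mid b$ splits, after a further gauge transformation, into $p$ primitive pieces; rescaling these gives a block-diagonal tensor $\tilde A$ with normal blocks, and since $\ket{\psi^{Nb}_A} = \ket{\psi^{Nb}_{\bigoplus_k\nu_k A_k}}$ at all lengths divisible by $b$ we get $\ket{\psi^N_{A_{\times b}}} = \ket{\psi^N_{\tilde A}}$ for all $N$. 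Finally, to upgrade canonical form to CFII I would apply, block by block, the Perron--Frobenius theorem for primitive CP maps: $E_{A_k}$ and its adjoint then have unique full-rank positive fixed points $\rho_k,\sigma_k>0$; the block-diagonal gauge transformation acting as $\sigma_k^{1/2}$ on each block makes every $E_{A_k}$ trace preserving, and a subsequent block-diagonal unitary diagonalizes the resulting fixed point to the $\Lambda_k>0$ required by \cref{def:CFII}.

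The step I expect to be the main obstacle is the reducible-and-periodic case: one must check that the two invariances combine consistently, i.e.\ that blocking to remove periodicity leaves a tensor that genuinely decomposes into block-\emph{diagonal} normal pieces rather than reintroducing off-diagonal structure, and this rests on the (noncommutative) Perron--Frobenius structure theory of CP maps that may be simultaneously reducible and periodic, together with the structural bounds noted after \cref{def:NormalTensor} which keep the required blocking length finite. The remaining, purely algebraic steps — the gauge manipulations and the disappearance of the off-diagonal blocks under the trace — are routine once that structural picture is in place.
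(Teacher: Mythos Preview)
The paper does not prove this proposition; it is stated with a citation to \cite{Cirac2017} and used as background. Your sketch is essentially the standard argument found there (and in the earlier MPS literature): reduce to block-triangular form via common invariant subspaces, drop the strictly upper-triangular part since it is invisible to the trace, rescale each irreducible block to spectral radius one, block to kill periodicity, and finally gauge each primitive block to CFII via its Perron--Frobenius fixed points.

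One point worth tightening is your treatment of the periodic case. Blocking a single irreducible periodic block by its period does not directly yield a block-\emph{diagonal} sum of primitive pieces after a mere gauge transformation; rather, $E_{A_k}^{\circ b}$ acquires new nontrivial invariant subspaces (the cyclic sectors), so the blocked tensor is again \emph{reducible} and one must re-run your first step---the invariant-subspace triangularisation and subsequent truncation of off-diagonals---on the blocked tensor. In general this alternation of ``block to remove periodicity'' and ``triangularise to remove reducibility'' may need to be iterated, with termination guaranteed because the bond dimension of each surviving block strictly decreases at every nontrivial reduction. The paper in fact carries out exactly this iterative scheme in the proof of \cref{prop:RedToNormal} for the two-tensor case, noting explicitly that ``after blocking the CP maps again become reducible'' and that the process terminates by the decreasing-dimension argument. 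With that iteration made explicit, your outline is complete.
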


	\begin{definition}[Basis of normal tensors] \label{def:BNT}
		Let $A$ be a tensor in CF.	A set of tensors $\{\hat{A}_j\}$ is said to be a basis of normal tensors (BNT) of $A$   if $\hat{A}_j$ are normal tensors, and for every $A_k$ appearing in $A$'s expansion (\cref{eq:NTexpansion}) there exists a unique $\hat{A}_{j }$, an invertible matrix $V$ and a phase $e^{i\phi}$ such that $A_k = e^{i\phi} V^{-1}\hat{A}_{j } V $.
	\end{definition}

	From now on whenever we consider a tensor $A$ in CF  we shall write it in terms of a BNT $\{A_j\}_{j=1}^m$:
	\begin{equation} \label{eq:BNTexpansion}
		A^i=\oplus_{j=1}^m\oplus_{q=1}^{r_j} \mu_{j,q} V_{j,q}^{-1} A^i_j V_{j,q} \ .
	\end{equation}
	The MPV of length $N$ generated by such a tensor  $A$ takes the form:
	\begin{equation*}
		\ket{\psi^N_A} = \sum_{j=1}^{m}\sum_{q=1}^{r_j}\left(\mu_{j,q}\right)^N\ket{\psi^N_{A_j}} \ . 
	\end{equation*}

\subsection{Representation theory}

	In this section we introduce  projective  representations. We review basic  facts  from representation theory, stated in the more general setting of projective representation, following \cite{HallQMath,Chuangxun}. Next, we describe how the general setting of  a MPV with a symmetry with respect to a  finite dimensional representation $\CC(g)$, can be simplified by writing the MPV in a form compatible with the decomposition of $\CC(g)$ into irreducible representations. Finally, we quote two theorems: Schur's lemma and the Wigner-Eckart theorem,  that will allow us to classify the tensors  generating symmetric MPVs.

\subsubsection{Projective representations}

	Let $\mathcal{H}$ be a finite dimensional Hilbert space. Denote by $\mathsf{U}(\mathcal{H})$ the group of unitary operators on $\mathcal{H}$. Throughout the paper, unless explicitly stated otherwise, $G$ will always refer to a finite group or a compact Lie group.

	\begin{definition}
		A function $\gamma: G\times G \rightarrow U(1)$ satisfying: 
		\begin{equation*}
			\begin{split}
			\gamma(g,h)\gamma(gh,f)=& \gamma(g,hf)\gamma(h,f) , \;\; \forall g,h,f\in G\\
			\gamma(g,e)=& \gamma(e,g) = 1,\;\; \forall g\in G \ , 
			\end{split}
		\end{equation*}
		where $e\in G$ is the trivial element, is called a multiplier of $G$. For compact Lie groups we require $\gamma$ to be continuous.
	\end{definition}
	
	\begin{definition} \label{def:projectiveRepresentation}
		A projective unitary representation of a group $G$ on $\mathcal{H}$  is a map $\CC:G \rightarrow \mathsf{U}(\mathcal{H})$ such that for all $g,h\in G$ $\CC(g)\CC(h)=\gamma(g,h)\CC(gh)$, where $\gamma$ is   a multiplier of $G$.	
	\end{definition}
	That is, projective unitary representations are unitary representations up to a phase factor.
	 Throughout this paper all representations will be assumed to be unitary and finite dimensional. From this point on, \textit{unitary representation}  shall be used to emphasize that it is  not projective. \textit{Projective representations} can refer to both, as unitary representations are a particular case of projective representations, namely, they are the  ones with the trivial multiplier.

	Two  projective representations $(\CC,\mathcal{H})$ and $(\CC^\prime,\mathcal{H}^\prime)$ with multipliers $\gamma$ and $\gamma^\prime$ are \textit{equivalent in the sense of  projective representations}  if there exist an isomorphism $\phi: \mathcal{H} \rightarrow \mathcal{H}^\prime$ and a function $\mu:G\rightarrow U(1)$ such that  $\CC^\prime(g)\phi = \mu(g)\phi \CC(g)$ for all $g\in G$. Their multipliers then satisfy:
	\begin{equation} \label{eq:cohomology}
		\gamma^\prime(g,h) = \gamma(g,h) \mu(g)\mu(h)\mu(gh)^{-1} \ .
	\end{equation} 
	\Cref{eq:cohomology} defines an equivalence relation on the group of  multipliers of $G$. The quotient  of the subgroup of multipliers of the form $\gamma(g,h)=\mu(g)\mu(h)\mu(gh)^{-1}$ in the group of all multipliers is the second cohomology group $H^2(G,U(1))$ of $G$ over $U(1)$ \cite{Chuangxun}. When two projective  representations  $\CC$ and $\CC^\prime$ have multipliers related  by  \cref{eq:cohomology}, for some function  $\mu:G\rightarrow U(1)$ we say they are in the same cohomology class.
	 
	 \begin{definition} \label{def:equivalenRepresentations}
	 	Two projective  representations $(\CC,\mathcal{H})$ and $(\CC^\prime,\mathcal{H}^\prime)$ with the same multiplier $\gamma$  are \textit{equivalent} if there exists an isomorphism $\phi: \mathcal{H} \rightarrow \mathcal{H}^\prime$ such that  $\CC^\prime(g)\phi = \phi \CC(g)$ for all $g\in G$. We denote $\CC^\prime(g)\cong \CC(g)$. 
	 \end{definition}

\subsubsection{Complete  reducibility}

	Fix a choice of representatives from the equivalence classes (\cref{def:equivalenRepresentations}) of irreducible  projective  representations of $G$ with multiplier $\gamma$; denote them by $D_\gamma^j:G\rightarrow \mathsf{U}(\mathcal{H}_j)$. Fixing a basis $\{\ket{i}\}$ in $\mathcal{H}_j$ for every $j$ defines the irreducible projective representation matrices: $D_\gamma^j(g) = \sum_{m,n} D_\gamma^j(g)_{m,n} \ket{m}\bra{n}$. These generalize the $SU(2)$  Wigner matrices to projective representations of arbitrary groups.
	
	Let $\mathcal{H}$  be a finite dimensional Hilbert space, and let $\CC:g\mapsto\CC(g)$ be a projective   representation 
	of  $G$  with multiplier $\gamma$.
	For finite and compact groups any finite dimensional projective representation is fully reducible and is equivalent to a direct sum of irreducible projective representations 
	$ \oplus_j D_\gamma^j(g)$ with the same multiplier, i.e.,  
	there exists a basis $\left\{\ket{j,m} \right\} $ of $\mathcal{H}$ such that: 
	\begin{equation} \label{eq:IrrepBasis}
		\CC(g) \ket{j,m}=\sum_n {D_\gamma^j}(g)_{n,m}\ket{j,n} \ .
	\end{equation} 
	We refer to such a basis as the irreducible representation basis of $\CC(g)$ (in general it is not unique, e.g., when an irreducible representation appears  multiple times \cite{Klimyk}; we shall assume a choice of such a basis). 
		
	When considering a  representation acting on a MPV, it is convenient to write the MPV in the irreducible representation basis. In the following we describe how this is achieved, and show that it does not interfere with CF properties of the tensor generating the MPV.
	
	\begin{remark} \label{rem:changeOfBasis}
		A change of basis of the {physical space}
		from $\{\ket{i}\}$ to the irreducible representation basis $\{\ket{j,m}\}$  (\cref{eq:IrrepBasis}), involves a transformation of the tensor generating the MPV:
		$A \mapsto \tilde{A}$, where $\tilde{A}$ consists of the matrices 
		$\{\tilde{A}^{j,m}=\sum_i \innerCG{j,m}{i} A^i \}$. This is easily seen by inserting an identity operator $\sum_{j,m} \ket{j,m}\bra{j,m}$ for every copy of $\mathcal{H}$ in the definition of $\ket{\psi^N_A}$  (\cref{eq:MPVdef}).
	\end{remark}
	
	\begin{proposition} \label{prop:KrusMix}
		Let $\{A^i\}_{i=1}^d$ be the Kraus operators defining a CP map $E_A$. For any   unitary $d\times d$ matrix $U$ the matrices $\{\sum_j U_{i,j}A^j\}_{i=1}^d$ define the same CP map. \textup{\cite{Wolf2012a}}
	\end{proposition}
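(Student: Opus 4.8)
The plan is to prove the statement by a direct computation: substitute the mixed Kraus operators into the definition of the associated CP map, expand, and use unitarity of $U$ to collapse the resulting double sum back to the original one.

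First I would fix an arbitrary matrix $X$ and set $\tilde{A}^i = \sum_j U_{i,j} A^j$. Taking adjoints gives $(\tilde{A}^i)^\dagger = \sum_k \overline{U_{i,k}}\,(A^k)^\dagger$, so that
\begin{equation*}
E_{\tilde{A}}(X) = \sum_{i} \tilde{A}^i X (\tilde{A}^i)^\dagger = \sum_{i,j,k} U_{i,j}\,\overline{U_{i,k}}\; A^j X (A^k)^\dagger \ .
\end{equation*}
Next I would isolate the scalar coefficient of the term $A^j X (A^k)^\dagger$, namely $\sum_i U_{i,j}\,\overline{U_{i,k}} = \sum_i (U^\dagger)_{k,i}\,U_{i,j} = (U^\dagger U)_{k,j}$, which equals $\delta_{j,k}$ precisely because $U$ is unitary. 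Substituting this back leaves $E_{\tilde{A}}(X) = \sum_j A^j X (A^j)^\dagger = E_A(X)$, and since $X$ was arbitrary the two CP maps coincide.

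There is essentially no obstacle here; the only things to be careful about are the placement of the complex conjugate (it attaches to the matrix entry of $U$, not to the relabelled summation index) and keeping the row/column labels of $U$ straight, so that the contraction over $i$ genuinely produces $U^\dagger U$ rather than $U U^\dagger$. One may additionally remark — though it is not needed for the proposition as stated — that $\{\tilde{A}^i\}$ being again a finite family of Kraus operators makes complete positivity of $E_{\tilde A}$ automatic, and that this computation is exactly one half of the standard unitary-freedom characterization of Kraus decompositions, which is why the result is attributed to \cite{Wolf2012a}.
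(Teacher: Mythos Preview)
Your proof is correct. The paper does not supply its own proof of this proposition; it simply attributes the result to \cite{Wolf2012a}, so there is nothing to compare against beyond noting that your direct computation is the standard argument for this well-known fact.
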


	\begin{corollary} \label{prop:ChangeOfBasisDoesntRuinCF}
		Let A be a tensor in CF (CFII) composed  of the matrices $\{A^i\}$ corresponding to the basis $\{\ket{i}\}$ of $\mathcal{H}$. Then the tensor $\tilde{A}$, composed of the matrices 
		$\{\tilde{A}^{j,m}=\sum_i A^i \innerCG{j,m}{i}\}$ as in \cref{rem:changeOfBasis}, is also in CF (CFII).
	\end{corollary}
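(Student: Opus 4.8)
The key point is that passing from $\{\ket i\}$ to $\{\ket{j,m}\}$ is a \emph{unitary} change of the physical index only, so it mixes the matrices $A^i$ among themselves but cannot touch the virtual (bond) indices along which the block structure of the canonical form lives. Concretely, the plan is to let $U$ be the $d\times d$ matrix with entries $U_{(j,m),i}=\innerCG{j,m}{i}$; since $\{\ket i\}$ and $\{\ket{j,m}\}$ are both orthonormal bases of $\mathcal H$, $U$ is unitary, and by \cref{rem:changeOfBasis} we have $\tilde A^{j,m}=\sum_i U_{(j,m),i}A^i$. Thus $\{\tilde A^{j,m}\}$ is obtained from the Kraus operators $\{A^i\}$ of $E_A$ by applying the unitary $U$, and \cref{prop:KrusMix} gives $E_{\tilde A}=E_A$. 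The strategy is then to check the two ingredients of \cref{def:CF}: that the $\tilde A^{j,m}$ keep the same block-diagonal form, and that each block remains a normal tensor; then to add the extra \cref{def:CFII} conditions.

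For the block structure, I would argue as follows. Since $A$ is in CF, $A^i=\oplus_{k=1}^n \nu_k A^i_k$ with the $A_k$ normal. A linear combination of matrices that are all block diagonal with respect to the \emph{same} partition is again block diagonal with respect to that partition, so
\begin{equation*}
\tilde A^{j,m}=\sum_i U_{(j,m),i}\,\bigl(\oplus_{k=1}^n \nu_k A^i_k\bigr)=\oplus_{k=1}^n \nu_k \tilde A^{j,m}_k,\qquad \tilde A^{j,m}_k:=\sum_i U_{(j,m),i}A^i_k,
\end{equation*}
i.e.\ $\tilde A_k$ is exactly $A_k$ transformed by the same physical-basis change, with the \emph{same} constants $\nu_k$. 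It remains to see that each $\tilde A_k$ is normal. Its matrices $\{\tilde A^{j,m}_k\}$ are the Kraus operators $\{A^i_k\}$ of $E_{A_k}$ rotated by the unitary $U$, so \cref{prop:KrusMix} gives $E_{\tilde A_k}=E_{A_k}$; since $A_k$ is normal, $E_{A_k}$ is primitive by \cref{prop:NormalIFFPrimitive}, hence $E_{\tilde A_k}$ is primitive, hence $\tilde A_k$ is normal. (Equivalently, one can argue directly that for each $L$ the invertibility of $U^{\otimes L}$ forces $\mathrm{span}\{\tilde A^{(j_1,m_1)}_k\cdots \tilde A^{(j_L,m_L)}_k\}=\mathrm{span}\{A^{i_1}_k\cdots A^{i_L}_k\}$, and the spectral radius condition transfers because $E_{\tilde A_k}=E_{A_k}$.) This shows $\tilde A=\oplus_k \nu_k \tilde A_k$ is in CF. If in addition $A$ is in CFII, then for each $k$ the map $E_{A_k}$ is trace preserving and has a positive full-rank diagonal fixed point $\Lambda_k>0$; since $E_{\tilde A_k}=E_{A_k}$, the very same $\Lambda_k$ witnesses the CFII conditions for $\tilde A_k$, so $\tilde A$ is in CFII.

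I do not expect a substantial obstacle here; the statement is essentially bookkeeping. The one place that needs care — and the only genuine ``idea'' in the argument — is to keep the physical-basis rotation $U$ cleanly separated from the virtual gauge transformations $V_{j,q}$ of \cref{eq:BNTexpansion}: because $U$ acts solely on the physical leg, it descends to the \emph{same} unitary on each block and therefore both preserves the $\oplus_k$ decomposition verbatim and, via \cref{prop:KrusMix}, leaves every block's CP map (hence its primitivity, trace-preservation, and fixed point) unchanged. Once that is observed, all the CF/CFII data carry over without computation.
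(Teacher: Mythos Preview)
Your proof is correct and follows essentially the same approach as the paper: observe that the block-diagonal structure survives physical-index linear combinations, then invoke \cref{prop:KrusMix} to get $E_{\tilde A_k}=E_{A_k}$ blockwise and \cref{prop:NormalIFFPrimitive} to transfer normality (and the CFII data). Your write-up is more detailed than the paper's, but the logical skeleton is identical.
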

	\begin{proof}
		$ \tilde{A}$ has the same block structure as $A$ (\cref{eq:NTexpansion}):
		\begin{equation*}  
			\tilde{A}^{j,m}= \oplus_{k=1}^n \nu_k \tilde{A}^{j,m}_k =
			\oplus_{k=1}^n \nu_k \sum_i \innerCG{j,m}{i} {A}^i_k  \ .
		\end{equation*}
		According to	\cref{prop:NormalIFFPrimitive}, the normality and CFII properties of each block $\tilde{A}_k$ are defined by the CP map associated to it.   \Cref{prop:KrusMix} says this maps is not affected by the transformation $A_k \mapsto\tilde{A}_k$ because $\{\innerCG{j,m}{i}\}$ are the entries of a unitary matrix. Each block  $\tilde{A}_k$  is therefore a normal tensor (and in CFII).
	\end{proof}

\subsubsection{Intertwining relations}
	It was shown in \cite{Sanz2009,PollmannBergOshikawa} that   an injective tensor $A$ which generates a MPV with a global symmetry with respect to a representation $\CC_g$, satisfies:
	\begin{equation} \label{eq:AintertwinesExample}
		\begin{tikzpicture}[baseline=-1mm]
		\pic[pic text = $A$] at (0,0) {mpstensorBIG};
		\pic[pic text = $\CC(g)$] at (0,1) {verticalmatrixBIG};
		\end{tikzpicture} \ = \
		\begin{tikzpicture}[baseline=-1mm]
		\pic[pic text =  $\myinv{X (g)}$] at (0,0) {horizontalmatrixWIDE}; 
		\pic[pic text = $A$] at (1.15,0) {mpstensorBIG};
		\pic[pic text = $X(g)$] at (2.3,0) {horizontalmatrixWIDE};	   
		\end{tikzpicture} \ ,
	\end{equation} 
	i.e., for all $i= 1,\ldots,d$:	$ \sum_{i^\prime} \CC(g)_{ii^\prime} A^{i^\prime} = X(g)^{-1}A^iX(g)$,
	where $X(g)$ is a projective representation of $G$. While we will make the precise statement and derive this result later, we now point out that in  \cref{eq:AintertwinesExample}  the tensor  $A$ translates the action of $\CC(g)$ on the physical space into a group action on the virtual space. 
	
	In the following, we quote two theorems: Schur's lemma and the Wigner-Eckart theorem, which can be used to  classify tensors satisfying such intertwining relations.
	
	\begin{definition}[Intertwining map]
		Let $(\eta,V)$ and $(\pi,W)$ be projective representations  of a group $G$ with the same multiplier. A linear  map $T:V\rightarrow W$ is called an intertwining map if $\pi(g) T = T \eta(g) , \;\; \forall g\in G$. 
	\end{definition}

	\begin{lemma}[Schur's lemma] \label{thm:Schur}
		An intertwining map between irreducible projective  representations with the same multiplier is zero if they are inequivalent, and proportional to the identity if they are equal.\textup{ \cite{Chuangxun}}
	\end{lemma}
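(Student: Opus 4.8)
The plan is to mimic the classical proof of Schur's lemma verbatim, observing that the multiplier phases play no role: the only facts needed are that kernels and images of intertwining maps are invariant subspaces, and these notions are insensitive to rescaling the representation operators by scalars in $U(1)$.

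First I would treat the case of inequivalent $(\eta,V)$ and $(\pi,W)$. Given an intertwining map $T:V\to W$, I would check that $\ker T$ is an $\eta$-invariant subspace of $V$ and $\operatorname{im}T$ is a $\pi$-invariant subspace of $W$: for $v\in\ker T$ one has $T\eta(g)v=\pi(g)Tv=0$, and for any $v\in V$ one has $\pi(g)Tv=T\eta(g)v\in\operatorname{im}T$. By irreducibility, $\ker T\in\{0,V\}$ and $\operatorname{im}T\in\{0,W\}$. If $T\neq 0$, then $\ker T\neq V$ and $\operatorname{im}T\neq 0$, forcing $\ker T=0$ and $\operatorname{im}T=W$, so that $T$ is an isomorphism. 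Since $T$ intertwines two projective representations with the same multiplier, this is exactly the data of an equivalence in the sense of \cref{def:equivalenRepresentations}, contradicting the assumed inequivalence. Hence $T=0$.

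Next I would treat the case that the two representations are equal, say both equal to an irreducible $(\pi,V)$. Because $V$ is a finite-dimensional complex Hilbert space and $\mathbb{C}$ is algebraically closed, $T:V\to V$ has an eigenvalue $\lambda$. The operator $T-\lambda\,\id$ again intertwines $\pi$ with itself, since $\lambda\,\id$ commutes with every $\pi(g)$, and its kernel is the nonzero $\lambda$-eigenspace of $T$. By the invariant-subspace argument above, $\ker(T-\lambda\,\id)$ is $\pi$-invariant and nonzero, hence equals $V$ by irreducibility, so $T=\lambda\,\id$.

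I do not expect a genuine obstacle here; the only point requiring a moment's care is confirming that ``irreducible'' and ``equivalent'' for projective representations behave exactly as in the ordinary (non-projective) case — which they do, since the lattice of invariant subspaces is unaffected by the $U(1)$ factors, and an invertible intertwiner between representations sharing a common multiplier is precisely an equivalence as in \cref{def:equivalenRepresentations}.
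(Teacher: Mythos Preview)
Your argument is correct and is the standard proof of Schur's lemma, correctly noting that the $U(1)$ multiplier phases do not affect invariant subspaces or the notion of equivalence. The paper, however, does not give its own proof of this statement: it simply quotes the result and cites the reference \cite{Chuangxun}, so there is no in-paper argument to compare against.
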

	
	The tensor product of two irreducible projective representations with multipliers $\gamma$ and $\gamma^\prime$ is a projective representation with multiplier $\gamma\gamma^\prime$ ($\gamma\gamma^\prime :(g,h)\mapsto\gamma(g,h)\gamma^\prime(g,h)$), and is generally  a reducible one. The unitary map that realizes the   decomposition of $D_\gamma^j(g)\otimes D_{\gamma^\prime}^l(g)$ into a direct sum of irreducible representations $\oplus_{J\in \mathfrak{J}} D_{\gamma\gamma^\prime}^J$ is the Clebsch-Gordan map  whose matrix elements are the Clebsch-Gordan coefficients $\innerCG{j,m;l,n}{J,M}$, which are determined by the choice of the representation matrices $D^j_\gamma$  (for a discussion of their uniqueness having fixed the representation matrices see \cite{Klimyk}).
	
	The following is a generalization of the  $SO(3)$ vector operators, well known in quantum mechanics \cite{HallQMath}.
	
	\begin{definition}[Vector operator] \label{def:VecOp}
		Let $(\eta,V),(\pi,W)$ and $(\kappa,\mathcal{H})$ be projective  representations of $G$ with $dim(\mathcal{H})= d$.
		A vector operator with respect to $(\kappa,{\pi},\eta)$ is a $d$-tuple of linear operators  $\vec{A}=(A^1,A^2,\ldots,A^d)$, $A^i:V\mapsto W$ which, for all  $ g\in G$ and all $\vec{v}\in \mathcal{H}$, satisfies:
		\begin{equation} \label{eq:VecOpDef}
			\left(\kappa(g) \vec{v}\right)\cdot \vec{A} = 
			\pi(g) \left( \vec{v}\cdot\vec{A}\right) \eta(g)^{-1}  \,
		\end{equation}
		where $\vec{v}\cdot \vec{A}:=\sum_i v^iA^i$.
	\end{definition}

	It was shown in \cite{Sanz2009} that  \cref{eq:AintertwinesExample} can be used to determine the tensor $A$ satisfying it, and that it consists of Clebsch-Gordan coefficients. We will derive the same result using a generalized version of the well known Wigner-Eckart theorem, using the fact  that \cref{eq:AintertwinesExample} resembles a vector operator relation for $A$ (\cref{def:VecOp}).
			
	\begin{theorem}[Wigner-Eckart] \label{thm:GenWigEck}
		Let $ D_\gamma^{J_0}(g),\,{D_{\gamma^\prime}^j(g)}$ and ${D_{\gamma^{\prime\prime}}^l(g)}$ be irreducible projective  representations.
		Let $\vec{A}$ be a vector operator with respect to  $(\kappa: =D_\gamma^{J_0},\,\pi:=D_{\gamma^\prime}^j,\,\eta:=D_{\gamma^{\prime\prime}}^l)$. If $ \gamma \gamma^{\prime\prime}\neq\gamma^\prime $, then $A=0$. Otherwise (if $ \gamma \gamma^{\prime\prime}= \gamma^\prime $), then
		$\{A^M | {M=1,\ldots, dim(J_0)}\}$  are of the form:
		\begin{equation} \label{eq:WignerEckartVectorOP}
			A^M = 
			\sum_{J\in \mathfrak{J}:{D^J}={D^{J_0}}} \alpha_J \sum_{m,n}\innerCG{j,m;\overline{l},n}{J,M} \ket{m}\bra{n} \ ,
		\end{equation}
		where $\mathfrak{J}$ is the set of irreducible projective representation  indices appearing in the decomposition of ${D_{\gamma^{\prime}}^j(g)}\otimes \overline{{D_{\gamma^{\prime\prime}}^l(g)}}$, $\innerCG{j,m;\overline{l},n}{J,M}$ are the Clebsch-Gordan coefficients of this decomposition,  $\overline{D_{\gamma^{\prime\prime}}^l(g)} $ is the complex conjugate representation to ${D_{\gamma^{\prime\prime}}^l(g)}$, $\{\ket{m}\}$ and $\{\ket{n}\}$ are the irreducible representation  bases: $\pi(g)\ket{m}=\sum_{m^\prime}{D_{\gamma^{\prime}}^j(g)}_{m^\prime,m}\ket{m^\prime}$, $\eta(g)\ket{n}=\sum_{n^\prime}{D_{\gamma^{\prime\prime}}^l(g)}_{n^\prime,n}\ket{n^\prime}$ and $\alpha_J$ are arbitrary constants.
	\end{theorem}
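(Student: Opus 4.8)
\emph{Proof proposal.} The plan is to recast the vector-operator relation \cref{eq:VecOpDef} as a single intertwining relation and then invoke complete reducibility together with Schur's lemma (\cref{thm:Schur}). First I would bundle the $d$-tuple $\vec A$ into one linear map $\hat A:\mathcal H\to\mathrm{Hom}(V,W)$, $\hat A(\vec v)=\vec v\cdot\vec A=\sum_i v^i A^i$. Equip $\mathrm{Hom}(V,W)$ with the $G$-action $\rho(g)B:=\pi(g)\,B\,\eta(g)^{-1}$; a short computation using $\pi(g)\pi(h)=\gamma'(g,h)\pi(gh)$ and $\eta(g)\eta(h)=\gamma''(g,h)\eta(gh)$ shows that $\rho$ is a projective representation with multiplier $\gamma'\overline{\gamma''}$. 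Under the standard identification $\mathrm{Hom}(V,W)\cong W\otimes V^{\ast}$, with $V^{\ast}$ carrying the dual (= complex conjugate, since $\eta$ is unitary) representation $\overline{D^l_{\gamma''}}$ of multiplier $\overline{\gamma''}$, the action $\rho$ becomes $D^j_{\gamma'}\otimes\overline{D^l_{\gamma''}}$, which by the Clebsch--Gordan map decomposes as $\bigoplus_{J\in\mathfrak J}D^J_{\gamma'\overline{\gamma''}}$. With this rewriting, \cref{eq:VecOpDef} is precisely the statement that $\hat A$ is an intertwining map from $(\kappa=D^{J_0}_\gamma,\mathcal H)$ to $(\rho,\mathrm{Hom}(V,W))$.

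Next I would settle the multiplier condition. Applying the intertwining identity $\hat A\,\kappa(g)=\rho(g)\,\hat A$ to a product: on one hand $\hat A\,\kappa(g)\kappa(h)=\gamma(g,h)\,\hat A\,\kappa(gh)=\gamma(g,h)\,\rho(gh)\,\hat A$, and on the other hand $\hat A\,\kappa(g)\kappa(h)=\rho(g)\rho(h)\,\hat A=\gamma'(g,h)\overline{\gamma''(g,h)}\,\rho(gh)\,\hat A$. Since $\rho(gh)$ is invertible, if $\hat A\neq 0$ then $\gamma(g,h)=\gamma'(g,h)\overline{\gamma''(g,h)}$ for all $g,h$, i.e.\ $\gamma\gamma''=\gamma'$. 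This proves the first assertion: if $\gamma\gamma''\neq\gamma'$ then $\hat A=0$ and every $A^i=0$.

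Assume now $\gamma\gamma''=\gamma'$, so that all representations in sight carry compatible multipliers. Decompose $\mathrm{Hom}(V,W)=\bigoplus_{J\in\mathfrak J}\mathcal H_J$ into irreducible $\rho$-invariant subspaces via the Clebsch--Gordan map, and decompose $\hat A$ accordingly into its components onto each $\mathcal H_J$. Each such component $\mathcal H\to\mathcal H_J$ is an intertwiner between irreducible projective representations with the same multiplier, so by Schur's lemma it vanishes unless $D^J\cong D^{J_0}$ (written $D^J=D^{J_0}$ for the fixed choice of representatives), and when $D^J=D^{J_0}$ it equals $\alpha_J$ times the intertwiner sending the $M$-th irreducible basis vector of $\mathcal H=\mathcal H_{J_0}$ to the $M$-th (Clebsch--Gordan) basis vector of $\mathcal H_J\subset W\otimes V^{\ast}$, for some constant $\alpha_J$. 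Translating this back into matrix form — writing that basis vector of $\mathcal H_J$ in the basis $\{\ket m\bra n\}$ of $\mathrm{Hom}(V,W)$, using $V^{\ast}\cong\overline V$ so that $\bra n$ corresponds to the $\overline{D^l_{\gamma''}}$-basis vector labelled $n$ — reproduces exactly $\sum_{m,n}\innerCG{j,m;\overline{l},n}{J,M}\ket m\bra n$, and summing over $J$ with $D^J=D^{J_0}$ gives \cref{eq:WignerEckartVectorOP}.

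The representation-theoretic input — complete reducibility and Schur's lemma — is entirely standard, so the main obstacle is the bookkeeping of multipliers and dualizations: verifying that $\rho$ has multiplier $\gamma'\overline{\gamma''}$, that $V^{\ast}$ carries $\overline{D^l_{\gamma''}}$ with multiplier $\overline{\gamma''}$, and, crucially, that the Clebsch--Gordan coefficients with the conjugated second label $\overline{l}$ and the index order $(j,m;\overline{l},n)$ are precisely the matrix elements of the canonical embedding $\mathcal H_J\hookrightarrow W\otimes V^{\ast}$ with respect to the chosen representation and basis conventions. One must also keep track of the multiplicity of $J_0$ inside $\mathfrak J$, which is what produces the sum over $J$ with $D^J=D^{J_0}$ and the independent free constants $\alpha_J$; and note that the basis $\{\ket M\}$ of $\mathcal H$ implicit in the statement is the irreducible-representation basis of $\kappa=D^{J_0}_\gamma$.
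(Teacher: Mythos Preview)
The paper does not supply its own proof of this theorem; immediately after the statement it writes ``For a proof of the theorem in the familiar $SO(3)$ setting, we refer the reader to [HallQMath]; for a proof in the setting of projective representations see [Agrawala1980].'' So there is no in-paper argument to compare against.

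Your proposal is correct and is essentially the standard modern proof: repackage $\vec A$ as a linear map $\hat A:\mathcal H\to\mathrm{Hom}(V,W)$, observe that the vector-operator relation is exactly the intertwining condition $\hat A\circ\kappa(g)=\rho(g)\circ\hat A$ for $\rho(g)B=\pi(g)B\eta(g)^{-1}$, identify $(\rho,\mathrm{Hom}(V,W))\cong(D^j_{\gamma'}\otimes\overline{D^l_{\gamma''}},W\otimes\overline V)$, and then apply complete reducibility plus Schur's lemma componentwise. The multiplier bookkeeping you outline ($\rho$ has multiplier $\gamma'\overline{\gamma''}$, whence $\hat A\neq 0$ forces $\gamma=\gamma'\overline{\gamma''}$, i.e.\ $\gamma\gamma''=\gamma'$) is right, and your identification of the Clebsch--Gordan basis vectors of $\mathcal H_J\subset W\otimes\overline V$ with the operators $\sum_{m,n}\langle j,m;\overline l,n|J,M\rangle\,|m\rangle\langle n|$ is exactly what produces \cref{eq:WignerEckartVectorOP}. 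The only care points are the ones you already flag: the paper's convention (\cref{rem:uniqueCoice}) that equivalent irreducibles are literally equal, so Schur gives a scalar rather than merely an isomorphism, and the multiplicity of $J_0$ in $\mathfrak J$, which is precisely the source of the sum over $J$ with $D^J=D^{J_0}$ and the independent constants $\alpha_J$.
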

	
	For a proof of the theorem in the familiar $SO(3)$ setting, we refer the reader to \cite{HallQMath}; for a proof in the  the setting of projective representations  see \cite{Agrawala1980}.

	\begin{remark} \label{rem:WignerEckartRecipe}
		Apart from the freedom of  choosing the constants $\{\alpha_J\}$ in \cref{eq:WignerEckartVectorOP}, there is an additional freedom which comes from  the fact that the  the Clebsch-Gordan coefficients are not uniquely determined by the  irreducible representation matrices  \cite{Klimyk}.
	\end{remark}
	
	\begin{remark}
		The multiplier of the complex conjugate projective representation  $\overline{{D_\gamma^l(g)}}$  is $\gamma^{-1}$. We will always use \cref{thm:GenWigEck} with $\gamma\equiv 1$, then $A=0$ unless  $\gamma^\prime = \gamma^{\prime\prime}$.
	\end{remark}

	\begin{remark} \label{rem:uniqueCoice}
		We assume a choice of a unique representative in each equivalence class of irreducible projective representations of $G$, so any two are either inequivalent or are represented by the same matrices.
	\end{remark}
	\begin{remark}
		$A$ is zero if ${D_\gamma^{J_0}(g)}$ does not appear in the decomposition of  ${D_{\gamma^\prime}^j(g)}\otimes \overline{{D_{\gamma^{\prime\prime}}^l(g)}}$.
		 There is a $J$ summation in \cref{eq:WignerEckartVectorOP} because in general the same irreducible   representation could appear multiple times  in the decomposition of the tensor product of two irreducible   representations.
	\end{remark}

\subsection{Physical states and their symmetries} 
	Gauge theories involve the dynamics of two kinds of degrees of freedom: \textit{matter } and \textit{gauge field}. Given those two ingredients, one can consider three  types of states:
	states of only matter degrees of freedom,   states of only gauge field degrees of freedom and  states of both matter and gauge field. 
	These correspond to non-interacting theories,  pure gauge theories and interacting gauge theories respectively (where interactions are understood as those between matter and gauge degrees of freedom).
	
	When constructing a gauge theory one usually starts from an interaction-free theory of the matter  degree of freedom which is invariant with respect to a  group  of global transformations, i.e., the same group element acting in each point in space (or space-time). Adding an additional degree of freedom - the gauge field -  with its own transformation law with respect to  the group, allows to define local symmetry  operators which act  on both the  matter and the gauge field degrees of freedom. These operators commute with the transformed (gauged) Hamiltonian, and the subspace  of states which is invariant under all such operators is considered as the space of physical states.
	The generators of such local symmetry operators are the so-called Gauss law operators. They correspond to locally conserved quantities (charges), i.e., associated to each point in space (or space-time). 
	
	Conversely, one could start from a pure gauge field theory with a local symmetry and  couple a matter degree of freedom to it, once again  resulting in a system with local symmetry. Finally one could have matter and gauge field coupled in such a way that the combined state has a local symmetry but neither the mass state  nor the gauge field state have a symmetry on their own.

	We shall now describe the three types  of MPVs considered in this paper, corresponding to the above mentioned types of states,  and for each one of them define the symmetries which will be investigated in subsequent sections.

\subsubsection{Matter MPV} \label{sub:MatterMPV}
	Let $\mathcal{H}_A$  be a $d_A$ dimensional Hilbert space corresponding to a single  {degree of freedom} (``spin'').
	Consider  $N$ such   ``spins''  positioned on a one dimensional lattice, with periodic boundary conditions.
	A tensor $A$ consisting of square matrices $\{A^i \}_{i=1}^{d_A}$ generates a TI-MPV that describes a state  of the chain of matter ``spins''. 
	Let $\CC$ be a  unitary representation of $G$ on $\mathcal{H}_A$, $\CC:g\mapsto\CC(g)$. 
	  
	It is well known that in order to lift a global symmetry   to be a local one, an additional degree of freedom must    be introduced \cite{Peskin:257493}. When investigating the possibility of a local symmetry for a matter MPV, we  find this statement reaffirmed  (see \cref{thm:TrivTensorTrans}). We define the setting of the theorem in the following:
	\begin{definitionRE}[Local Symmetry for matter MPV] \label{def:OneDofSymm}
	 	A MPV $\ket{\psi^N_A} $ 
		 has a local symmetry with respect to $\CC(g)$ if for all $N\in \mathbb{N}$:
		\begin{equation*}
			\CC_{g_1}\otimes \CC_{g_2}\otimes \ldots \otimes\CC_{g_N}\ket{\psi_A^N}=\ket{\psi_A^N} , \;\; \forall g_1,g_2,\ldots, g_N\in G \ .
		\end{equation*}
	\end{definitionRE}
	 
	 Global symmetry in MPS have been studied extensively \cite{Sanz2009,Perez-Garcia2008PhRvL}. In order for this paper to be self contained, we   quote and then derive the main  result, which classifies the tenors $A$ that generate MPV with the following symmetry:
	\begin{definitionRE}[Global Symmetry for matter MPV] \label{def:OneDofGlobalSymm}
		A MPV $\ket{\psi^N_A}  $ 
		has a global symmetry with respect to $\CC(g)$ if for all $N\in \mathbb{N}$:
		\begin{equation*}
			\CC_{g}\otimes \CC_{g}\otimes \ldots \otimes\CC_{g}\ket{\psi_A^N}=\ket{\psi_A^N} , \;\; \forall g\in G \ .
		\end{equation*}
	\end{definitionRE}
	
	\begin{remark}
	  	The condition of a local symmetry (\cref{def:OneDofSymm}) is equivalent to invariance  under  any single-site group action (all $g_i=e$ except one). For TI-MPV it is therefore sufficient to consider only $g_1\neq e$.
	\end{remark}

	\subsubsection{Gauge field  MPV} \label{sub:GFMPV}
	Next we shall consider a case in which the local transformations act on two neighboring sites of a TI-MPV, which will be eventually seen as the pure gauge case.
	
	Let $\mathcal{H}_B$  be a $d_B$ dimensional Hilbert space corresponding to a single ``spin''.
	Consider  $N$ such spins positioned  on a one dimensional lattice, with periodic boundary conditions.
	A tensor $B$ consisting of square matrices $\{B^i \}_{i=1}^{d_B}$ generates a TI-MPV that describes a state  of the chain of gauge field ``spins''.
	
	\begin{definitionRE}[Local Symmetry for gauge field MPV] \label{def:BBSymm}
		Let $\RR,\LL$  be two projective   representations of $G$ on $\mathcal{H}_B$, $\RR:g\mapsto\RR(g)$, $\LL:g\mapsto\LL(g)$ with multipliers $\gamma$ and $\gamma^{-1}$, so that the tensor product $\RR(g)\otimes\LL(g)$ is a unitary representation.
		A MPV $\ket{\psi^N_{B}}  $ 
		has a local symmetry with respect to $\RR(g)\otimes\LL(g)$ if for all $N\in \mathbb{N}$ and for any two neighboring lattice sites $K$ and $K+1$:
		\begin{equation*}
			\RR^{[K]}_{g }\otimes \LL^{[K+1]}_{g } \ket{\psi^N_{B}}=\ket{\psi^N_B} , \;\; \forall g \in G \ .
		\end{equation*}
	\end{definitionRE}

\subsubsection{Matter and gauge field  MPV} \label{sub:MandGFSetting}
	 Let $\mathcal{H}_A$ and  $\mathcal{H}_B$ be as in \cref{sub:MatterMPV} and \cref{sub:GFMPV} respectively. 
	 Consider a lattice of length  $2N$ with  matter and gauge field spins alternating  among sites. 
	Tensors $A$ and $B$, consisting of   $D_1 \times D_2$ matrices $\{A^i\}_{i=1}^{d_A}$
	and $D_2 \times D_1$ matrices   $\{B^j\}_{j=1}^{d_B}$ respectively, generate a TI-MPV (in the sense of translating two sites)  that  describes a state of the chain of matter and gauge field ``spins''. The MPV, generated by a tensor we denote $AB$, takes the form:
	\begin{equation*} 
		\ket{\psi^N_{AB}} =\sum_{\{i\},\{j\}} \tr \left( A^{i_1}B^{j_1}A^{i_2}B^{j_2} \ldots A^{i_N}B^{j_N} \right) \ket{i_1 j_1 i_2 j_2 \ldots i_N j_N} \ .
	\end{equation*}
	 
 	In lattice gauge theories,   the matter degrees of freedom are located on the sites of a lattice  whereas the gauge field degrees of freedom - on the links connecting adjacent sites \cite{Wilson:1974sk}. In the one dimensional case, our setting differs from this structure only in notation, e.g., we could have chosen to call the even numbered sites ``links''.

	 Let $\CC(g)$ and $\RR(g), \,\LL(g)$ be as in \cref{sub:MatterMPV} and \cref{sub:GFMPV} respectively.
	 \begin{definitionRE}[Local Symmetry for both matter and gauge field MPV] \label{def:BABSymm}
	 	A MPV $\ket{\psi^N_{AB}}  $ 
	 	has a local symmetry with respect to  $\RR(g)\otimes\CC(g)\otimes\LL(g)$ if for all $N\in \mathbb{N}$ and for any three neighboring lattice sites numbered $2K,2K+1$ and $2K+2$ (corresponding to $\mathcal{H}_B\otimes \mathcal{H}_A \otimes \mathcal{H}_B$):
		\begin{equation*} 
			\RR(g)^{[2K]}\otimes\CC(g)^{[2K+1]} \otimes\LL(g)^{[2K+2]} \ket{\psi^N_{AB}}= \ket{\psi^N_{AB}}, \;\; \forall g\in G \ .  
		\end{equation*}
	 \end{definitionRE}

\subsection{Generators and Gauss' law} \label{sub:GenAndGaussLaw}
	In the previous section we defined the symmetries   in terms of representations of a group $G$. For matrix Lie groups it is often the case that one could describe  the same symmetry in terms of representations of the Lie algebra  $\mathfrak{g}$  of $G$. While the two  descriptions are mathematically equivalent, it is precisely the elements of the Lie algebra representation that correspond to  observables in physical theories. Such observables are conserved by the dynamics in a theory which respects the symmetry, and are therefore of great importance.
	
	To each scenario described above (\cref{sub:MatterMPV},
	\cref{sub:GFMPV} and \cref{sub:MandGFSetting})  correspond different such observables, and physical theories corresponding to the different settings  - matter, gauge field or matter and gauge field - observe different conservation laws. In the following we   describe the relation of those settings to   physical lattice gauge theories \cite{Zohar:2014qma}.
	
	When $G$ is a compact and connected Lie group, e.g.\ $U(1)$ or $SU(N)$, the exponential map $\exp:\mathfrak{g}\rightarrow G$   is surjective. Thus every group element can be written as  an exponential of an element in the Lie algebra $\mathfrak{g}$ \cite{HallLieGroup}. 
	Let $\RR(g),\, \LL(g)$ and $\CC(g)$ be   representations  on $\mathcal{H}_B$ and $\mathcal{H}_A$ respectively 
	(for $SU(N)$ we can always choose $\RR(g)$ and $\LL(g)$ to be unitary   representations keeping $\RR(g)\otimes\LL(g)$ unchanged \cite{HallQMath}), 
	and let $\ket{\psi^N_{AB}}$ be  as defined in \cref{sub:MandGFSetting}. We can  express the physical representations as exponentials of generators:
	\begin{equation*}
		\begin{split}
		\CC(g) = & \exp\left( i \sum_a  Q_a{\varphi_a(g)}\right) \\
		\RR(g) =& \exp\left( i \sum_a  R_a{\varphi_a(g)}\right) \\
		\LL(g) =& \exp\left( i \sum_a  L_a{\varphi_a(g)}\right) \ ,
		\end{split}
	\end{equation*}
	where $\{\varphi_a(g)\}_{a=1}^{dim(\mathfrak{g})}$ are real parameters and  $\{R_a\}_{a=1}^{dim(\mathfrak{g})} , \, \{L_a\}_{a=1}^{dim(\mathfrak{g})}$ and $\{Q_a\}_{a=1}^{dim(\mathfrak{g})}$ are  Hermitian operators on $\mathcal{H}_B$ and $\mathcal{H}_A$ respectively such that  $\{iR_a\}, \{iL_a\}$ and $\{iQ_a\}$  are bases of the respective Lie algebras. 
	In the Hamiltonian formulation of lattice gauge theories    \cite{Kogut1975,Zohar:2014qma} 
	$\{R_a\}$ and $\{L_a\}$ satisfy the Lie algebra relations:
	\begin{equation*}
		\begin{split}
		\left[ R_a , R_b \right] =&  i f_{abc}R_c \\
		\left[ L_a , L_b \right] =&  i f_{abc}R_c  \\
		\left[ R_a ,L_b \right] =&  0 \ , 
		\end{split}
	\end{equation*}
	where $f_{abc}$ are the  structure constants of the Lie algebra $\mathfrak{g}$. $\{Q_a\}$ satisfy the relations: 
	\begin{equation*}
		\left[ Q_a , Q_b \right] =  i f_{abc}Q_c \ .
	\end{equation*}
	The local symmetry transformations appearing in the matter and gauge field MPV scenario (\cref{def:BABSymm}):
	\begin{equation} \label{eq:invarianceBABforGauss}
		\RR^{[2K]}(g)\otimes\CC^{[2K+1]}(g)\otimes \LL^{[2K+2]}(g)\ket{\psi^N_{AB}} = \ket{\psi^N_{AB}} \ ,
	\end{equation}
	are generated by the operators:
	\begin{equation*}
		G_a^{[2K+1]}:=	\left(R_a^{[2K]} + Q_a^{[2K+1]} +L_a^{[2K+2]} \right)  \ .
	\end{equation*}	
	Differentiating \cref{eq:invarianceBABforGauss} with respect to any of the parameters $\varphi_a$ we obtain:
	\begin{equation} \label{eq:GaussLaw}  
		\left(R_a^{[2K]} + Q_a^{[2K+1]} +L_a^{[2K+2]} \right) \ket{\psi^N_{AB}} = G_a^{[2K+1]} \ket{\psi^N_{AB}}
		=0 \ .
	\end{equation} 
	This is the lattice version of Gauss' law. 
	In  physical theories,   states $\ket{\psi_A}$ have a global symmetry generated by $\{Q_a\}$ - the $SU(N)$ charge operators. In the $U(1)$ case there is one generator  $Q$ - the electric charge operator; furthermore,  
	for Abelian groups $L = -R$. In that case \cref{eq:GaussLaw}  says that  at each lattice site corresponding to matter, the charge is equal  to the difference between the values of $L$ on the right and on the left of it (the 1D lattice divergence of $L$). This becomes Gauss' law when taking a  continuum limit. $L$ is therefore identified as the electric field. Analogously, in the $SU(N)$ case $\{R_a\}$ and $\{L_a\}$ are identified with right and left electric fields respectively
	\cite{Zohar:2014qma}.

	The same kind of equation can be obtained for the case of a gauge field MPV with a local symmetry (\cref{def:BBSymm}):
	\begin{equation*}   
		\left(R_a^{[K]} +L_a^{[K+1]} \right) \ket{\psi^N_{ B}} =0 \ .
	\end{equation*}
	In the case of a global symmetry  for  a matter MPV, differentiating the symmetry relation (\cref{def:OneDofGlobalSymm}), we obtain  a global operator - the total charge:
	\begin{equation*}   
		\sum_K Q_a^{[K]}  \ket{\psi^N_{A}} =0 \ .
	\end{equation*}

\section{Results} \label{sec:RsultsOverview}
	We  summarize the results presented in this paper, first stating the main results of each of the cases presented above, and then turning to a more detailed and formal description. The detailed proofs will be given in the subsequent sections. 
	For each one of the settings introduced  in the previous section,  we shall first
	show that the symmetry condition implies a transformation relation satisfied by the tensor(s) generating the MPV. Second,  we shall show that those  transformation relations determine the structure of the tensor(s). For each setting we shall then discuss implications of the derived tensor structures.

\subsection{Matter MPV with local symmetry} \label{sub:Summary_1degree of freedom} 
	We show that    a MPV with one degree of freedom - the mass ``spins'' - can have a  local symmetry as in \cref{def:OneDofSymm}, only if it is the trivial one. This is consistent with the way gauge invariant states are usually constructed in lattice gauge theories, as well as with the construction of continuum gauge theories, where an additional degree of freedom is introduced. 
	The first observation is a general one, not restricted to  MPVs:
	
	\begin{restatable}{propositionRE}{propSinglets}  
		\label{prop:Singlets}%
		Let $\mathcal{H}$ be a finite dimensional Hilbert space and let $\CC:g\mapsto\CC(g)$ be a representation on $\mathcal{H}$. Let $\ket{\psi^N} \in \HN$ be a vector with  a local symmetry, i.e.\ 
		\begin{equation*} 
			\CC({g_1})\otimes \CC({g_2})\otimes \ldots \CC({g_N})\ket{\psi^N}=\ket{\psi^N} , \;\;\; \forall g_1,g_2,\ldots, g_N\in G \ .
		\end{equation*}
		Then  $\ket{\psi^N} \in {\mathcal{H}_0}^{\otimes N}$, where $\mathcal{H}_0\subset \mathcal{H}$ is the subspace on which $\CC(g)$ acts trivially. 
	\end{restatable}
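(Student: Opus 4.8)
The plan is to reduce the full local-symmetry hypothesis to $N$ independent single-site conditions and then intersect the resulting subspaces. First I would specialize the hypothesis by taking $g_2 = g_3 = \cdots = g_N = e$ and using $\CC(e) = \id$, which gives $\left(\CC(g_1)\otimes \id^{\otimes(N-1)}\right)\ket{\psi^N} = \ket{\psi^N}$ for every $g_1 \in G$; by the same reasoning (cf.\ the remark following \cref{def:OneDofSymm} that invariance under any single-site action is equivalent to the local symmetry), the analogous relation holds with the nontrivial factor $\CC(g_k)$ placed in any slot $k \in \{1,\dots,N\}$.

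Next I would average over the group. Let $P_0 := \int_G \CC(g)\,dg$ (normalized Haar measure; in the finite case $P_0 = \tfrac{1}{|G|}\sum_{g\in G}\CC(g)$), which is the orthogonal projector of $\mathcal{H}$ onto $\mathcal{H}_0 = \{v\in\mathcal{H} : \CC(g)v = v \ \forall g\in G\}$. Integrating the slot-$k$ relation over $g_k$ yields $\left(\id^{\otimes(k-1)}\otimes P_0\otimes\id^{\otimes(N-k)}\right)\ket{\psi^N} = \ket{\psi^N}$, i.e.\ $\ket{\psi^N}\in\mathcal{H}^{\otimes(k-1)}\otimes\mathcal{H}_0\otimes\mathcal{H}^{\otimes(N-k)}$ for each $k=1,\dots,N$.

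Finally I would intersect these $N$ subspaces. The projectors $P_0^{(k)} := \id^{\otimes(k-1)}\otimes P_0\otimes\id^{\otimes(N-k)}$ act on distinct tensor legs, hence mutually commute, and their product equals $P_0^{\otimes N}$, the orthogonal projector onto $\mathcal{H}_0^{\otimes N}$. Since each $P_0^{(k)}$ fixes $\ket{\psi^N}$, applying them successively gives $P_0^{\otimes N}\ket{\psi^N} = \ket{\psi^N}$, so $\ket{\psi^N}\in\mathcal{H}_0^{\otimes N}$, as claimed.

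I do not anticipate a genuine obstacle: the only points needing care are the existence of the Haar projector $P_0$ for a compact Lie group $G$ (standard) and the identity $\prod_{k=1}^N P_0^{(k)} = P_0^{\otimes N}$, which holds because the factors act on disjoint tensor factors. Alternatively, the last step can be made purely combinatorial by expanding $\ket{\psi^N}$ in a product basis of $\mathcal{H}^{\otimes N}$ adapted to the orthogonal splitting $\mathcal{H} = \mathcal{H}_0\oplus\mathcal{H}_0^{\perp}$ and observing that the slot-$k$ condition annihilates every basis vector having an $\mathcal{H}_0^{\perp}$ component in position $k$, so that only basis vectors lying entirely in $\mathcal{H}_0^{\otimes N}$ survive.
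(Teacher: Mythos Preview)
Your argument is correct. It differs from the paper's route mainly in the tool used to isolate the invariant subspace: the paper expands $\ket{\psi^N}$ in the irreducible-representation basis $\{\ket{j,m}\}$ and observes that the single-site condition forces the coefficient vector $\vec{c}_{j_1,(\cdot),\ldots}$ to be an invariant vector of the irrep $D^{j_1}$, hence zero unless $j_1$ labels the trivial representation; repeating for each slot gives the result. You instead average over the group to obtain the projector $P_0$ onto $\mathcal{H}_0$ and then intersect the slot-wise conditions via $\prod_k P_0^{(k)} = P_0^{\otimes N}$. Your approach is slightly more streamlined in that it never needs the full irrep decomposition, only the existence of the Haar projector; the paper's approach is more explicit about which coefficients vanish. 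Your closing ``combinatorial'' alternative, expanding in a basis adapted to $\mathcal{H}_0 \oplus \mathcal{H}_0^{\perp}$, is essentially a coarsened version of the paper's argument.
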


 In the following we show that for MPVs a similar statement to \cref{prop:Singlets} can be made for the tensor generating the MPV. Let $\ket{\psi_A}$ and $\CC(g)$ be as in \cref{sub:MatterMPV}. According to \cref{prop:wlogCF}, given an arbitrary tensor $A$ generating  $\ket{\psi_A}$, one can obtain a tensor in CF  which generates the same state, (possibly after blocking $A$). We therefore assume $A$ to be in CF.

	\begin{restatable}{theoremRE}{TrivTensorTrans}
	\label{thm:TrivTensorTrans}%
		Let $A$ be a tensor  in CF generating  a MPV   with a local symmetry with respect to  a   representation $\CC(g)$ (\cref{def:OneDofSymm}). Then for all $g\in G$ the tensor  $A$ satisfies: 
		\begin{equation*} \label{eq:thetaA_eq_A}
			\begin{tikzpicture}[baseline=-1mm]
			\pic[pic text = $A$] at (0,0) {mpstensorMED};
			\pic[pic text = $\CC(g)$]  at (0,1) {verticalmatrixMED};
			\end{tikzpicture} \ = \ 
			\begin{tikzpicture}[baseline=-1mm]
			\pic[pic text = $A$] at (0,0) {mpstensorMED};
			\end{tikzpicture} \ ,
		\end{equation*} 
		i.e., for all $i= 1,\ldots,d_A$:	$ \sum_{i^\prime} \CC(g)_{ii^\prime} A^{i^\prime} = A^i$.
	\end{restatable}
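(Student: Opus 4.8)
The plan is to convert the symmetry of the \emph{state} into an identity for the \emph{tensor} by contracting against a basis, and then to ``invert'' that identity using the canonical form of $A$.

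First, by the remark following \cref{def:OneDofGlobalSymm}, the local symmetry is equivalent to invariance under a single one-site action; by translation invariance it suffices to act on the first site, so that for every $g\in G$ and every $N$,
\begin{equation*}
	\big(\CC(g)\otimes\id^{\otimes(N-1)}\big)\ket{\psi^N_A}=\ket{\psi^N_A}\ .
\end{equation*}
Contracting with $\bra{i_1 i_2\ldots i_N}$ and using $\CC(g)\ket{i'}=\sum_i\CC(g)_{ii'}\ket{i}$, this becomes, with $\tilde A^i:=\sum_{i'}\CC(g)_{ii'}A^{i'}$,
\begin{equation*}
	\tr\!\left(\big(\tilde A^{i_1}-A^{i_1}\big)A^{i_2}\cdots A^{i_N}\right)=0\qquad\text{for all }N\text{ and all }i_1,\ldots,i_N\ .
\end{equation*}
(Alternatively one could start from \cref{prop:Singlets}, which already gives $\ket{\psi^N_A}\in\mathcal{H}_0^{\otimes N}$, and reach the same equation.) The goal is to deduce $\tilde A^i=A^i$ for every $i$, which is exactly the asserted relation $\sum_{i'}\CC(g)_{ii'}A^{i'}=A^i$.

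I would first treat the case that $A$ is \emph{normal}. Then $A$ becomes injective after blocking some number $L$ of copies, so for $N=L+1$ the products $\{A^{i_2}\cdots A^{i_N}\}$ span all of $\mathcal{M}_{D\times D}$; fixing $i_1$, the identity above says $\tr\big((\tilde A^{i_1}-A^{i_1})M\big)=0$ for every $M\in\mathcal{M}_{D\times D}$, whence $\tilde A^{i_1}=A^{i_1}$. For a general tensor in CF I would write it in a BNT, $A^i=\bigoplus_{j,q}\mu_{j,q}V_{j,q}^{-1}A^i_j V_{j,q}$ (\cref{eq:BNTexpansion}). Since $\CC(g)$ is unitary, the passage $A^i\mapsto\tilde A^i$ is a unitary mixing of the physical index, so by \cref{prop:ChangeOfBasisDoesntRuinCF} the $\tilde A^i$ keep the same block and gauge structure, $\tilde A^i=\bigoplus_{j,q}\mu_{j,q}V_{j,q}^{-1}\tilde A^i_j V_{j,q}$ with each $\tilde A_j$ normal. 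Substituting and cancelling the $V_{j,q}$ under the trace, the identity becomes
\begin{equation*}
	\sum_j\Big(\sum_q\mu_{j,q}^{\,N}\Big)\,\tr\!\left(\big(\tilde A^{i_1}_j-A^{i_1}_j\big)A^{i_2}_j\cdots A^{i_N}_j\right)=0\ .
\end{equation*}
Because the $A_j$ are pairwise inequivalent normal tensors, the coefficient families they generate (including the prefactors $\mu_{j,q}^N$) are linearly independent for large $N$ --- the fundamental theorem of MPS \cite{Cirac2017,Sanz2010}. Applying it (for instance after absorbing the ``impurity'' $\tilde A^{i_1}_j-A^{i_1}_j$ into an enlarged normal tensor, which stays inequivalent to the other blocks) forces $\tr\big((\tilde A^{i_1}_j-A^{i_1}_j)A^{i_2}_j\cdots A^{i_N}_j\big)=0$ for every $j$ and all indices; the normal-case argument then gives $\tilde A^i_j=A^i_j$, hence $\tilde A^i=A^i$ for all $i$.

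I expect the last reduction to be the main obstacle. For a CF tensor the products $A^{i_2}\cdots A^{i_N}$ only span the block-diagonal matrices, with the blocks belonging to the same BNT element tied together, so the conclusion cannot be read off directly and the contributions of the distinct normal blocks must be separated. What makes this delicate is that the scalars $\mu_{j,q}$ may coincide for different $q$ (and agree in modulus across different $j$), so the decoupling genuinely requires the linear independence of the MPVs of inequivalent normal tensors rather than a bare Vandermonde argument.
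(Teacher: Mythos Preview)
Your proposal is correct and follows essentially the same route as the paper: the paper's proof is a one-line application of \cref{lem:BAAAA_CFequ} with $S^i=\sum_{i'}\CC(g)_{ii'}A^{i'}$ and $T^i=A^i$, and the proof of that lemma is precisely your block-decomposition-plus-linear-independence argument, carried out via \cref{prop:OrthogBNT} (the images $\Gamma_{A_j}^{L}(\mathcal M_{D_j\times D_j})$ are in direct sum) together with a choice of $N$ for which $\sum_q\mu_{j,q}^N\neq 0$. Your suggestion of ``absorbing the impurity $\tilde A^{i_1}_j-A^{i_1}_j$ into an enlarged normal tensor'' is an unnecessary detour---writing the identity as $\sum_j\Gamma_{A_j}^{N-1}\big((\sum_q\mu_{j,q}^N)(\tilde A^{i_1}_j-A^{i_1}_j)\big)\otimes\ket{i_1}=0$ and invoking the direct-sum property directly already forces each $j$-term to vanish, after which injectivity of $\Gamma_{A_j}^{N-1}$ (normality of $A_j$) gives $\tilde A^i_j=A^i_j$.
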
 
	According to \cref{rem:changeOfBasis}, the MPV generated by $A$ can be written in terms of a tensor $\tilde{A}$, composed of the matrices $\{\tilde{A}^{j,m}\}$, corresponding to the irreducible representation basis $\{\ket{j,m}\}$ on which $\CC(g)$ acts as 
	$\CC(g) \ket{j,m}=\sum_n {D^j}(g)_{n,m}\ket{j,n}$. According to \cref{prop:ChangeOfBasisDoesntRuinCF}, $\tilde{A}$ is also in CF. Applying \cref{thm:TrivTensorTrans} to $\tilde{A}$ leads to the following:

	\begin{restatable}{corollaryRE}{decomposeTheta}\label{cor:decomposeTheta}
		 The matrices   $\tilde{A}^{j,m} $ are non-zero only for $j$ such that ${D^j(g)}\equiv \II_{1 \times 1}$.
	\end{restatable}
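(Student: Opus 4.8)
The plan is to apply \cref{thm:TrivTensorTrans} to the transformed tensor $\tilde{A}$ and then read off the consequences in the irreducible representation basis. By \cref{prop:ChangeOfBasisDoesntRuinCF} the tensor $\tilde{A}$ is again in CF, and by \cref{rem:changeOfBasis} it generates the same MPV as $A$, which carries the local symmetry with respect to $\CC(g)$; hence \cref{thm:TrivTensorTrans} applies verbatim to $\tilde{A}$ and yields, for every $g\in G$ and every pair $(j,m)$,
\[
\sum_{(j',m')}\CC(g)_{(j,m),(j',m')}\,\tilde{A}^{j',m'}=\tilde{A}^{j,m}\ .
\]

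Next I would substitute the matrix elements of $\CC(g)$ in the irreducible representation basis. From \cref{eq:IrrepBasis} one has $\CC(g)_{(j,p),(j',m)}=\delta_{jj'}\,D^j(g)_{p,m}$, so the relation above decouples over $j$: for each fixed $j$ and each $p$, $\sum_m D^j(g)_{p,m}\,\tilde{A}^{j,m}=\tilde{A}^{j,p}$ for all $g\in G$. Packaging the $D\times D$ matrices $\{\tilde{A}^{j,m}\}_m$ into the single object $\vec{A}_j:=\sum_m\ket{j,m}\otimes\tilde{A}^{j,m}\in\mathcal{H}_j\otimes\mathcal{M}_{D\times D}$, this is precisely the statement that $(D^j(g)\otimes\id)\,\vec{A}_j=\vec{A}_j$ for every $g\in G$.

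Finally I would conclude with an elementary representation-theoretic step, i.e.\ \cref{thm:Schur}. Fix $j$. The identity $(D^j(g)\otimes\id)\,\vec{A}_j=\vec{A}_j$ exhibits $\vec{A}_j$ as a $G$-invariant vector of $D^j(g)\otimes\id$ on $\mathcal{H}_j\otimes\mathcal{M}_{D\times D}$, equivalently as an intertwiner from the trivial representation into that space. Since $\mathcal{M}_{D\times D}$ carries the trivial action, $D^j\otimes\id$ is a direct sum of $D^2$ copies of the irreducible representation $D^j$, so it contains the trivial representation as a subrepresentation only if $D^j$ itself is trivial; Schur's lemma therefore forces $\vec{A}_j=0$ whenever $D^j(g)\not\equiv\II_{1\times1}$. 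That means $\tilde{A}^{j,m}=0$ for all $m$ in that case, which is exactly the claim of the corollary.

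I do not anticipate a genuine obstacle: the representation-theoretic content reduces to the familiar fact that a nontrivial irreducible representation has no nonzero invariant vector, together with the observation that tensoring with the trivial action on the matrix algebra creates no new invariants. The only points needing care are the index bookkeeping when transcribing $\CC(g)$ into the irreducible representation basis — so that the invariance is read off on the $m$-index of $\tilde{A}^{j,m}$ and not on a dual or conjugated index — and the fact that here $\CC$ is a genuine (not projective) representation, so $\gamma\equiv1$ and each $D^j$ is an ordinary irreducible representation.
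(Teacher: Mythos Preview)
Your proposal is correct and follows essentially the same approach as the paper: both apply \cref{thm:TrivTensorTrans} to $\tilde{A}$ and then observe that the resulting invariance equation forces, for each fixed $j$, an invariant vector under the irreducible $D^j$, which must vanish unless $D^j$ is trivial. The only cosmetic difference is packaging: the paper fixes a pair of virtual indices $(\alpha,\beta)$ and works with the vector $(\tilde{A}^{j,1}_{\alpha,\beta},\ldots,\tilde{A}^{j,\dim(j)}_{\alpha,\beta})\in\mathcal{H}_j$, whereas you bundle all virtual indices at once into $\mathcal{H}_j\otimes\mathcal{M}_{D\times D}$ with trivial action on the second factor---the conclusion is identical.
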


\subsection{Gauge field MPV} \label{sub:pureGFSummary}
	
	We show that a  local symmetry for a gauge field MPV $\ket{\psi^N_B}$ generated by a tensor $B$ (in CFII) (as defined in \cref{sub:GFMPV}),  implies the following transformation relations for $B$:
	\begin{equation}   \label{eq:BLRTGroupransformationPRE}
		\begin{tikzpicture}[baseline=-1mm]
		\pic[pic text = $B$] at (0,0) {mpstensorBIG};
		\pic[pic text = $\RR(g)$] at (0,1) {verticalmatrixBIG};
		\end{tikzpicture} \  = \ 
		\begin{tikzpicture}[baseline=-1mm]
		\pic[pic text = $B$] at (0,0) {mpstensorBIG};
		\pic[pic text =  $X(g)$] at (1,0) {horizontalmatrixBIG};
		\end{tikzpicture} \;\;\;\;\;\;\;\;\; ;  \;\;\;\;\;\;\;\;\; 
		\begin{tikzpicture}[baseline=-1mm]
		\pic[pic text = $B $] at (0,0) {mpstensorBIG};
		\pic[pic text = $\LL(g)$] at (0,1) {verticalmatrixBIG};
		\end{tikzpicture} \   = \ 
		\begin{tikzpicture}[baseline=-1mm]
		\pic[pic text =  $\myinv{X(g)}$] at (0,0) {horizontalmatrixWIDE};
		\pic[pic text = $B$] at (1.15,0) {mpstensorBIG}; 
		\end{tikzpicture} \ ,
	\end{equation} 
	where $X(g)$ is a projective   representation with the same multiplier as that of $\RR(g)$.
	This transformation relation  allows to determine the structure of the physical Hilbert space of the gauge field degree of freedom. We find that the gauge field ``spins'' are composed of   right and  left parts:
	\begin{equation*}	 %\label{eq:PhysSpaceDecomp}
		\mathcal{H}_B = \bigoplus_k \mathcal{H}_{l_k} \otimes \mathcal{H}_{r_k} \ , 
	\end{equation*}
	where $\mathcal{H}_{r_k}$ are irreducible representation spaces of $G$. The physical representations $\RR(g)$ and $\LL(g)$  take the forms: $\RR(g) = \oplus_k(\II \otimes D^{r_k}_\gamma(g)) ,\,\LL(g) = \oplus_k( {D^{l_k}_{\myinv{\gamma}}(g)}\otimes  \II)$, and act on the right and left parts of $\mathcal{H}_B$ respectively. 
	
	The transformation relation \cref{eq:BLRTGroupransformationPRE} also determines the structure of the tensor $B$.  
	Decompose $X(g)$ into its constituent irreducible representations  and project  \cref{eq:BLRTGroupransformationPRE} to the corresponding irreducible subspaces (virtual and physical). 
	The obtained  blocks of $B$ intertwine irreducible representations, and   their structure is therefore determined by Schur's lemma (\cref{thm:Schur}). When the irreducible representations in \cref{eq:BLRTGroupransformationPRE} match, the corresponding elementary block of $B$ is proportional to the tensor composed of the matrices:
	\begin{equation*}
		%	\left\{ 
		B^{m,n}=\ket{m}\bra{n}  \ ,
		%	\left|\right.  
		%m=1,\ldots,dim(l), n=1,\ldots,dim(r) 
		%  \right\} 
	\end{equation*}
	 so that $B$, when represented   in  graphical notation,  takes the form:
	\begin{equation*} 
		%\begin{split}
		\begin{tikzpicture}[baseline=-1mm]
		\draw (-.5,0)--(-.15,0)--(-.15,.75);
		\draw (.5,0)--(.15,0)--(.15,.75);
		\pic[pic text = $B$] at (0,0) {horizontalmatrixWIDE};
		\end{tikzpicture} \  \propto \ 
		\begin{tikzpicture}[baseline=-1mm]
		\draw (-.5,0)--(-.15,0)--(-.15,.75);
		\draw (.5,0)--(.15,0)--(.15,.75);
		\end{tikzpicture} \ .
	\end{equation*}
	Otherwise, if the irreducible representations do not match, that block of $B$ is zero.

	The tensor $B$ is composed out of such elementary building blocks   multiplied  by constants - free parameters. Finally, we show that for any B generating a gauge field MPV with a local symmetry, one can always find a tensor A, describing a matter degree of freedom, such that the   matter and gauge field MPV generated by A and B   has a local symmetry.
		
	We shall now describe these results in detail, and state the relevant theorems.
	  
	\bigskip
	  
	Let $\ket{\psi_B}$ be a MPV generated by a tensor $B$ and let $\RR(g),\,\LL(g)$ be projective   representations as defined in \cref{sub:GFMPV}. 
	As in the case of a matter MPV above, according to \cref{prop:wlogCF} we can assume $B$ is in CFII and write it in terms of its BNT: %(\cref{rem:wlogBNT}):
	\begin{equation} \label{eq:BsBlockStructure}
		B^i=\oplus_{j=1}^n \oplus_{q=1}^{r_j} \mu_{j,q} B^i_j  \ ,
	\end{equation}
	where $\{B_j\}$ are  normal tensors in CFII forming a  BNT of $B$ (\cref{def:BNT}) and $\mu_{j,q}$ are constants.

	\begin{restatable}[Gauge field MPV with a local symmetry]{theoremRE}{OnlyFieldGaugeGroup}	
		\label{thm:OnlyFieldGaugeGroup}
		A tensor $B$ in CFII which generates a MPV that has a local symmetry with respect to    $\RR(g)\otimes\LL(g)$ where $\RR(g)$ and $\LL(g)$ are projective    representations with inverse multipliers  (\cref{def:BBSymm}), transforms under the representation matrices as:
		\begin{equation}   \label{eq:BLRTGroupransformation}
			\begin{tikzpicture}[baseline=-1mm]
			\pic[pic text = $B$] at (0,0) {mpstensorBIG};
			\pic[pic text = $\RR(g)$] at (0,1) {verticalmatrixBIG};
			\end{tikzpicture} \  = \ 
			\begin{tikzpicture}[baseline=-1mm]
			\pic[pic text = $B$] at (0,0) {mpstensorBIG};
			\pic[pic text =  $X(g)$] at (1,0) {horizontalmatrixBIG};
			\end{tikzpicture} \;\;\;\;\;\;\;\;\; ;  \;\;\;\;\;\;\;\;\; 
			\begin{tikzpicture}[baseline=-1mm]
			\pic[pic text = $B $] at (0,0) {mpstensorBIG};
			\pic[pic text = $\LL(g)$] at (0,1) {verticalmatrixBIG};
			\end{tikzpicture} \   = \ 
			\begin{tikzpicture}[baseline=-1mm]
			\pic[pic text =  $\myinv{X(g)}$] at (0,0) {horizontalmatrixWIDE};
			\pic[pic text = $B$] at (1.15,0) {mpstensorBIG}; 
			\end{tikzpicture} \ ,
		\end{equation} 
		where   $X(g)$ is a projective  representation of $G$ with the same multiplier as $\RR(g)$ and with the same block structure as $B$ (\cref{eq:BsBlockStructure}):
		\begin{equation} \label{eq:XStruct1}
			X(g) = \oplus_{j=1}^m \oplus_{q=1}^{r_j} {X_j}(g) \ .
		\end{equation}
	\end{restatable}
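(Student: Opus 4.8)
The plan is to convert the invariance of $\ket{\psi^N_B}$ under $\RR^{[K]}_{g}\otimes\LL^{[K+1]}_{g}$ — imposed for all $N$, all adjacent pairs $(K,K+1)$, and all $g$ — into a local equation relating the tensors $\RR(g)B$, $\LL(g)B$ and $B$, and then to solve that equation blockwise for $X(g)$. By \cref{prop:wlogCF} we may take $B$ in CFII with the BNT expansion \cref{eq:BsBlockStructure} (blocking $B$ first, naturally in pairs to respect the two-site symmetry, if needed). The first and least routine step is to argue that the local symmetry descends to each normal component, i.e.\ that $\RR^{[K]}_{g}\otimes\LL^{[K+1]}_{g}\ket{\psi^N_{B_j}}=\ket{\psi^N_{B_j}}$ for every $j$: since $\RR(g)$ and $\LL(g)$ act only on the physical indices, they preserve the block-diagonal structure of $B$ and the normality of each block, so by the fundamental-theorem-of-MPS machinery underlying \cref{prop:wlogCF} (see \cite{Cirac2017}) the symmetry-transformed MPV decomposes into the very same normal components, which by their linear independence over the chain lengths must agree term by term. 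I expect this reduction to be the main obstacle; the remaining steps are linear algebra together with injectivity.

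Next I would fix one normal block $B_j$ and choose $L$ one larger than its injectivity length, so that $E:=(B_j)_{\times L}$ is injective and, moreover, $\{B_j^{a_1}\cdots B_j^{a_{L-1}}\}$ already spans the full matrix algebra. Grouping the $B_j$-chain into consecutive copies of $E$ and applying the descended symmetry on the two physical sites straddling the bond between the first two copies of $E$ — so $\RR(g)$ acts on the last physical leg of the first copy and $\LL(g)$ on the first physical leg of the second — while using injectivity of $E$ to absorb the rest of the chain, yields
\begin{equation*}
  \tilde E_1^{I}\,\tilde E_2^{I'}=E^{I}E^{I'}\qquad\text{for all }I,I',
\end{equation*}
where $\tilde E_1$ carries the extra $\RR(g)$ and $\tilde E_2$ the extra $\LL(g)$. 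Writing $\id=\sum_I c_I E^{I}$ (possible since $E$ is injective) and contracting it into both sides produces $\tilde E_1^{I}W(g)=E^{I}$, $W'(g)\tilde E_2^{I}=E^{I}$ and $W'(g)W(g)=\id$, so $W(g)$ is invertible with $W'(g)=W(g)^{-1}$. Unpacking $\tilde E_1$ and $\tilde E_2$ and cancelling the $L-1$ spanning factors of $B_j$ on either side then gives
\begin{equation*}
  \sum_{a'}\RR(g)_{aa'}\,B_j^{a'}=B_j^{a}\,X_j(g),\qquad\sum_{b'}\LL(g)_{bb'}\,B_j^{b'}=X_j(g)^{-1}B_j^{b},
\end{equation*}
with $X_j(g):=W(g)^{-1}$ invertible.

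Finally I would reassemble. Setting $X(g):=\bigoplus_{j}\bigoplus_{q}X_j(g)$ with exactly the index structure of \cref{eq:BsBlockStructure} endows $X(g)$ with the block form \cref{eq:XStruct1}, and multiplying the per-block relations by $\mu_{j,q}$ and taking the direct sum recovers \cref{eq:BLRTGroupransformation}, since the product of two block-diagonal matrices is block-diagonal with the blockwise products. To see that $X(g)$ is a projective representation with the same multiplier $\gamma$ as $\RR(g)$, I would compose the first relation with itself: from $\RR(g)\RR(h)=\gamma(g,h)\RR(gh)$ one gets $B_j^{a}X_j(g)X_j(h)=\gamma(g,h)\,B_j^{a}X_j(gh)$ for all $a$, and multiplying on the left by the $L-1$ spanning factors of $B_j$ cancels $B_j^{a}$ and leaves $X_j(g)X_j(h)=\gamma(g,h)X_j(gh)$, hence the same identity for $X(g)$. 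Continuity in the compact-Lie case is immediate because $W(g)$ depends continuously on $g$ through $\LL(g)$ and is invertible. The one subtlety worth flagging beyond the descent step is that if $B$ attains CFII only after blocking, one first blocks (in pairs, matching the two-site symmetry) and runs the whole argument on the blocked tensor.
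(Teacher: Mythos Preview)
Your argument is correct and takes a genuinely different route from the paper's. The paper first applies \cref{thm:TrivTensorTrans} to the blocked tensor $BB$ to get $(\RR(g)B)(\LL(g)B)=BB$, observes that this forces a \emph{global} invariance of $\ket{\psi^{2N}_B}$ under $(\LL(g)\otimes\RR(g))^{\otimes N}$, invokes \cref{thm:GlobalSymmUnitary} (the global-symmetry corollary of the fundamental theorem, complete with a block permutation $\Pi$ that must then be argued away), and only afterwards projects to normal blocks and uses injectivity to separate the $\RR$ and $\LL$ relations. You instead descend the two-site invariance to each normal block $B_j$ first and then work entirely inside one injective block: your open-boundary equation $\tilde E_1^{I}\tilde E_2^{I'}=E^{I}E^{I'}$ together with the span/inverse trick produces $X_j(g)$ directly, never passing through the global-symmetry classification or having to handle $\Pi$. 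Your route is more self-contained; the paper's has the virtue of recycling an already-proven theorem. One small sharpening of the step you flag as the main obstacle: the cleanest justification of the descent is \cref{lem:BAAAA_CFequ} applied to the $BB$-blocked tensor with $S=(\RR(g)B)(\LL(g)B)$ and $T=BB$ (resting on \cref{prop:OrthogBNT}), rather than a Vandermonde on the $\mu_{j,q}^{N}$ as your phrase ``linear independence over the chain lengths'' might suggest --- distinct blocks can share coefficients, so it is the direct-sum property of the images $\Gamma_{B_j}^{L}(\cdot)$, not the scalars $\mu_{j,q}$, that forces the per-block equality.
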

	
	When considering matter and gauge field MPVs in the next section, we will show that in that setting, a more general relation than \cref{eq:BLRTGroupransformation} is satisfied by the tensor $B$. Namely:
	\begin{equation} \label{eq:BRightGroupTransPre}
		\begin{tikzpicture}[baseline=-1mm]
		\pic[pic text = $B$] at (0,0) {mpstensorBIG};
		\pic[pic text = $\RR(g)$] at (0,1) {verticalmatrixBIG};
		\end{tikzpicture} \  = \ 
		\begin{tikzpicture}[baseline=-1mm]
		\pic[pic text = $B$] at (0,0) {mpstensorBIG};
		\pic[pic text =  $X(g)$] at (1,0) {horizontalmatrixBIG};
		\end{tikzpicture} \;\;\;\;\;\;\;\;\; ;  \;\;\;\;\;\;\;\;\; 
		\begin{tikzpicture}[baseline=-1mm]
		\pic[pic text = $B $] at (0,0) {mpstensorBIG};
		\pic[pic text = $\LL(g)$] at (0,1) {verticalmatrixBIG};
		\end{tikzpicture} \   = \ 
		\begin{tikzpicture}[baseline=-1mm]
		\pic[pic text =  $\myinv{Y{(g)}}$] at (0,0) {horizontalmatrixWIDE};
		\pic[pic text = $B$] at (1.15,0) {mpstensorBIG}; 
		\end{tikzpicture} \ ,
	\end{equation} 
	where $X(g)$ and $Y(g)$ are different projective representations (in the case when $B$ is composed of non-square matrices they are of different dimensions).
	We shall now present results which follow from the more general  relation (\cref{eq:BRightGroupTransPre}), as they will be relevant also in the next section. Then we will apply them to the case at hand - \cref{eq:BLRTGroupransformation} (i.e., when $X(g) = Y(g)$ and $B$ is composed out of square matrices).

	\Cref{eq:BRightGroupTransPre} allows us to determine the structure of the Hilbert space of the gauge field degree of freedom. The fact that the action of $\RR(g)$ is translated to a matrix multiplication from the right, and that of $\LL(g)$ - to multiplication from the left implies that their actions on the ``spin'' representing the gauge field are independent, consequently the ``spin'' must be composed of right and left parts:
	 
	\begin{restatable}[Structure of $\mathcal{H}_B$]{propositionRE}{HBstructure}  \label{prop:HBsturct}
		Given a tensor $B$, projective representations $\RR(g)$, $\LL(g)$ with inverse multipliers $\gamma$ and $\gamma^{-1}$ (as defined in \cref{sub:GFMPV}) and matrices $X(g)$ and $Y(g)$ which satisfy \cref{eq:BRightGroupTransPre},  the Hilbert space  $\mathcal{H}_B$ can be restricted to a representation space of $G\times G$ and thus decomposes into a direct sum of tensor products of irreducible representation spaces of $G$:
		\begin{equation*}	
			\mathcal{H}_B = \bigoplus_{k=1}^M \mathcal{H}_{l_k} \otimes \mathcal{H}_{r_k} \ , 
		\end{equation*}
		where $r_k$ and $l_k$ are  irreducible representation labels.
	\end{restatable}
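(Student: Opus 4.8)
The plan is to read the two relations in \cref{eq:BRightGroupTransPre} as the statement that $\RR(g)$ acting on the physical index of $B$ ``equals'' right multiplication by $X(g)$ on the virtual index, while $\LL(g)$ ``equals'' left multiplication by $Y(g)^{-1}$; since left and right matrix multiplication always commute, $\RR$ and $\LL$ must commute on the part of $\mathcal{H}_B$ that $B$ actually resolves, and a commuting pair of (projective) $G$-representations is a representation of $G\times G$, which then decomposes as in the claim. Concretely, I would first encode $B$ as the vector $\beta=\sum_i\ket{i}\otimes B^i\in\mathcal{H}_B\otimes\mathcal{M}_{D_2\times D_1}$, and rewrite the first relation of \cref{eq:BRightGroupTransPre} as $(\RR(g)\otimes\id)\beta=(\id\otimes R_{X(g)})\beta$ and the second as $(\LL(g)\otimes\id)\beta=(\id\otimes L_{Y(g)^{-1}})\beta$, where $R_{X(g)}$ and $L_{Y(g)^{-1}}$ denote right multiplication by $X(g)$ and left multiplication by $Y(g)^{-1}$ on $\mathcal{M}_{D_2\times D_1}$.

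The ``restriction'' of $\mathcal{H}_B$ in the statement is the subspace $\mathcal{S}_B\subseteq\mathcal{H}_B$ spanned by $\{\sum_i\phi(B^i)\ket{i}\;:\;\phi\in\mathcal{M}_{D_2\times D_1}^*\}$, equivalently the orthogonal complement of $\{\ket{v}:\sum_iv_iB^i=0\}$; passing from $\mathcal{H}_B$ to $\mathcal{S}_B$ is harmless, since $\ket{\psi^N_B}$ depends only on the matrices $B^i$. Contracting the two $\beta$-identities with an arbitrary functional $\phi$ on the virtual space gives $\RR(g)\bigl(\sum_i\phi(B^i)\ket{i}\bigr)=\sum_i(\phi\circ R_{X(g)})(B^i)\ket{i}$, and similarly for $\LL(g)$ with $L_{Y(g)^{-1}}$; because $X(g)$ and $Y(g)$ are invertible, $\phi\mapsto\phi\circ R_{X(g)}$ and $\phi\mapsto\phi\circ L_{Y(g)^{-1}}$ are bijections of $\mathcal{M}_{D_2\times D_1}^*$, so $\RR(g)$ and $\LL(g)$ each map $\mathcal{S}_B$ onto itself. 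The key step is then immediate: applying $\RR(g)\LL(h)$ and $\LL(h)\RR(g)$ to a generator $\sum_i\phi(B^i)\ket{i}$ of $\mathcal{S}_B$ yields $\sum_i(\phi\circ L_{Y(h)^{-1}}\circ R_{X(g)})(B^i)\ket{i}$ and $\sum_i(\phi\circ R_{X(g)}\circ L_{Y(h)^{-1}})(B^i)\ket{i}$ respectively, and these coincide because left and right multiplications on $\mathcal{M}_{D_2\times D_1}$ commute; hence $\RR(g)\LL(h)=\LL(h)\RR(g)$ on $\mathcal{S}_B$ for all $g,h\in G$.

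With commutativity in hand, $(g,h)\mapsto\LL(g)\RR(h)$ is a projective representation of $G\times G$ on $\mathcal{S}_B$ with the separable multiplier $\bigl((g_1,h_1),(g_2,h_2)\bigr)\mapsto\gamma^{-1}(g_1,g_2)\gamma(h_1,h_2)$. Since $G\times G$ is again finite or compact, this representation is completely reducible. I would then decompose $\mathcal{S}_B$ into its $\RR$-isotypic components $\mathcal{H}_{r}\otimes(\text{multiplicity space})$, on which $\RR$ acts as $D^{r}_\gamma(g)\otimes\II$; since $\LL$ commutes with the isotypic projections (these lie in the algebra generated by $\{\RR(g)\}$), by Schur's lemma (\cref{thm:Schur}) $\LL$ acts on each component as $\II\otimes(\text{a }\gamma^{-1}\text{-representation})$, and decomposing those multiplicity spaces into irreducibles $D^{l}_{\gamma^{-1}}$ and regrouping gives $\mathcal{S}_B=\bigoplus_{k=1}^{M}\mathcal{H}_{l_k}\otimes\mathcal{H}_{r_k}$ with $\LL(g)=\oplus_k D^{l_k}_{\gamma^{-1}}(g)\otimes\II$ and $\RR(g)=\oplus_k\II\otimes D^{r_k}_\gamma(g)$, the asserted structure. (Equivalently one may simply cite the standard classification of irreducible projective representations of a product group with a separable multiplier as outer tensor products.)

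I expect the real work to be bookkeeping rather than conceptual: one must keep the non-injectivity of $B$ under control throughout, so that the passage from $\mathcal{H}_B$ to $\mathcal{S}_B$ is a legitimate, well-defined reduction, $\mathcal{S}_B$ is genuinely preserved by the symmetry operators, and nothing relevant is lost; and one must be mildly careful that, for a \emph{projective} representation, the isotypic projections still lie in the algebra generated by the $\RR(g)$ (so that $\LL$ commutes with them) before Schur's lemma can be invoked inside each component.
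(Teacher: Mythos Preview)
Your proposal is correct and follows essentially the same route as the paper's proof. The paper defines the restricted subspace as $\mathcal{H}=\mathrm{span}\{\ket{\phi_{\alpha,\beta}}\}$ with $\ket{\phi_{\alpha,\beta}}=\sum_i\bra{\alpha}B^i\ket{\beta}\ket{i}$, which is exactly your $\mathcal{S}_B$ (your linear functionals $\phi$ specialize to the matrix-entry functionals $M\mapsto\bra{\alpha}M\ket{\beta}$, and these span the dual); it then shows invariance and the commutation $\RR(g)\LL(h)=\LL(h)\RR(g)$ on $\mathcal{H}$ via the same left/right-multiplication trick, and finishes by citing the classification of irreducible projective representations of $G\times G$ with separable multiplier as outer tensor products, where you instead sketch the isotypic-plus-Schur argument.
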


	The structure of $\mathcal{H}_B$ described in \cite{Zohar:2015jnb} is a particular case of this Hilbert space. There:
	\begin{equation} \label{eq:KSH}
		\mathcal{H}_B = \bigoplus_{k=1}^M \mathcal{H}_{\overline{r_k}} \otimes \mathcal{H}_{r_k} \ , 
	\end{equation}
	where $\overline{r_k}$ indicates the complex conjugate representation to $r_k$.
	\Cref{eq:KSH} is a truncated version of the K-S Hilbert space, which allows to regain the whole space if M is increased such that all the irreducible representations are included. Each $k$ sector in \cref{eq:KSH}:
	$\mathcal{H}_{\overline{r_k}} \otimes \mathcal{H}_{r_k}$ is isomorphic to the function space spanned by 
	\begin{equation*}
		\left\{ D^{r_k}_{m,n}:g\mapsto D^{r_k}_{m,n}(g) \left|\right. m,n = 1,\ldots, dim(r_k) \right\} \subset L^2(G) \ ,
	\end{equation*}
	 with $\RR(g)$ and $\LL(g)$ equivalent to  the right and left translations \cite{FollandHarmAna}.

	\begin{remark} \label{rem:LLRRareTensorProducts}
		The group transformations $\RR(g)$ and $\LL(g)$ are equivalent, according to \cref{prop:HBsturct}, to 
		$\oplus_k  (\II \otimes {D_\gamma^{r_k}(g)}) $ 
		and 
		$\oplus_k ({D_{\myinv{\gamma}}^{l_k}(g)} \otimes \II)$
		respectively, where ${D_\gamma^{j}(g)}$ are irreducible projective representations. Changing the basis of the physical Hilbert space (as in \cref{rem:changeOfBasis}) to 
		$\{\ket{l_k,m}\otimes\ket{r_k,n}\}$
		in which the representations take this block diagonal form,
		involves transforming  $B$ into $\tilde{B}$ given by the matrices:
		$\tilde{B}^{ k,m ,n} = \sum_i B^i \innerCG{l_k,m;r_k,n}{i}$.
		According to \cref{prop:ChangeOfBasisDoesntRuinCF} $\tilde{B}$ is also in CFII. 
		\Cref{eq:BRightGroupTransPre} holds for the new tensor under the action of the transformed operators:
		$ \tilde{\RR}(g)= \oplus_k (\II \otimes {D_\gamma^{r_k}(g)}) $ and 
		$\tilde{\LL}(g) =  \oplus_k ({D_{\myinv{\gamma}}^{l_k}(g)} \otimes \II)$.
		We shall always assume  $B$, $\LL(g)$ and $\RR(g) $ are in these forms. 
	\end{remark}
	\begin{remark}  \label{rem:BdoubleIndex}
		The simplest case of \cref{eq:BRightGroupTransPre} one could consider is when   
		$ \RR(g) = \II \otimes {D^{r}(g)}$ 
		and 
		$ \LL(g) ={D^{l}(g)} \otimes \II$,
		for irreducible projective representations ${D_{{\gamma}}^{r}(g)}$ and ${D_{\myinv{\gamma}}^{l}(g)}$. To these corresponds the basis $\{ \ket{m}\otimes\ket{n} \left| \right. m=1,\ldots,dim(l) ,n=1,\ldots,dim(r)  \}$, 
		and  the matrices composing the tensor $B$ are numbered by  two indices: 
		\begin{equation*}
			B^{m,n} = \sum_{\alpha,\beta} B^{m,n}_{\alpha,\beta} \ket{\alpha}\bra{\beta} \ .
		\end{equation*}
		$B$  transforms under  $\RR(g)$ and $\LL(g)$ in the following manner:
		\begin{equation*}
			\begin{split}
			\RR(g): B^{m,n}  &  \mapsto \sum_{n^\prime}{D_{{\gamma}}^r(g)}_{n,n^\prime}  B^{m,n^\prime} 
			= B^{m,n} X(g) \\
			\LL(g): B^{m,n}  & \mapsto \sum_{m^\prime} {D_{\myinv{\gamma}}^l(g)}_{m,m^\prime} B^{m^\prime,n} 
			= Y(g)^{-1} B^{m,n} \ .
			\end{split}
		\end{equation*}
	\end{remark}

	We have seen in \cref{rem:changeOfBasis} how to change the basis of the physical Hilbert space in order to bring the physical representations to block diagonal form. We would like to do the same for the virtual  projective representation $X(g)$ appearing in \cref{eq:BLRTGroupransformation}. This can be  achieved by a different transformation of the tensor $B$ described in the following:	
		
	\begin{remark} \label{rem:Xfreedom}
		Given $ B,\RR(g),\,\LL(g)$ and $X(g)$  that satisfy \cref{eq:BLRTGroupransformation}, redefine $B$: 
		\begin{equation*}
			B^{k;m,n}\mapsto \tilde{B}^{k;m,n}=V^{-1}B^{k;m,n} V \ ,
		\end{equation*} 
		with any invertible matrix $V$. The new tensor $\tilde{B}$ generates the same MPV and transform as in \cref{eq:BLRTGroupransformation} with $X(g)$   replaced by  $\tilde{X}(g)=V^{-1}X(g)V $.
	\end{remark}
		
	\begin{remark} \label{rem:XshangeOfBasisOnlyAfterCF}
		Note that the transformation described in \cref{rem:Xfreedom} may ruin the CF property of $B$, as $V$ does not in general preserve $B$'s block structure (\cref{eq:BsBlockStructure}). We shall therefore take care to use this freedom of choosing the basis of $X(g)$ only when we no longer intend to use the  CF property.
	\end{remark}
		
	\Cref{rem:Xfreedom} allows us to assume without loss of generality $X(g)$ takes the form $\oplus_aX^a(g)$, 
	where $X^a(g)$ are irreducible projective representations. Next we project \cref{eq:BLRTGroupransformation} to the $k$ sector of the physical Hilbert space (\cref{rem:LLRRareTensorProducts}) and to the $(a,b)$ block in the virtual space, since the representations are block diagonal they commute with the projection operators for every group element $g\in G$. We therefore obtain:
	\begin{equation} \label{eq:ElementaryBblockDef}
		\begin{tikzpicture}[baseline=-1mm]
		\pic[pic text = $B^k_{a,b}$] at (0,0) {mpstensorBIG};
		\pic[pic text = $\II\otimes D_\gamma^{r_k}(g)$] at (0,1) {verticalmatrixSuperWIDE};
		\end{tikzpicture} \  = \ 
		\begin{tikzpicture}[baseline=-1mm]
		\pic[pic text = $B^k_{a,b}$] at (0,0) {mpstensorBIG};
		\pic[pic text =  $X^b(g)$] at (1.3,0) {horizontalmatrixWIDE};
		\end{tikzpicture} \;\;\;\;\;\;\;\;\; ;  \;\;\;\;\;\;\;\;\; 
		\begin{tikzpicture}[baseline=-1mm]
		\pic[pic text = $B^k_{a,b} $] at (0,0) {mpstensorBIG};
		\pic[pic text = $D_{\myinv{\gamma}}^{l_k}(g)\otimes \II$] at (0,1) {verticalmatrixSuperWIDE};
		\end{tikzpicture} \   = \ 
		\begin{tikzpicture}[baseline=-1mm]
		\pic[pic text =  $ \myinv{X ^a{(g)}}$] at (0,0) {horizontalmatrixWIDE};
		\pic[pic text = $B^k_{a,b}$] at (1.3,0) {mpstensorBIG}; 
		\end{tikzpicture} \ ,
	\end{equation} 
	 where $B^k_{a,b}$ is the tensor that  consists of the $(a,b)$ blocks of the matrices   $B^{k;m,n}$. 
 
	 The reduction procedure described above   motivates the following definition of an elementary $B$ block. Next we shall show that the irreducible representations appearing in  \cref{eq:ElementaryBblockDef} determine such blocks up to a constant.
	\begin{restatable}{definition}{ElementaryBblock}  \label{def:ElementaryBblock}
		An elementary block of the tensor $B$ is one which satisfies \cref{eq:BRightGroupTransPre},  where $ \RR(g) = \II \otimes {D_\gamma^{r}(g)}$,		
		$ \LL(g) ={D_{\myinv{\gamma}}^{l}(g)} \otimes \II$ 
		and  $X(g)$, $Y(g)$, $D_\gamma^{r}(g)$ and  $D_{\myinv{\gamma}}^{l}(g)$ are  irreducible projective representations  (both $X(g)$ and $Y(g)$ have multiplier $\gamma$).
	\end{restatable}
 
	 \begin{restatable} [Structure of an elementary $B$ block] {propositionRE}{ElemBBlock}
	 	\label{prop:ElemBBlock}
	 	Let  $B$ be an elementary $B$ block (\cref{def:ElementaryBblock}). If $X(g)={D_\gamma^r(g)}$ and $ \overline{Y(g)}={{D_{\myinv{\gamma}}^l(g)}}$, then
	 	$B$  is proportional to the tensor composed of the matrices
	 	\begin{equation*}
		 	B^{m,n}=\ket{m}\bra{n}  \ ,
		 	m=1,\ldots,dim(l), n=1,\ldots,dim(r) 
		 	\ . 
	 	\end{equation*}
	 	Otherwise $B=0$.
	 \end{restatable}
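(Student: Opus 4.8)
The plan is to reshape each of the two transformation relations that define an elementary block into an intertwining relation and then apply Schur's lemma (\cref{thm:Schur}) twice. Following \cref{rem:BdoubleIndex}, I would write $B$ through its matrices $\{B^{m,n}\}$, $m=1,\dots,\dim(l)$, $n=1,\dots,\dim(r)$, where each $B^{m,n}$ is a $D_Y\times D_X$ matrix ($D_X=\dim X$, $D_Y=\dim Y$), so that the two transformation relations of \cref{def:ElementaryBblock} read
\begin{align*}
  \sum_{n'} D^{r}_{\gamma}(g)_{n,n'}\,B^{m,n'} &= B^{m,n}\,X(g),\\
  \sum_{m'} D^{l}_{\myinv{\gamma}}(g)_{m,m'}\,B^{m',n} &= Y(g)^{-1}\,B^{m,n},\qquad \forall g\in G\ .
\end{align*}

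First I would exploit the $\RR$-relation (the first line). For each fixed $m$ I would assemble the matrices into a single linear map $\hat B^{m}\colon\mathbb C^{D_X}\to\mathbb C^{D_Y}\otimes\mathcal H_{r}$, $\hat B^{m}:=\sum_{n}B^{m,n}\otimes\ket{r,n}$; a short computation shows the first line is equivalent to $(\II_{D_Y}\otimes D^{r}_{\gamma}(g))\,\hat B^{m}=\hat B^{m}\,X(g)$ for all $g$. Thus $\hat B^{m}$ intertwines the irreducible projective representation $(X,\mathbb C^{D_X})$ with $\II_{D_Y}\otimes D^{r}_{\gamma}$ on $\mathbb C^{D_Y}\otimes\mathcal H_{r}$, which is a direct sum of $D_Y$ copies of $D^{r}_{\gamma}$; both representations carry multiplier $\gamma$. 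By Schur's lemma, if $X$ is inequivalent to $D^{r}_{\gamma}$ then $\hat B^{m}=0$ for every $m$, hence $B=0$; otherwise, taking $X=D^{r}_{\gamma}$ (so $D_X=\dim(r)$) by the convention of \cref{rem:uniqueCoice}, the space of such intertwiners is $D_Y$-dimensional and spanned by the maps $v\mapsto e_{\alpha}\otimes v$, $\alpha=1,\dots,D_Y$, so each $\hat B^{m}$ is the map $v\mapsto u^{(m)}\otimes v$ for some $u^{(m)}\in\mathbb C^{D_Y}$. Unpacking the definition of $\hat B^{m}$, this means exactly
\[
  B^{m,n}=u^{(m)}\bra{n}\ ,\qquad u^{(m)}\in\mathbb C^{D_Y}\ ,
\]
where $\bra{n}$ is the covector dual to $\ket{r,n}$.

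Next I would substitute this form into the $\LL$-relation (the second line): using $B^{m,n}=u^{(m)}\bra{n}$ and cancelling the common nonzero factor $\bra{n}$ leaves $\sum_{m'}D^{l}_{\myinv{\gamma}}(g)_{m,m'}\,u^{(m')}=Y(g)^{-1}u^{(m)}$ for all $g$. Collecting the $u^{(m)}$ into a map $U\colon\mathbb C^{\dim(l)}\to\mathbb C^{D_Y}$ by $U\ket{m}:=u^{(m)}$, and using the unitarity of $D^{l}_{\myinv{\gamma}}$ in the form $(D^{l}_{\myinv{\gamma}}(g))^{T}=\bigl(\overline{D^{l}_{\myinv{\gamma}}(g)}\bigr)^{-1}$, this relation becomes $Y(g)\,U=U\,\overline{D^{l}_{\myinv{\gamma}}}(g)$ for all $g$. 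Hence $U$ intertwines the complex-conjugate representation $\overline{D^{l}_{\myinv{\gamma}}}$, which is irreducible with multiplier $\overline{\myinv{\gamma}}=\gamma$, with the irreducible representation $(Y,\mathbb C^{D_Y})$ of multiplier $\gamma$. Schur's lemma applies once more: if $Y$ is inequivalent to $\overline{D^{l}_{\myinv{\gamma}}}$ then $U=0$, all $u^{(m)}=0$, and $B=0$ — this, together with the case $X\not\cong D^{r}_{\gamma}$ above, exhausts the ``otherwise'' alternative; whereas if $\overline{Y}=D^{l}_{\myinv{\gamma}}$, then $D_Y=\dim(l)$ and $U=\beta\,\II$ for a scalar $\beta$, i.e.\ $u^{(m)}=\beta\ket{m}$, which yields $B^{m,n}=\beta\,\ket{m}\bra{n}$, as claimed.

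Most of the proof is bookkeeping: tracking which space carries which (possibly projective) representation, and checking at each use of Schur's lemma that the two representations compared share a multiplier — which is ensured by the hypotheses of \cref{def:ElementaryBblock} together with the fact that complex conjugation inverts the multiplier. The one step I expect to require genuine care is the second reshaping: contracting the physical $\LL$-index against $B$ produces a \emph{transpose} of $D^{l}_{\myinv{\gamma}}$, which, by unitarity, has to be re-expressed as the complex-conjugate representation $\overline{D^{l}_{\myinv{\gamma}}}$; one must then verify that this conjugate representation indeed carries multiplier $\gamma$, so that it may legitimately be compared with $Y$ via Schur's lemma. Everything else follows routinely.
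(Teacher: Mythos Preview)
Your proof is correct. It takes a genuinely different route from the paper's, so a brief comparison is worthwhile.

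The paper treats the two transformation relations simultaneously: it reshapes $B$ into a single matrix $\mathbf{B}_{(m,n),(\alpha,\beta)}:=B^{m,n}_{\alpha,\beta}$ and observes that the combined relation reads
\[
\bigl(D^{l}_{\myinv{\gamma}}(h)\otimes D^{r}_{\gamma}(g)\bigr)\,\mathbf{B}
=\mathbf{B}\,\bigl(\overline{Y(h)}\otimes X(g)\bigr),
\]
an intertwining relation between two irreducible projective representations of the product group $G\times G$ with the same multiplier $\gamma^{-1}\times\gamma$. A single application of Schur's lemma (for $G\times G$) then gives $\mathbf{B}\propto\II$ or $\mathbf{B}=0$. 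This is clean and symmetric, but it silently uses the fact that tensor products of irreducibles of $G$ exhaust the irreducibles of $G\times G$ (with the product multiplier).

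You instead handle the two relations sequentially, applying Schur's lemma for $G$ twice: first to pin down the right virtual leg ($B^{m,n}=u^{(m)}\bra{n}$), then to pin down the left one ($u^{(m)}=\beta\ket{m}$). This is slightly longer but more elementary, since it never leaves the representation theory of $G$ itself; it also makes the ``otherwise $B=0$'' dichotomy explicit as two separate failure modes. The transpose-to-conjugate step you flagged is indeed the only place requiring care, and you handled it correctly.
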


	We have thus classified all tensors $B$ that satisfy \cref{eq:BLRTGroupransformation}. There is however more information to be extracted from \cref{thm:OnlyFieldGaugeGroup}. According to   \cref{prop:ElemBBlock},  when  projected   to sectors corresponding to  inequivalent representations, the tensor $B$ is zero. This result, combined with the assumption that $B$ is in CF imposes relations between the irreducible  representations that comprise $\RR(g),\,\LL(g)$ and $X(g)$:

	\begin{restatable}{propositionRE}{BstructGeneral}
		\label{prop:BstructGeneral}
		Let $B,\RR(g),\LL(g)$ and $X(g)$ be as in \cref{thm:OnlyFieldGaugeGroup}. Let ${X_j}(g) =\oplus_a X_j^a(g)$ be a block  of $X(g)$ appearing in \cref{eq:XStruct1}, consisting of irreducible projective representations $X_j^a(g)$. 
		Let  
		$\RR(g) = \oplus_k (\II \otimes {D_\gamma^{r_k}(g)})$ 
		and 
		$\LL(g) = \oplus_k ( {D_{\myinv{\gamma}}^{l_k}(g)} \otimes \II)$, 
		where $D_\gamma^{r_k}$ and  $D_{\myinv{\gamma}}^{l_k}$ are irreducible projective representations. Then the following hold:
		\begin{enumerate}
			\item
				For all $k$  either there exist  $  a$ and $b$ such that 
				$X_j^b(g) = {D_\gamma^{r_k}(g)} $ and  
				$\overline{ X_j^a(g)} = {{D_{\myinv{\gamma}}^{l_k}(g)}}$, or the projection of the corresponding tensor $B_j$ (a BNT element of $B$) to the sector $k$ of the physical space is zero.
			\item
				$ \forall a$ $\exists k$ such that $\overline{X_j^a(g)} = {{D_{\myinv{\gamma}}^{l_k}(g)}}$.
			\item
				$\forall a$  $\exists k$ such that $X_j^a(g) = {{D_\gamma^{r_k}(g)}}$.
		\end{enumerate} 
	\end{restatable}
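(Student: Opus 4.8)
The plan is to transfer the transformation relation \cref{eq:BLRTGroupransformation} from $B$ down to each element of its BNT, reduce it to the elementary‑block form of \cref{def:ElementaryBblock} by putting every representation in sight into block‑diagonal form, apply \cref{prop:ElemBBlock}, and finally use normality of the BNT elements to forbid the vanishing of whole rows or columns of virtual blocks. Concretely, since \cref{thm:OnlyFieldGaugeGroup} gives $X(g)$ the same virtual block structure \cref{eq:XStruct1} as $B$ in \cref{eq:BsBlockStructure}, I would equate the $(j,q)$ virtual blocks on both sides of \cref{eq:BLRTGroupransformation}: the constants $\mu_{j,q}$ cancel (note $\RR(g),\LL(g)$ act only on the physical index and so commute with the virtual block decomposition) and one obtains, for every BNT element $B_j$ and all $g\in G$,
\begin{equation*}
\sum_{i'}\RR(g)_{i,i'}B_j^{i'} = B_j^i X_j(g) \ , \qquad \sum_{i'}\LL(g)_{i,i'}B_j^{i'} = X_j(g)^{-1}B_j^i \ .
\end{equation*}

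Next I would fix convenient bases, as is implicit in the statement's hypotheses. On the physical side one changes basis so that $\RR(g)=\oplus_k(\II\otimes D_\gamma^{r_k}(g))$ and $\LL(g)=\oplus_k(D_{\myinv{\gamma}}^{l_k}(g)\otimes\II)$ (\cref{prop:HBsturct,rem:LLRRareTensorProducts}); by \cref{prop:ChangeOfBasisDoesntRuinCF} this keeps $B$ in CFII and keeps the $B_j$ normal, and the transformation relations above survive with the transformed operators. On the virtual side of each $B_j$ separately one uses the freedom of \cref{rem:Xfreedom} to conjugate by an invertible $V_j$ so that $X_j(g)=\oplus_a X_j^a(g)$ with all $X_j^a$ irreducible; this conjugation may destroy CFII, but since it acts on matrix products by $M\mapsto V_j^{-1}MV_j$ it preserves injectivity after blocking, which is the only property I use afterwards.

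Now I would project the above relations onto the $k$‑sector of the physical space and the $(a,b)$‑block of the virtual space of $B_j$. Every representation involved is block diagonal, hence commutes with all the projectors, so the tensor $B_j{}^k_{a,b}$ assembled from the $(a,b)$ virtual blocks of $B_j^{k;m,n}$ satisfies exactly the elementary‑block relation \cref{eq:ElementaryBblockDef}, with the identifications $\RR=\II\otimes D_\gamma^{r_k}$, $\LL=D_{\myinv{\gamma}}^{l_k}\otimes\II$, $X(g)=X_j^b(g)$, $Y(g)=X_j^a(g)$. \cref{prop:ElemBBlock} then forces $B_j{}^k_{a,b}=0$ unless $X_j^b(g)=D_\gamma^{r_k}(g)$ and $\overline{X_j^a(g)}=D_{\myinv{\gamma}}^{l_k}(g)$. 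Item 1 is immediate: for a fixed $k$, either all blocks $B_j{}^k_{a,b}$ vanish, which means the projection of $B_j$ to sector $k$ is zero, or some $B_j{}^k_{a_0,b_0}\neq 0$ provides the required pair $a=a_0$, $b=b_0$.

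For items 2 and 3 I would argue by contradiction using normality (\cref{def:NormalTensor}). Fix an irreducible summand label $a_0$ of $X_j$ and suppose no $k$ satisfies $X_j^{a_0}(g)=D_\gamma^{r_k}(g)$; then the vanishing clause above gives $B_j{}^k_{a,a_0}=0$ for all $a,k$, i.e.\ every matrix $B_j^{k;m,n}$ has all its columns lying in the virtual summand $a_0$ equal to zero. Any product $B_j^{k_1;m_1,n_1}\cdots B_j^{k_L;m_L,n_L}$ inherits this zero column‑block from its last (rightmost) factor, so the span of all such products stays inside the proper subspace of $\mathcal{M}_{D\times D}$ of matrices with those columns zero, contradicting normality of $B_j$; hence every $X_j^a$ equals some $D_\gamma^{r_k}$, which is item 3. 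The mirror‑image argument — if $\overline{X_j^{a_0}}$ equals no $D_{\myinv{\gamma}}^{l_k}$ then $B_j{}^k_{a_0,b}=0$ for all $b,k$, so every $B_j^{k;m,n}$ has a vanishing row‑block, which persists through the first (leftmost) factor of any product — yields item 2. The main obstacle I expect is bookkeeping rather than conceptual: sequencing the physical and virtual basis changes so that at each step one still holds the hypothesis actually needed (CFII and the BNT until \cref{thm:OnlyFieldGaugeGroup} and \cref{prop:ChangeOfBasisDoesntRuinCF} have been used, only injectivity‑after‑blocking afterwards), and keeping track of which of $a,b$ is the row index and which the column index so that the conjugations relating $X_j^a,X_j^b$ to $D_{\myinv{\gamma}}^{l_k},D_\gamma^{r_k}$ come out with the correct orientations.
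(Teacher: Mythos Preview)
Your proposal is correct and follows essentially the same route as the paper: reduce to each BNT element $B_j$, bring the physical and virtual representations to block-diagonal form, project to elementary blocks, apply \cref{prop:ElemBBlock}, and then use normality to rule out vanishing row- or column-blocks. You are in fact more explicit than the paper about why the virtual basis change of \cref{rem:Xfreedom} is harmless at this stage (only injectivity after blocking is needed) and about how a zero row/column-block propagates through products to contradict normality.
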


	The elementary block of $B$ described in \cref{prop:ElemBBlock} is the same as the one used in \cite{Zohar:2015jnb}. Note that even in lattices of higher dimensionality  each  gauge field degree of freedom still connects two lattice sites. There:
	\begin{equation}\label{eq:BErez}
		B^{j;m,n} = \beta_j \ket{j,m}\bra{j,n} \ ,
	\end{equation}
	where $\beta_j$ are arbitrary constants.
	The overall structure of the $B$ tensor  derived above admits more general structures than \cref{eq:BErez}; these  structures are recovered if for example, all blocks $X_j(g)$ appearing in $X(g)$ (\cref{eq:XStruct1}) are irreducible representations. In this case (since in \cref{prop:BstructGeneral} the index $a$ can assume only one value),  for all $k$  ${D_{\myinv{\gamma}}^{l_k}(g)} = \overline{D_\gamma^{r_k}(g)}$ and $\mathcal{H}_B$ takes the K-S form, as in \cref{eq:KSH}.

	In the following two propositions we consider adding a matter degree of freedom to a  gauge field MPV with a local symmetry. We show that it is always possible to find a tensor $A$  and a unitary representation $\CC(g)$ (non-trivial ones) that couple to  it:
	\begin{restatable} {propositionRE}{NonTrivAcoupling} \label{prop:invGauge1}
		Let $B$ be in CFII and let $\ket{\psi^N_B}$ have a local symmetry with respect to  $\RR(g)\otimes\LL(g)$ (as in \cref{thm:OnlyFieldGaugeGroup}). It is always possible to find a tensor $A$ and a representation $\CC(g)$ such that the corresponding matter and gauge field MPV $\ket{\psi^N_{AB}}$ has a local symmetry with respect to $\RR(g) \otimes\CC(g) \otimes \LL(g)$  (\cref{def:BABSymm}).
		 In addition, the corresponding   matter MPV - $\ket{\psi^N_A}$ - has a global symmetry with respect to $\CC(g)$.
	\end{restatable}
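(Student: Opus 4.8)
The plan is to use \cref{thm:OnlyFieldGaugeGroup} to turn the requirement that $\ket{\psi^N_{AB}}$ have a local symmetry into a single intertwining relation for the matter tensor $A$, and then to exhibit an explicit pair $(\CC,A)$ solving it. First I would invoke \cref{thm:OnlyFieldGaugeGroup}: since $B$ is in CFII and $\ket{\psi^N_B}$ has a local symmetry with respect to $\RR(g)\otimes\LL(g)$, the tensor $B$ satisfies \cref{eq:BLRTGroupransformation} for some projective representation $X(g)$ of $G$, of the bond dimension of $B$ and with the same multiplier as $\RR(g)$. In the trace defining a coefficient of $\ket{\psi^N_{AB}}$, a matter matrix $A^{i}$ on site $2K+1$ is flanked by the $B$ on site $2K$ immediately to its left and the $B$ on site $2K+2$ immediately to its right. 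Applying the Gauss-law operator $\RR(g)^{[2K]}\otimes\CC(g)^{[2K+1]}\otimes\LL(g)^{[2K+2]}$ of \cref{def:BABSymm} and using the two halves of \cref{eq:BLRTGroupransformation}, the left $B$ acquires a factor $X(g)$ on the bond it shares with $A^i$ and the right $B$ a factor $X(g)^{-1}$ on the bond it shares with $A^i$; hence, if $A$ obeys
\begin{equation*}
\sum_{i'}\CC(g)_{i i'} A^{i'} = X(g)^{-1} A^{i} X(g)\ ,\qquad \forall g\in G,\ \forall i,
\end{equation*}
which is precisely \cref{eq:AintertwinesExample}, the three transformed matrices around that site collapse back to $B^{j}(X(g)X(g)^{-1})A^{i}(X(g)X(g)^{-1})B^{j'}=B^{j}A^{i}B^{j'}$, so every coefficient of $\ket{\psi^N_{AB}}$ is unchanged and the local symmetry holds. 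I would present this step in graphical notation, in the style of \cref{eq:BLRTGroupransformation}.

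The same relation also yields the global symmetry of $\ket{\psi^N_A}$: inserted into $\CC(g)^{\otimes N}\ket{\psi^N_A}$ it replaces every $A^{i_k}$ by $X(g)^{-1}A^{i_k}X(g)$, the factors $X(g)^{\pm1}$ between consecutive matrices cancel, and cyclicity of the trace disposes of the two remaining ones, returning $\ket{\psi^N_A}$ --- the easy direction of the global-symmetry classification reviewed in \cref{sec:GlobSymm}. Thus it remains only to produce one pair $(\CC,A)$ solving the displayed relation, and here one can be completely explicit: take $\mathcal{H}_A:=\mathbb{C}^{D}\otimes\mathbb{C}^{D}$ with $D$ the bond dimension of $B$ (and hence the dimension of $X$), let $\CC(g):=\overline{X(g)}\otimes X(g)$ --- a \emph{unitary} (non-projective) representation of $G$, the multipliers $\gamma^{-1}$ of $\overline{X}$ and $\gamma$ of $X$ multiplying to $1$ --- and let $A$ be the tensor whose matrices, in the composite index $i=(\alpha,\beta)$, are the matrix units $A^{(\alpha,\beta)}=\ket{\alpha}\bra{\beta}$. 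Then $A$ is injective, consists of $D\times D$ matrices matching the square shape of $B$, and a one-line computation using $\CC(g)_{(\alpha,\beta),(\alpha',\beta')}=\overline{X(g)_{\alpha\alpha'}}\,X(g)_{\beta\beta'}$ (and unitarity of $X(g)$) verifies the intertwining relation; moreover $\CC(g)$ is non-trivial whenever $X(g)$ fails to be a scalar for some $g$, as anticipated in the discussion preceding the proposition.

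The only genuinely delicate point is the bookkeeping in the first step: one must index the alternating $2N$-site chain so that the $B$ immediately to the left of a given $A^i$ is the one acted on by $\RR(g)$ and the $B$ immediately to its right the one acted on by $\LL(g)$, and then check that the two bond factors thereby produced, $X(g)$ and $X(g)^{-1}$, really are inverse to one another --- which relies crucially on it being the \emph{same} $X(g)$ in both halves of \cref{eq:BLRTGroupransformation}, so that the three-site invariance becomes the purely local relation above rather than an $N$-dependent constraint. In the explicit construction there is an analogous small subtlety, namely that one must pair $\overline{X}\otimes X$ (rather than $X\otimes\overline{X}$) with the matrix units, and in the right order, so as to reproduce the conjugation $X(g)^{-1}(\cdot)X(g)$ of \cref{eq:AintertwinesExample}. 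Everything else is routine.
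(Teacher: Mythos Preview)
Your argument is correct. The reduction via \cref{thm:OnlyFieldGaugeGroup} to the single intertwining relation $\sum_{i'}\CC(g)_{ii'}A^{i'}=X(g)^{-1}A^{i}X(g)$ is exactly the right move, and the verification that this relation implies both the local symmetry of $\ket{\psi^N_{AB}}$ and the global symmetry of $\ket{\psi^N_A}$ is clean. Your explicit solution $\mathcal{H}_A=\mathbb{C}^D\otimes\mathbb{C}^D$, $\CC(g)=\overline{X(g)}\otimes X(g)$, $A^{(\alpha,\beta)}=\ket{\alpha}\bra{\beta}$ also checks out: the multipliers cancel so $\CC$ is a genuine unitary representation, and unitarity of $X(g)$ turns $(X(g)^{-1})_{\alpha'\alpha}$ into $\overline{X(g)_{\alpha\alpha'}}$, matching the matrix entries of $\CC(g)$.

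The paper takes a different, more structured route to the explicit construction. It first decomposes $X(g)=\oplus_{k=1}^{s}D_\gamma^{j_k}(g)$ into irreducible projective representations, then for each $k$ selects one irreducible label $J(k)$ appearing in $\overline{D_\gamma^{j_k}}\otimes D_\gamma^{j_k}$ and builds the corresponding elementary $A^{(k)}$ out of Clebsch--Gordan coefficients (the construction of \cref{ex:Aconstr1}); the matter space is $\mathcal{H}_A=\oplus_k\mathcal{H}_{J(k)}$ and $A$ is block diagonal in the virtual indices, nonzero only on the $(k,k)$ blocks. Your construction is the ``maximal'' one (all matrix units, hence an injective $A$ and the full $\overline{X}\otimes X$ on the physical leg), whereas the paper's is ``minimal'' (one irreducible per block of $X$). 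Your approach is more elementary --- no irreducible decomposition of $X$, no Clebsch--Gordan machinery --- and yields an injective $A$ for free; the paper's approach, on the other hand, stays within the language of elementary $A$ blocks and the Wigner--Eckart classification developed in \cref{lem:WigEckClassA}, and produces a matter Hilbert space of smaller dimension.
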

	For a  restricted class of   $B$ tensors, \textit{any}   $A$ and  $\CC(g)$ that couple to it (satisfy \cref{def:BABSymm}) will have a global symmetry:

	\begin{restatable}{propositionRE}{InverseGaugeNES} \label{prop:InverseGaugeNES}
		Let $B,\, \RR(g)$ and $\LL(g)$ be as in \cref{thm:OnlyFieldGaugeGroup} and in addition let  $ span\{ B^{k;m,n} \left|\right. k,m,n \} $ contain the identity matrix  (e.g.\ \cref{eq:BErez}). Let $A$ and $\CC(g)$ be such that  the MPV generated by $AB$ has a local symmetry with respect to $\RR(g)\otimes\CC(g)\otimes\LL(g)$ (\cref{def:BABSymm}). Then $\ket{\psi_A^N}$ has a global symmetry with respect to  $\CC(g)$. If in addition $A$ is in CF with the same block structure as $B$ (\cref{eq:BsBlockStructure}), then $A$ transforms as:			
		\begin{equation*} 
			\begin{tikzpicture}[baseline=-1mm]
			\pic[pic text = $A$] at (0,0) {mpstensorBIG};
			\pic[pic text = $\CC(g)$] at (0,1) {verticalmatrixBIG};
			\end{tikzpicture} \ = \ 
			\begin{tikzpicture}[baseline=-1mm]
			\pic[pic text = $\myinv{{X}(g)}$] at (0,0) {horizontalmatrixWIDE};
			\pic[pic text = $A$] at (1.15,0) {mpstensorBIG};
			\pic[pic text = $X(g)$] at (2.3,0) {horizontalmatrixWIDE};
			\end{tikzpicture} \ ,
		\end{equation*} 
		with the same $X(g)$ from \cref{thm:OnlyFieldGaugeGroup}.
	\end{restatable}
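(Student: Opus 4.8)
\emph{Strategy.} The plan is to convert the local symmetry of $\ket{\psi^N_{AB}}$ into a statement about the matter tensor $A$ alone. The two ingredients are: (i) $B$ already satisfies the transformation rule \cref{eq:BLRTGroupransformation} (by \cref{thm:OnlyFieldGaugeGroup}), which lets me absorb the $\RR(g)$ and $\LL(g)$ parts of the symmetry into the virtual matrices $X(g)$; and (ii) $\II\in\mathrm{span}\{B^{k;m,n}\}$, which lets me ``remove'' all $B$-tensors from a coefficient of $\ket{\psi^N_{AB}}$ and be left with a coefficient of $\ket{\psi^N_A}$.

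\emph{Step 1: rewriting the symmetry.} By \cref{eq:BLRTGroupransformation}, acting with $\RR(g)$ on the physical index of $B$ replaces $B^j$ by $B^jX(g)$, and acting with $\LL(g)$ replaces it by $X(g)^{-1}B^j$. Inserting $\RR(g)\otimes\CC(g)\otimes\LL(g)$ on a triple of sites and applying these two identities, the local segment $B^{j_K}A^{i_{K+1}}B^{j_{K+1}}$ of a coefficient of $\ket{\psi^N_{AB}}$ becomes $B^{j_K}\bigl(X(g)(\CC(g)\cdot A)^{i_{K+1}}X(g)^{-1}\bigr)B^{j_{K+1}}$, with $(\CC(g)\cdot A)^i:=\sum_{i'}\CC(g)_{ii'}A^{i'}$. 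Thus \cref{def:BABSymm} says exactly that replacing, at a single matter site, the tensor $A$ by $\widehat A_g$ with $\widehat A_g^{\,i}:=X(g)(\CC(g)\cdot A)^iX(g)^{-1}$ leaves every coefficient of $\ket{\psi^N_{AB}}$ unchanged, for all $g\in G$ and all $N$.

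\emph{Step 2: global symmetry of the matter MPV.} The composition of the one-triple symmetries at all $N$ matter sites, taken with the same $g$, is again a symmetry (the operators acting on a shared $B$-site are $\RR(g)$ and $\LL(g)$, which commute), and it sends $A^{i_m}\mapsto(\CC(g)\cdot A)^{i_m}$ at every matter site and $B^{j_m}\mapsto X(g)^{-1}B^{j_m}X(g)$ at every gauge-field site; hence $\tr\bigl(\prod_m\widehat A_g^{\,i_m}B^{j_m}\bigr)=\tr\bigl(\prod_m A^{i_m}B^{j_m}\bigr)$ for all index choices. Contracting every gauge-field index against weights $c_{k,m,n}$ with $\sum_{k,m,n}c_{k,m,n}B^{k;m,n}=\II$ (which exist by hypothesis) replaces each $B^{j_m}$ by $\II$; the $X(g)$'s then cancel cyclically under the periodic trace, giving $\tr\bigl(\prod_m(\CC(g)\cdot A)^{i_m}\bigr)=\tr\bigl(\prod_m A^{i_m}\bigr)$ for all $\{i\}$ and $N$. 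This is $\CC(g)^{\otimes N}\ket{\psi^N_A}=\ket{\psi^N_A}$, i.e.\ the global symmetry of \cref{def:OneDofGlobalSymm}.

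\emph{Step 3: the intertwining relation, and the main obstacle.} Contracting the $B$-tensors in the same way directly in the \emph{one-site} modification of Step 1 shows that the matter MPV with a single $A$ replaced by $\widehat A_g$ equals $\ket{\psi^N_A}$ for all $N$. Both tensors are in CF with the block structure \cref{eq:BsBlockStructure}: the maps $A\mapsto\CC(g)\cdot A$ and conjugation by $X(g)=\oplus_j\oplus_q X_j(g)$ (\cref{eq:XStruct1}) preserve it, and in each $j$-block $\widehat A_g$ is an invertible conjugate of $\CC(g)\cdot A_j$, which has the same CP map as $A_j$ (\cref{prop:KrusMix}) and is therefore normal (\cref{prop:NormalIFFPrimitive}). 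Invoking uniqueness of the CF decomposition of an MPV — same normal blocks, same constants $\mu_{j,q}$ on the two sides — the equality reduces to a block-by-block one, and for a normal block it is immediate: blocking enough copies makes $A_j$ injective, so its products span $\mathcal{M}_{D_j\times D_j}$; cyclicity of the trace turns the one-defect equality into $\tr(\widehat A_{g,j}^{\,i}P)=\tr(A_j^{\,i}P)$ for all $P\in\mathcal{M}_{D_j\times D_j}$, whence $\widehat A_{g,j}^{\,i}=A_j^{\,i}$ by non-degeneracy of the trace pairing. Reassembling the blocks, $\widehat A_g=A$, i.e.\ $\sum_{i'}\CC(g)_{ii'}A^{i'}=X(g)^{-1}A^iX(g)$, the displayed relation with the same $X(g)$ from \cref{thm:OnlyFieldGaugeGroup}. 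The delicate point is precisely this last reduction: one must check that the CF decomposition of the (non-translation-invariant) one-defect MPV matches that of $\ket{\psi^N_A}$ normal block by normal block, including when a normal block occurs with several equal constants $\mu_{j,q}$ — and it is here that the hypothesis that $A$ has the block structure of $B$ (so that the repeated copies all carry the same $X_j(g)$ and the defect inside them is the same tensor) does its work. Everything else — Step 1's rewriting, the cyclic cancellation of $X(g)$, and the elimination of the $B$-tensors against a preimage of $\II$ — is routine once the left/right conventions of \cref{rem:changeOfBasis} and \cref{eq:BLRTGroupransformation} are fixed.
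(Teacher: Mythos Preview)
Your proof is correct and follows essentially the same route as the paper: use \cref{eq:BLRTGroupransformation} to absorb $\RR(g),\LL(g)$ into virtual $X(g)$'s, contract every $B$ against a preimage of $\II$ to obtain first the global symmetry of $\ket{\psi^N_A}$ and then the one-defect equality, and finally deduce the intertwining relation block-by-block. The only difference is that where you sketch the block-by-block reduction in Step~3 (span property of normal blocks plus non-degeneracy of the trace pairing), the paper simply invokes \cref{lem:BAAAA_CFequ}, which packages exactly that argument.
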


	The MPVs described above may be combined in a way that allows coupling matter and gauge fields such that each of them could be invariant on its own, as in the conventional well known scenarios of gauge theories. However, as we shall demonstrate in the next section, this is not the most general setting of a local symmetry involving these two building blocks.

\subsection{Matter and gauge field MPV} \label{sub:Summary_MatterAndGaugeField}
 	
	We show that a local symmetry for a combined matter and  gauge field MPV $\ket{\psi^N_{AB}}$ (defined in \cref{sub:MandGFSetting}) generated by tensors $A$ and $B$ (in an appropriate form), 
	 implies the following transformation relations for $A$ and $B$:
	\begin{equation*}  
		\begin{tikzpicture}[baseline=-1mm]
		\pic[pic text = $B$] at (0,0) {mpstensorBIG};
		\pic[pic text = $\RR(g)$] at (0,1) {verticalmatrixBIG};
		\end{tikzpicture} \  = \ 
		\begin{tikzpicture}[baseline=-1mm]
		\pic[pic text = $B$] at (0,0) {mpstensorBIG};
		\pic[pic text =  $X(g)$] at (1,0) {horizontalmatrixBIG};
		\end{tikzpicture} \;\;\;\;\;\;\;\;\; ;  \;\;\;\;\;\;\;\;\; 
		\begin{tikzpicture}[baseline=-1mm]
		\pic[pic text = $B $] at (0,0) {mpstensorBIG};
		\pic[pic text = $\LL(g)$] at (0,1) {verticalmatrixBIG};
		\end{tikzpicture} \   = \ 
		\begin{tikzpicture}[baseline=-1mm]
		\pic[pic text =  $\myinv{Y{(g)}}$] at (0,0) {horizontalmatrixWIDE};
		\pic[pic text = $B$] at (1.15,0) {mpstensorBIG};
		\end{tikzpicture}
	\end{equation*}

	\begin{equation}  \label{eq:AGroupTransXXX} 
		\begin{tikzpicture}[baseline=-1mm]
		\pic[pic text = $A$] at (0,0) {mpstensorBIG};
		\pic[pic text = $\CC(g)$] at (0,1) {verticalmatrixBIG};
		\end{tikzpicture} \ = \ 
		\begin{tikzpicture}[baseline=-1mm]
		\pic[pic text =  $\myinv{X{(g)}}$] at (0,0) {horizontalmatrixWIDE};
		\pic[pic text = $A$] at (1.15,0) {mpstensorBIG};
		\pic[pic text =  $Y(g)$] at (2.3,0) {horizontalmatrixWIDE}; \ ,
		\end{tikzpicture}
	\end{equation} 
	where  $X(g)$ and $Y(g)$ are projective representations from the same cohomology class. 
	As described in the previous section, the relation for $B$ allows to infer the structure of the Hilbert space $\mathcal{H}_B$ associated with the gauge field degree of freedom. As before, $\mathcal{H}_B$ splits into right and left parts. The structure of the tensor $B$ can be derived in the same way as in the previous section. 	Each elementary block of the tensor $A$, obtained by projecting \cref{eq:AGroupTransXXX} to irreducible representation spaces, satisfies a vector operator relation, and is therefore determined by the Wigner-Eckart theorem (\cref{thm:GenWigEck}). 
	
	In the general case, the structure described in this section allows for ``unconventional'' gauge symmetries where a local symmetry exists for the matter and gauge field MPV but none of the constituents has a symmetry on its own, i.e., the gauge field MPV does not have a local symmetry and the matter MPV does not have a global one. We construct an explicit example of such a case (see \cref{ex:locSymNotGlob}).
		 	 	
 	Finally we use the known results about global symmetries in MPV  \cite{Sanz2009} to find a class of matter MPVs with a global symmetry that can be gauged by adding a gauge field degree of freedom. We shall now state the above results in detail.
	
 \bigskip

	Let $\ket{\psi^N_{AB}}$ be a MPV generated by tensors $A$ and $B$ and let $\RR(g),\,\CC(g)$ and $\LL(g)$ be  as defined in \cref{sub:MandGFSetting}. 	 

 	\begin{restatable}[Matter and gauge field MPV with a local symmetry]{theoremRE}{VirtualGroupTrans} \label{thm:VirtualGroupTrans}
	 	Let both $BA$ and $AB$ be normal tensors in CFII and let
	 	 $\CC(g)$ and   $\RR(g),\LL(g)$
	 	be unitary and projective   representations (with inverse multipliers)  of a group $G$ respectively. Let $\ket{\psi^N_{AB}}$ be a  MPV with a local symmetry with respect to $\RR(g)\otimes\CC(g)\otimes\LL(g)$ (\cref{def:BABSymm}).
	 	 Then there exist  projective   representations $X(g)$ and $Y(g)$ on $\mathbb{C}^{D_1}$ and $\mathbb{C}^{D_2}$ respectively, such that $X(g)$ has the same multiplier as $\RR(g)$, and $Y(g)$ - the inverse multiplier to that of $\LL(g)$. The tensors  $A$ and $B$ transform as follows:
	 	\begin{equation} \label{eq:BRightGroupTrans}
		 	\begin{tikzpicture}[baseline=-1mm]
		 	\pic[pic text = $B$] at (0,0) {mpstensorBIG};
		 	\pic[pic text = $\RR(g)$] at (0,1) {verticalmatrixBIG};
		 	\end{tikzpicture} \  = \ 
		 	\begin{tikzpicture}[baseline=-1mm]
		 	\pic[pic text = $B$] at (0,0) {mpstensorBIG};
		 	\pic[pic text =  $X(g)$] at (1,0) {horizontalmatrixBIG};
		 	\end{tikzpicture} \;\;\;\;\;\;\;\;\; ;  \;\;\;\;\;\;\;\;\; 
		 	\begin{tikzpicture}[baseline=-1mm]
		 	\pic[pic text = $B $] at (0,0) {mpstensorBIG};
		 	\pic[pic text = $\LL(g)$] at (0,1) {verticalmatrixBIG};
		 	\end{tikzpicture} \   = \ 
		 	\begin{tikzpicture}[baseline=-1mm]
		 	\pic[pic text =  $\myinv{Y{(g)}}$] at (0,0) {horizontalmatrixWIDE};
		 	\pic[pic text = $B$] at (1.15,0) {mpstensorBIG};
		 	\end{tikzpicture}
	 	\end{equation}

	 	\begin{equation} \label{eq:AGroupTrans} 
		 	\begin{tikzpicture}[baseline=-1mm]
		 	\pic[pic text = $A$] at (0,0) {mpstensorBIG};
		 	\pic[pic text = $\CC(g)$] at (0,1) {verticalmatrixBIG};
		 	\end{tikzpicture} \ = \ 
		 	\begin{tikzpicture}[baseline=-1mm]
		 	\pic[pic text =  $\myinv{X{(g)}}$] at (0,0) {horizontalmatrixWIDE};
		 	\pic[pic text = $A$] at (1.15,0) {mpstensorBIG};
		 	\pic[pic text =  $Y(g)$] at (2.3,0) {horizontalmatrixWIDE};
		 	\end{tikzpicture}
	 	\end{equation} 
	\end{restatable}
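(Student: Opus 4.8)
The plan is to follow the usual route for symmetric MPS — turn the symmetry of the state into relations for the generating tensors by factoring out an injective bulk, then read off the virtual representations — while keeping in mind that the symmetry operator of \cref{def:BABSymm} straddles two unit cells of the two-site-periodic MPV, so no single blocking turns it into a bona fide global symmetry.

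First I would extract two consequences of the local symmetry; throughout, for a tensor $B$ write $(\RR(g)B)^j:=\sum_{j'}\RR(g)_{jj'}B^{j'}$ for the tensor obtained by rotating the physical index, and similarly for $(\CC(g)A)^i$ and $(\LL(g)B)^j$. (a) Applying the three-site operator simultaneously at \emph{every} cell, all with the same $g$, leaves $\ket{\psi^N_{AB}}$ invariant; for the blocked tensor with matrices $C^{(i,j)}=A^iB^j$ (bond dimension $D_1$) this says that rotating the $A$-leg by $\CC(g)$ and the $B$-leg by $\RR(g)\LL(g)$ yields a tensor generating the same MPV. Since $AB$ is normal, two normal tensors generating the same MPV differ by a gauge transformation on the virtual space, unique up to a scalar — and here the scalar must be $1$ because the identity holds for all $N$ — so the rotated tensor equals $X(g)^{-1}C^{(i,j)}X(g)$ for an invertible $X(g)$ on $\mathbb{C}^{D_1}$; the same argument applied to the block with matrices $B^jA^i$ produces an invertible $Y(g)$ on $\mathbb{C}^{D_2}$ obeying the analogous conjugation relation. (b) Applying the operator at the single cell $K=1$ only and expanding the coefficients of $\ket{\psi^N_{AB}}$, the product of the $2N-3$ untouched tensors is $N-2$ copies of the $AB$-block followed by a single $A$, and for $N$ large this spans $\mathcal{M}_{D_1\times D_2}$ — here one uses that $AB$ and $BA$ are normal, which makes the rows of the $A^i$ span $\mathbb{C}^{D_2}$ and the rows of the $B^j$ span $\mathbb{C}^{D_1}$. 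Tracing against this spanning set collapses the state identity into the three-site operator identity $(\RR(g)B)^{j_1}(\CC(g)A)^{i}(\LL(g)B)^{j_2}=B^{j_1}A^{i}B^{j_2}$, valid for all physical indices and all $g\in G$.

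The core step is then to split $X(g)$ and $Y(g)$ off from the two sublattices. Blocking the three-site identity — i.e.\ using the symmetry at cells $1,\dots,L$ with a common $g$ — and substituting the conjugation relation of (a) for the $L-1$ interior $AB$-blocks turns it into $(\RR(g)B)^{j_1}\,X(g)^{-1}MX(g)\,(\CC(g)A)^{i}(\LL(g)B)^{j_2}=B^{j_1}\,M\,A^{i}B^{j_2}$, valid for \emph{every} $M\in\mathcal{M}_{D_1\times D_1}$. Specializing $M$ to matrix units makes both sides rank-one outer products $uv^{\top}$; since $uv^{\top}=u'v'^{\top}$ with both sides nonzero forces $u\propto u'$ and $v\propto v'$, one obtains $(\RR(g)B)^j=\lambda\,B^j X(g)$ for a single constant $\lambda$, which I absorb into $X(g)$. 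Grouping the same blocked identity into transformed $BA$-blocks instead and substituting the relation of (a) for $Y(g)$, the identical argument gives $(\LL(g)B)^j=Y(g)^{-1}B^j$ together with $(\RR(g)B)^{j}(\CC(g)A)^{i}=B^{j}A^{i}Y(g)$; combining this with $(\RR(g)B)^j=B^jX(g)$ and cancelling $B^{j}$ on the left — legitimate because the rows of the $B^j$ span $\mathbb{C}^{D_1}$ — yields $X(g)(\CC(g)A)^{i}=A^{i}Y(g)$. These are precisely \cref{eq:BRightGroupTrans} and \cref{eq:AGroupTrans}.

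Finally I would check that $X$ and $Y$ are projective representations with the stated multipliers: from $(\RR(g)B)^j=B^jX(g)$ and $\RR(g)\RR(h)=\gamma(g,h)\RR(gh)$ one gets $B^j\bigl(X(g)X(h)-\gamma(g,h)X(gh)\bigr)=0$ for all $j$, hence $X(g)X(h)=\gamma(g,h)X(gh)$ since the rows of the $B^j$ span $\mathbb{C}^{D_1}$; the same computation with $\LL$ (multiplier $\gamma^{-1}$) shows $Y$ has multiplier $\gamma$ as well, so $X$ and $Y$ share a multiplier and in particular lie in the same cohomology class. I expect the real difficulty to be the disentangling of the two sublattices in the core step: because the symmetry operator overlaps two unit cells, $X$ and $Y$ cannot be read off from a single application of the MPS uniqueness theorem, so one must combine the two ``diagonal'' conjugation relations from (a) with the single-cell operator identity from (b) — blocked in two complementary ways — and verify that the scalar ambiguities in $X$ and $Y$ are fixed consistently, on top of the unglamorous but error-prone bookkeeping of which virtual space, $\mathbb{C}^{D_1}$ or $\mathbb{C}^{D_2}$, each factor lives on.
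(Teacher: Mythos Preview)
Your argument is correct, but it takes a different and more laborious route than the paper's.

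The paper's proof is short and modular: it observes that the three-site local symmetry of $\ket{\psi^N_{AB}}$ with respect to $\RR(g)\otimes\CC(g)\otimes\LL(g)$ is, when viewed on the blocked MPV with unit cell $AB$, precisely a two-site ``pure gauge'' symmetry in the sense of \cref{def:BBSymm}, with $\tilde{\RR}(g)=\II\otimes\RR(g)$ acting on the $B$-leg of one cell and $\tilde{\LL}(g)=\CC(g)\otimes\LL(g)$ acting on the next cell. Hence \cref{thm:OnlyFieldGaugeGroup} applies directly and already delivers the \emph{split} relations $A(\RR(g)B)=ABX(g)$ and $(\CC(g)A)(\LL(g)B)=X(g)^{-1}AB$, with $X(g)$ a projective representation with the multiplier of $\RR(g)$. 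Repeating this with the $BA$ blocking and $\tilde{\RR}(g)=\RR(g)\otimes\CC(g)$, $\tilde{\LL}(g)=\LL(g)\otimes\II$ produces $Y(g)$ with the stated multiplier. Then one simply contracts from the left with $BA\cdots B$ (resp.\ from the right with $AB\cdots A$) and uses normality of $BA$ (resp.\ $AB$) to strip off the extra $A$ factor, obtaining \cref{eq:BRightGroupTrans}; substituting back gives \cref{eq:AGroupTrans}.

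Your proof, by contrast, only extracts the weaker \emph{conjugation} relations $X^{-1}(AB)X$ and $Y^{-1}(BA)Y$ from the global symmetry, and then has to reconstruct the left/right split from scratch --- via the blocked three-site tensor identity and the rank-one analysis. This works (the bookkeeping of zero cases and scalar ambiguities you flag can indeed be handled as you indicate), and it has the virtue of being self-contained. But you are essentially re-deriving, in this specific setting, the splitting step that \cref{thm:OnlyFieldGaugeGroup} (via \cref{thm:InvFieldTensor}) already performs in general. The paper's reduction to the pure-gauge theorem buys brevity and gives the projective-representation property and the multiplier identification for free; your approach avoids invoking that theorem at the cost of a longer combinatorial argument.
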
	 

	 In the following  proposition we show that given arbitrary tensors $A$ and $B$, generating a MPV $ \ket{\psi^N_{AB}} $, it is possible to describe the same MPV as a linear combination of MPVs that satisfy the normality condition in \cref{thm:VirtualGroupTrans}:
	  	  
	\begin{restatable}{propositionRE}{RedToNormal}\label{prop:RedToNormal}
	  	Let $\ket{\psi^N_{AB}}$ be a MPV  generated by arbitrary tensors $A$ and $B$. Then there exist tensors $\{A_\chi\}$ and $\{B_\chi\}$, and there exists  $b\in \mathbb{N} $ such that for all $\chi$ both $A_\chi B_\chi$ and $B_\chi A_\chi$ are normal tensors and  $\forall N\in \mathbb{N}$ $\ket{\psi^{ N}_{AB_{\times b}}}= \sum_\chi \mu_\chi^N \ket{\psi^N_{A_\chi B_\chi}}$, where $\mu_\chi$ are constants and $AB_{\times b}$ is the tensor obtained by blocking $b$ copies of the tensor $AB$.	  	
	\end{restatable}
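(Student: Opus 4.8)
The plan is to bootstrap from the single‑tensor canonical‑form theory (\cref{prop:wlogCF}). First I would regard the pair $AB$ as one ordinary MPS tensor $\mathbf{C}$ on $\mathcal{H}_A\otimes\mathcal{H}_B$, with matrices $\mathbf{C}^{(i,j)}=A^iB^j$ of size $D_1$ (the left bond dimension of $A$), so that $\ket{\psi^N_{AB}}=\ket{\psi^N_{\mathbf C}}$ for every $N$. Applying \cref{prop:wlogCF} to $\mathbf{C}$ yields a blocking exponent $b$ (chosen so as to remove the periodic irreducible blocks of $E_{\mathbf C}=E_A\circ E_B$) and a tensor $\tilde{\mathbf C}$ in CF, which we write through its BNT expansion (\cref{eq:BNTexpansion}) as $\bigoplus_\chi\mu_\chi C_\chi$ with each $C_\chi$ normal in CFII, such that $\ket{\psi^N_{AB_{\times b}}}=\sum_\chi\mu_\chi^N\ket{\psi^N_{C_\chi}}$ (here $AB_{\times b}=\mathbf C_{\times b}$ has matrices $A^{i_1}B^{j_1}\cdots A^{i_b}B^{j_b}$). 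It then suffices to present each $C_\chi$ as a product $A_\chi B_\chi$ of a bipartite pair in such a way that $B_\chi A_\chi$ is normal as well.

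The factorisation survives the reduction because the passage from $\mathbf C_{\times b}$ to a normal block acts only on the outer bond. Concretely, after a similarity transformation $W$ on $\mathbb{C}^{D_1}$ the matrices of $C_\chi$ are the $\chi$‑th diagonal block, $P_{V_\chi}W^{-1}\big(A^{i_1}B^{j_1}\cdots A^{i_b}B^{j_b}\big)WP_{V_\chi}$, and both $W$ and $P_{V_\chi}$ leave untouched the internal bond $\mathbb{C}^{D_2}$ joining the trailing $B^{j_b}$ to the next copy. Grouping the leading factors $P_{V_\chi}W^{-1}A^{i_1}B^{j_1}\cdots A^{i_b}$ into a tensor $A_\chi$ (left bond $V_\chi$, right bond $\mathbb{C}^{D_2}$, physical index $(i_1,j_1,\ldots,i_b)$) and the trailing $B^{j_b}WP_{V_\chi}$ into a tensor $B_\chi$ (left bond $\mathbb{C}^{D_2}$, right bond $V_\chi$, physical index $j_b$) then gives $C_\chi=A_\chi B_\chi$; hence $A_\chi B_\chi$ is normal by construction and $\ket{\psi^N_{A_\chi B_\chi}}$ is precisely the $\chi$‑summand of $\ket{\psi^N_{AB_{\times b}}}$. (The grouping is asymmetric, $A_\chi$ absorbing $b$ copies of $A$ and $b-1$ of $B$, but this is inessential.)

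The main obstacle is making $B_\chi A_\chi$ normal, which is not handled by the single‑tensor reduction and requires a two‑sided compression of the internal bond $\mathbb{C}^{D_2}$. I would apply two compressions that change neither $C_\chi=A_\chi B_\chi$ nor the generated MPV: first restrict $\mathbb{C}^{D_2}$ to $U_\chi:=\sum_y\mathrm{range}(B_\chi^y)$ (legitimate since every $B_\chi^y$ already maps into $U_\chi$), then, on $U_\chi$, quotient out $\mathcal{K}:=\bigcap_x\ker(A_\chi^x)$ (legitimate since each $A_\chi^x$ annihilates $\mathcal K$, so inserting the orthogonal projection $I-P_{\mathcal K}$ on every internal bond in the defining trace leaves all coefficients unchanged). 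After these two steps the matrices of $B_\chi$ have no common left‑kernel, those of $A_\chi$ no common kernel on the reduced bond, and the procedure then stabilises. Writing $E_{A_\chi B_\chi}=E_{A_\chi}\circ E_{B_\chi}$ and $E_{B_\chi A_\chi}=E_{B_\chi}\circ E_{A_\chi}$ as completely positive maps between $\mathcal{B}(V_\chi)$ and $\mathcal{B}(U_\chi)$, and using $E_{B_\chi A_\chi}^{\,n+1}=E_{B_\chi}\circ E_{A_\chi B_\chi}^{\,n}\circ E_{A_\chi}$: for any nonzero positive semidefinite $X$ on the reduced bond the no‑common‑kernel property of $A_\chi$ gives $E_{A_\chi}(X)\neq 0$; primitivity of $E_{A_\chi B_\chi}=E_{C_\chi}$ — which holds because $C_\chi$ is normal (\cref{prop:NormalIFFPrimitive}) — provides a uniform $n$ with $E_{C_\chi}^{\,n}\!\big(E_{A_\chi}(X)\big)>0$; and the no‑common‑left‑kernel property of $B_\chi$ makes $E_{B_\chi}$ of any positive‑definite operator positive definite. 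Thus $E_{B_\chi A_\chi}^{\,n+1}(X)>0$ for all nonzero $X\ge 0$, so $E_{B_\chi A_\chi}$ is primitive and $B_\chi A_\chi$ is normal (\cref{prop:NormalIFFPrimitive}).

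Collecting the $A_\chi,B_\chi$ produced in this way, together with the blocking exponent $b$ and the constants $\mu_\chi$ from the first step, yields the claimed identity $\ket{\psi^{N}_{AB_{\times b}}}=\sum_\chi\mu_\chi^N\ket{\psi^N_{A_\chi B_\chi}}$ with $A_\chi B_\chi$ and $B_\chi A_\chi$ both normal. I expect the genuinely delicate point to be the third step — checking that the two compressions of the internal bond preserve both $C_\chi$ and the MPV while producing the no‑common‑kernel conditions, and then running the primitivity argument; this is the bipartite ingredient with no counterpart in the ordinary canonical form of a single tensor.
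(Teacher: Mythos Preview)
Your argument is correct and takes a genuinely different route from the paper. The paper proceeds by a direct iterative construction on the pair $(A,B)$: it alternately finds the minimal invariant subspaces of the families $\{A^iB^j\}$ and $\{B^jA^i\}$, restricts both tensors via the corresponding partial isometries (so $A_\alpha^i=\hat P_\alpha A^i$, $B_\alpha^j=B^j\hat P_\alpha^\dagger$, then the analogous step on the other side), and iterates until both $A_\chi B_\chi$ and $B_\chi A_\chi$ are irreducible; periodicity is removed by the symmetric blocking $\tilde A^{ijk}=A^iB^jA^k$, $\tilde B^{lmn}=B^lA^mB^n$, and one loops through these two steps until the bond dimension stabilises. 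You instead offload the entire $AB$ side to the single-tensor theory (\cref{prop:wlogCF}), factorise the resulting normal blocks asymmetrically, and handle the $BA$ side in one shot via the internal-bond compression and the identity $E_{B_\chi A_\chi}^{\,n+1}=E_{B_\chi}\circ E_{A_\chi B_\chi}^{\,n}\circ E_{A_\chi}$. Your route is more modular and avoids the open-ended alternation; the paper's route keeps the $A/B$ split symmetric at every stage, which meshes slightly more cleanly with the blocking of the symmetry operators described in the remark following the proposition.

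Two small points to tighten. First, \cref{prop:wlogCF} as stated only asserts the \emph{existence} of a CF tensor generating the same MPV; it does not say that the normal blocks arise as $P_{V_\chi}W^{-1}\mathbf C_{\times b}\,WP_{V_\chi}$. The factorisation you need really uses the constructive content of that proposition (restriction to invariant subspaces via partial isometries, which indeed touches only the outer bond and hence preserves the product form), so you should invoke the construction rather than the bare statement. Second, normality in \cref{def:NormalTensor} also requires spectral radius one for $E_{B_\chi A_\chi}$: since $E_{A_\chi B_\chi}=E_{A_\chi}\circ E_{B_\chi}$ and $E_{B_\chi A_\chi}=E_{B_\chi}\circ E_{A_\chi}$ share their nonzero spectrum this follows immediately from the normality of $C_\chi$, but it should be stated --- the paper in fact isolates exactly this observation as a separate lemma.
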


 	Next we show that if  $\ket{\psi^N_{AB}} = \sum_\chi \mu^N_\chi \ket{\psi^N_{A_\chi B_\chi}}$  has a local symmetry with respect to   $\RR(g)\otimes\CC(g) \otimes\LL(g)$, then every normal component $\ket{\psi^N_{A_\chi B_\chi}}$ must have the same symmetry. We can then apply \cref{thm:VirtualGroupTrans} to each of the components.
	 	
 	\begin{restatable}{propositionRE}{EveryNormCompInv} \label{prop:EveryNormCompInv}
 		Let $\ket{\psi^N_{AB}} = \sum_\chi \mu^N_\chi\ket{\psi^N_{A_\chi B_\chi}}$ where   both $A_\chi B_\chi$ and $B_\chi A_\chi$ are normal tensors. Let $O$ be a local operator acting on a fixed number of adjacent sites. If $\forall N$ $O$ leaves the MPV invariant:
 		\begin{equation*}
	 		O\otimes \II|_{rest} \ket{\psi^N_{AB}} = \ket{\psi^N_{AB}} \ , 
 		\end{equation*} 
 		then $O$ leaves every component invariant: 
 		\begin{equation*}
	 		O\otimes \II|_{rest} \ket{\psi^N_{A_\chi B_\chi}} = \ket{\psi^N_{A_\chi B_\chi}}\, \forall \chi \ . 
 		\end{equation*} 
 	\end{restatable}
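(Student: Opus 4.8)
The plan is to rephrase the statement in terms of matrix product vectors carrying a localized impurity, and then to show that such impurity vectors, built on distinct normal ``bulks'' and weighted by the scalars $\mu_\chi^{\,N}$, cannot cancel unless each impurity is trivial. First I would fix, using the translation invariance of $\ket{\psi^N_{AB}}$, a window $W$ of $\ell$ consecutive sites on which $O$ acts, with $\ell$ independent of $N$, and set
\begin{equation*}
  \ket{\phi^N_\chi}:=\bigl(O\otimes\II|_{rest}\bigr)\ket{\psi^N_{A_\chi B_\chi}}-\ket{\psi^N_{A_\chi B_\chi}} \ .
\end{equation*}
Contracting $O$ into the $\ell$ physical legs it touches, $\ket{\phi^N_\chi}$ is again a matrix product vector: its generating data equals the normal tensor $C_\chi:=A_\chi B_\chi$ (in CFII by hypothesis) on every site outside $W$ and carries on $W$ an impurity tensor $E_\chi$ of the same bond dimension as $C_\chi$, namely the difference between $O$ sandwiched by the $\ell$-fold blocking of $C_\chi$ and the $\ell$-fold blocking itself. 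The hypothesis then reads $\sum_\chi\mu_\chi^{\,N}\ket{\phi^N_\chi}=0$ for all $N\ge\ell$, and the assertion to prove, $\ket{\phi^N_\chi}=0$ for all $\chi$ and $N$, is equivalent to $E_\chi=0$ for all $\chi$: indeed $\ket{\phi^N_\chi}=0$ means $\tr\bigl(E_\chi^{(K)}C_\chi^{\,k_1}\cdots C_\chi^{\,k_p}\bigr)=0$ for all indices, and since $C_\chi$ is normal the products $C_\chi^{\,k_1}\cdots C_\chi^{\,k_p}$ span all matrices once $p$ is large (\cref{def:NormalTensor}), forcing $E_\chi^{(K)}=0$.

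Next I would read off the large-$N$ behaviour of the overlaps. Closing two copies of the chain gives $\scalprod{\phi^N_\chi}{\phi^N_{\chi'}}=\tr\bigl(\mathbb{M}_{\chi,\chi'}\,\mathbb{E}_{\chi,\chi'}^{\,N-\ell}\bigr)$, where $\mathbb{E}_{\chi,\chi'}$ is the mixed transfer operator of the two bulk tensors and $\mathbb{M}_{\chi,\chi'}$ packages the impurity windows and the trace closure. By \cref{prop:NormalIFFPrimitive} and the CFII normalization (\cref{def:CFII}), $\mathbb{E}_{\chi,\chi}$ is primitive with spectral radius $1$ and a unique, strictly positive fixed point; more generally $\mathbb{E}_{\chi,\chi'}$ has spectral radius at most $1$, equal to $1$ as a simple eigenvalue exactly when $C_\chi$ and $C_{\chi'}$ are gauge equivalent (hence generate the same matrix product vector up to a scalar), and strictly smaller otherwise. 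Grouping the indices $\chi$ into classes of mutually gauge-equivalent bulks, it follows that $\scalprod{\phi^N_\chi}{\phi^N_{\chi'}}$ converges exponentially fast to a Gram-type number $\inner{v_\chi}{v_{\chi'}}$ when $\chi,\chi'$ lie in the same class — with $v_\chi$ an $N$-independent vector obtained by pairing $E_\chi$ with the full-rank fixed point of that class — and decays exponentially to $0$ when they lie in different classes.

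Finally I would peel off the components from $0=\norm{\sum_\chi\mu_\chi^{\,N}\ket{\phi^N_\chi}}^2=\sum_{\chi,\chi'}\overline{\mu_\chi^{\,N}}\,\mu_{\chi'}^{\,N}\,\scalprod{\phi^N_\chi}{\phi^N_{\chi'}}$. Substituting the asymptotics, the inequivalent-bulk cross terms are negligible because the mixed transfer operator is a strict contraction there, and the remaining sum organizes class by class into $\bigl\|\sum_{\chi\in c}\mu_\chi^{\,N}v_\chi\bigr\|^2$ up to an exponentially smaller remainder; as these norms are non-negative and their total is exponentially small, each $\sum_{\chi\in c}\mu_\chi^{\,N}v_\chi$ tends to $0$. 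A Vandermonde argument over the infinitely many $N$ then forces $v_\chi=0$ for every $\chi$: after grouping components that generate the same vector one may take the coefficients $\mu_\chi$ within a class pairwise distinct, process them in decreasing order of modulus, and use a Cesàro average over $N$ to kill the oscillating equal-modulus cross terms. Since $v_\chi$ pairs $E_\chi$ with a strictly positive fixed point, $v_\chi=0$ gives $E_\chi=0$, hence $\ket{\phi^N_\chi}=0$, i.e.\ $(O\otimes\II|_{rest})\ket{\psi^N_{A_\chi B_\chi}}=\ket{\psi^N_{A_\chi B_\chi}}$. The delicate point is precisely this last step: components with gauge-equivalent bulks but coefficients of equal modulus (differing by phases that need not be roots of unity) cannot be separated by growth rates, and it is the primitivity of the transfer operators — a simple dominant eigenvalue with no periodic part — together with the averaging over all $N$ that makes their interference cancel channel by channel. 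One may replace the squared-norm computation by a direct contraction of $\sum_\chi\mu_\chi^{\,N}\ket{\phi^N_\chi}=0$ against boundary data adapted to one bulk class, with the same analysis; equivalently, the statement is a strengthening of the linear independence of normal matrix product vectors.
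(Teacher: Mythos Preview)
Your approach is sound in outline but takes a substantially different route from the paper. The paper's proof is algebraic and very short: it assembles all the normal tensors $A_\chi B_\chi$ into a single block-diagonal tensor $C = \oplus_\chi \mu_\chi A_\chi B_\chi$, which is in canonical form by construction, generates the same MPV as $AB$, and therefore falls under the hypotheses of \cref{lem:BAAAA_CFequ}. That lemma (through \cref{prop:OrthogBNT}) immediately yields the blockwise equality $O$ acting on each BNT block equals the block itself, hence the claim for every $\chi$. In effect, the paper leverages the already-established direct-sum property of the images $\Gamma^L_{A_j}(\mathcal{M}_{D_j\times D_j})$ for a BNT rather than rederiving it.

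Your spectral/transfer-operator argument essentially reproves that direct-sum property from scratch, which is why it is longer and runs into the delicate asymptotics you flag. The plan can be made rigorous---the Ces\`aro averaging does separate equal-modulus phases, and the full-rank fixed points do force $E_\chi=0$ once the leading coefficient vanishes---but one has to check that the off-diagonal cross terms are small \emph{relative} to the possibly growing weights $|\mu_\chi|^{2N}$, not just absolutely, and that within a gauge class the impurity vectors are genuinely proportional (so that the Gram picture is well-defined). What your approach buys is self-containment: it does not rely on the canonical-form machinery or on \cref{lem:BAAAA_CFequ}, and it makes transparent the interpretation as a strengthening of linear independence of normal MPVs. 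What the paper's approach buys is brevity: once \cref{prop:OrthogBNT} is in hand, the proposition is a three-line reduction.
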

	 	
	 Having derived  \cref{eq:BRightGroupTrans}, \cref{prop:HBsturct} can be applied to determine the structure of the Hilbert space $\mathcal{{H}}_B$. 
	 As in the case of a gauge field MPV discussed in the previous section, we are free to assume  $X(g)$ and $Y(g)$ are block diagonal in irreducible representations: 
 
	 \begin{remark} \label{rem:XYfreedom}
	 	In \cref{thm:VirtualGroupTrans} we are free to choose similarity transformations for $X(g)$ and $Y(g)$ independently. Given $A,B,\RR(g),\CC(g),\LL(g),X(g)$ and $Y(g)$ that satisfy \cref{eq:BRightGroupTrans}  and \cref{eq:AGroupTrans}  we can redefine $A$ and $B$: 
	 	\begin{equation*}
		 	A^{j,m}\mapsto \tilde{A}^{j,m}=U^{-1} A^{j,m} V \ , 
		 	\;\;\; B^{k;m,n}\mapsto \tilde{B}^{k;m,n}=V^{-1}B^{k;m,n} U \ ,
	 	\end{equation*} 
	 	with any invertible matrices  $U$ and $V$ of fitting dimensions. The new tensors generate the same MPV $\ket{\psi^N_{AB}}$ and transform as in \cref{thm:VirtualGroupTrans} with $X(g)$ and $Y(g)$ replaced by  $\tilde{X}(g)=U^{-1}X(g)U $ and $\tilde{Y}(g)=V^{-1}Y(g)V$.
	 \end{remark}

	 \begin{restatable}[Elementary $A$ block]{definition}{ElementaryAblock} \label{def:ElementaryAblock}
	 	An elementary block of the tensor $A$ is one which satisfies \cref{eq:AGroupTrans},  where   $ \CC(g),\,X(g)$ and $Y(g)$  are all    irreducible projective representations.
	 \end{restatable}
	 By bringing all of the representations appearing in \cref{eq:BRightGroupTrans}  and \cref{eq:AGroupTrans} to block diagonal  form (using \cref{rem:changeOfBasis} on the physical representations and \cref{rem:XYfreedom} on the virtual ones), and projecting \cref{eq:BRightGroupTrans}  and \cref{eq:AGroupTrans} to irreducible sectors of the physical and virtual Hilbert spaces (as explained in \cref{sub:pureGFSummary}), we may reduce \cref{eq:BRightGroupTrans}  and \cref{eq:AGroupTrans} to the cases of elementary blocks of $B$ and of $A$ respectively. 
	 
	 We have seen in \cref{sub:pureGFSummary} that \cref{eq:BRightGroupTrans} determines the tensor $B$ given $\RR(g),\,\LL(g),\,X(g)$ and $Y(g)$ (\cref{prop:BstructGeneral}).
	 We now show that   \cref{eq:AGroupTrans} determines the tensor $A$ given $\CC(g),\,X(g)$ and $Y(g)$.

	 \begin{restatable} {propositionRE}{WigEckClassA}
	 	\label{lem:WigEckClassA}
	   Let $A$ be an elementary block (\cref{def:ElementaryAblock}), with ${\CC(g) = D^{J_0}(g)},\,X(g) ={{D_\gamma^j(g)}}$ and 
	   $ Y(g)={D_{\myinv{\gamma}}^l(g)}$. Then $A$  is built out of Clebsch-Gordan coefficients and has the form:
 		\begin{equation*}
	 		{A^M} = \sum_{J\in \mathfrak{J}:{D^J}={D^{J_0}}} {\alpha_J} \sum_{m,n}\innerCG{J,M}{\overline{j},m;l,n} \ket{m}\bra{n} \ ,
 		\end{equation*}	 
			where $\mathfrak{J}$ is the set of irreducible representation  indices appearing in the decomposition of $\overline{D_\gamma^j(g)}\otimes {D_{\myinv{\gamma}}^l(g)}$ into irreducible representations, $\innerCG{\overline{j},m:{l},n}{J,M}$ are the Clebsch-Gordan coefficients of the decomposition,
			$\overline{{D_\gamma^j(g)}}$ is the complex conjugate representation to ${D_\gamma^j(g)}$ and $\alpha_J$ are arbitrary constants. 	
	 \end{restatable}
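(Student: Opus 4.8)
The plan is to show that the transformation relation \cref{eq:AGroupTrans} is exactly a vector‑operator relation (\cref{def:VecOp}) for the tuple $(A^M)$ and then to read off the structure from the generalized Wigner--Eckart theorem (\cref{thm:GenWigEck}). Written in components, \cref{eq:AGroupTrans} says $\sum_{M'}\CC(g)_{M,M'}A^{M'}=X(g)^{-1}A^{M}Y(g)$ for all $g\in G$ and all $M$. Before anything else I would dispose of the multiplier bookkeeping: since $\CC(g)$ is a genuine unitary representation, expanding both sides of this identity at $gh$ using $\CC(gh)=\CC(g)\CC(h)$ and the cocycle relations for $X$ and $Y$ forces $X$ and $Y$ to carry the same multiplier. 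This is consistent with the hypotheses of \cref{thm:VirtualGroupTrans} ($X$ has the multiplier of $\RR$, $Y$ the inverse multiplier of $\LL$, and $\RR,\LL$ have inverse multipliers), and it is what will make \cref{thm:GenWigEck} applicable in its non‑vanishing branch.

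Next I would massage \cref{eq:AGroupTrans} into the precise shape of \cref{def:VecOp}. Replacing $g$ by $\myinv g$, using $\CC(\myinv g)=\CC(g)^{\dagger}$ (so the matrix elements become those of the complex conjugate representation $\overline{\CC}$ with the two indices swapped), and noting that for projective representations $X(\myinv g)^{-1}$ and $Y(\myinv g)$ differ from $X(g)$ and $Y(g)^{-1}$ by phases $\gamma(g,\myinv g)^{\pm1}$ that cancel because $X$ and $Y$ share their multiplier, one obtains $\sum_{M'}\overline{\CC}(g)_{M',M}A^{M'}=X(g)\,A^{M}\,Y(g)^{-1}$, which is exactly the statement that $(A^{M})$ is a vector operator with respect to $(\kappa,\pi,\eta)=(\overline{\CC},X,Y)$. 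Equivalently, and more transparently, I would group the physical index with the left virtual space and observe that $A$, viewed as a linear map $\mathbb{C}^{D_2}\to\mathbb{C}^{D_1}\otimes\mathcal H_{J_0}$, intertwines $Y$ with $X\otimes\CC$; in this form Schur's lemma (\cref{thm:Schur}) applied to the irreducible components of $X\otimes\CC=D_\gamma^{j}\otimes D^{J_0}$ gives the result directly, and this is essentially the content of \cref{thm:GenWigEck}.

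Because $A$ is an elementary block (\cref{def:ElementaryAblock}), $\CC=D^{J_0}$, $X=D_\gamma^{j}$ and $Y=D_{\myinv\gamma}^{l}$ are all irreducible, so \cref{thm:GenWigEck} applies without any preliminary decomposition into irreducible sectors. The multiplier hypothesis of \cref{thm:GenWigEck} is automatically satisfied ($\overline{\CC}$ has trivial multiplier, and the multipliers of $X$ and $\overline{Y}$ match), so the vanishing alternative does not occur, and the theorem produces $A^{M}$ as a sum $\sum_{J\in\mathfrak J}\alpha_J\sum_{m,n}\innerCG{\overline j,m;l,n}{J,M}\ket m\bra n$ over the $J$ with $D^{J}=D^{J_0}$, where $\mathfrak J$ is the set of irreducible labels occurring in $\overline{D_\gamma^{j}}\otimes D_{\myinv\gamma}^{l}$ and the $\alpha_J$ are free. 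The remaining step is to recognize this as the expression stated in the proposition: this is the standard $1\leftrightarrow3$ symmetry of Clebsch--Gordan coefficients, together with the bijection between the copies of $D^{l}$ inside $D^{j}\otimes D^{J_0}$ and the copies of $D^{J_0}$ inside $\overline{D^{j}}\otimes D^{l}$, which preserves the count of free parameters.

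The only real obstacle is the index, conjugation and multiplier bookkeeping in the second and fourth steps — putting the right representation in the right slot, checking that the cocycle phases cancel so that $(\overline{\CC},X,Y)$ genuinely fits \cref{def:VecOp}, and matching the Clebsch--Gordan labels exactly — none of which is conceptually hard once the structure is identified. (The same classification was obtained for the global‑symmetry case in \cite{Sanz2009} directly in terms of Clebsch--Gordan coefficients; routing it through \cref{thm:GenWigEck} is the self‑contained version preferred here, and \cref{rem:WignerEckartRecipe} accounts for the residual parametric freedom in the $\alpha_J$ and in the choice of Clebsch--Gordan coefficients.)
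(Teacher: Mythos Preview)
Your proposal is correct and follows essentially the same route as the paper: rewrite \cref{eq:AGroupTrans} as a vector-operator relation (\cref{def:VecOp}) and read off the structure from \cref{thm:GenWigEck}. The only difference is cosmetic: the paper takes the complex conjugate of \cref{eq:AGroupTrans} to exhibit $\overline{A}$ as a vector operator with respect to $(\CC,\overline{X},\overline{Y})$, applies \cref{thm:GenWigEck}, and then conjugates back; you instead substitute $g\mapsto g^{-1}$ to exhibit $A$ itself as a vector operator with respect to $(\overline{\CC},X,Y)$. Both manipulations are equivalent (one is the conjugate of the other), and the paper's version has the slight advantage that after undoing the conjugation the Clebsch--Gordan labels already match the statement, whereas your route leaves a residual $1\leftrightarrow3$ relabeling to sort out, as you note.
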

	 \Cref{lem:WigEckClassA} was shown in \cite{Sanz2009} in the context of MPS with a global symmetry.

	  The relation between the irreducible projective  representations appearing in $\RR(g)$ ($\LL(g)$) and $X(g)$ ($Y(g)$) is characterized by the following:
	   
	  \begin{restatable}{propositionRE}{RXLYconnectionNormal} 		\label{prop:RXLYconnectionNormal}
			Let $AB$ and $BA$ be normal tensors and let $B$ satisfy \cref{eq:BRightGroupTrans}   with $\RR(g) = \oplus_k (\II \otimes {D_\gamma^{r_k}(g)})$, $\LL(g) = \oplus_k ( {D_{\myinv{\gamma}}^{l_k}(g)} \otimes \II)$, $Y(g) = \oplus_a Y^a(g)$ and $X(g)=\oplus_b X^b(g)$, where $D_\gamma^{r_k}$, $D_{\myinv{\gamma}}^{l_k}$, $Y^a$ and $X^b$ are irreducible projective representations, 
			then
			\begin{enumerate}
				\item
					For all $ k$  either there exist  $a$ and $b$  such that 
					$X^b(g) = {D_\gamma^{r_k}(g)} $ and  
					$ \overline{Y^a(g)} = {{D_{\myinv{\gamma}}^{l_k}(g)}}$ or the projection of the tensor $B$ to the sector $k$ of the physical space is zero (and it can be discarded).
				\item
					$ \forall a$ $\exists k$ such that $\overline{Y^a(g)} = {{D_{\myinv{\gamma}}^{l_k}(g)}}$.
				\item
					$\forall b$  $\exists k$ such that $X^b(g) = {{D_\gamma^{r_k}(g)}}$.
			\end{enumerate} 
	  \end{restatable}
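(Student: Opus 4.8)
The plan is to treat this proposition as the two‑site analogue of \cref{prop:BstructGeneral}, replacing the canonical‑form input used there by the normality of $AB$ and $BA$; the mechanism is again to reduce \cref{eq:BRightGroupTrans} to elementary $B$ blocks and invoke \cref{prop:ElemBBlock}. First I would put $X(g)=\oplus_bX^b(g)$ and $Y(g)=\oplus_aY^a(g)$ in block‑diagonal form with the $X^b,Y^a$ irreducible. This is allowed by \cref{rem:XYfreedom}: conjugating $X(g)$ by $U$ and $Y(g)$ by $V$ amounts to $A^{j,m}\mapsto U^{-1}A^{j,m}V$, $B^{k;m,n}\mapsto V^{-1}B^{k;m,n}U$, under which the blocked tensors $AB$ and $BA$ are merely conjugated by $U$ and $V$ respectively, and conjugation preserves both the span‑of‑products condition and primitivity of the associated CP map (\cref{prop:NormalIFFPrimitive}), so $AB$ and $BA$ stay normal. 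Since the physical representations are already assumed to be in the form $\RR(g)=\oplus_k(\II\otimes {D^{r_k}_\gamma(g)})$, $\LL(g)=\oplus_k({D^{l_k}_{\myinv{\gamma}}(g)}\otimes\II)$, all four representations commute with the projectors onto the $k$‑th physical sector and the $(a,b)$ virtual block, so I would project \cref{eq:BRightGroupTrans} onto these and read off that $B^k_{a,b}$ — the tensor built from the $(a,b)$ blocks of the matrices $B^{k;m,n}$ — is an elementary $B$ block (\cref{def:ElementaryBblock}) with $\RR=\II\otimes {D^{r_k}_\gamma}$, $\LL={D^{l_k}_{\myinv{\gamma}}}\otimes\II$, $X=X^b$, $Y=Y^a$. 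By \cref{prop:ElemBBlock}, $B^k_{a,b}$ is proportional to the tensor with matrices $\ket{m}\bra{n}$ when $X^b(g)={D^{r_k}_\gamma(g)}$ and $\overline{Y^a(g)}={D^{l_k}_{\myinv{\gamma}}(g)}$, and $B^k_{a,b}=0$ otherwise.

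Item 1 is then immediate: if for some $k$ there is no pair $(a,b)$ with $X^b(g)={D^{r_k}_\gamma(g)}$ and $\overline{Y^a(g)}={D^{l_k}_{\myinv{\gamma}}(g)}$, every block $B^k_{a,b}$ vanishes, hence every matrix $B^{k;m,n}$ is zero, so the physical subspace $\mathcal{H}_{l_k}\otimes\mathcal{H}_{r_k}$ contributes nothing to any coefficient of any $\ket{\psi^N_{AB}}$; dropping it from $\mathcal{H}_B$ removes only zero Kraus operators of $B$, so it leaves $E_{AB}$ and $E_{BA}$ (and their normality) untouched and does not change the MPV.

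For items 2 and 3 I would argue by contradiction using normality. Suppose $\overline{Y^{a_0}(g)}$ is inequivalent to every ${D^{l_k}_{\myinv{\gamma}}(g)}$. Then by the classification above $B^k_{a_0,b}=0$ for all $k,b$, i.e.\ every $B^{k;m,n}$ annihilates, on the left, the $a_0$‑th $Y$‑block $W_{a_0}\subset\mathbb{C}^{D_2}$. Since $B$ is the left factor in $BA$, the projection onto $W_{a_0}$ kills every blocked matrix $(BA)^I$, hence every product $(BA)^{I_1}\cdots(BA)^{I_L}$ (the leftmost factor suffices); so the span of such products is never all of $\mathcal{M}_{D_2\times D_2}$, contradicting normality of $BA$. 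Item 3 is the mirror argument: if $X^{b_0}(g)$ is inequivalent to every ${D^{r_k}_\gamma(g)}$, then every $B^{k;m,n}$ annihilates, on the right, the $b_0$‑th $X$‑block $W'_{b_0}\subset\mathbb{C}^{D_1}$, and since $B$ is the right factor in $AB$, $W'_{b_0}$ lies in the kernel of every $(AB)^I$ and hence of every product of them (the rightmost factor suffices), contradicting normality of $AB$.

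The one genuinely delicate point is the block‑diagonalization of $X(g)$ and $Y(g)$ in the first step, which in the canonical‑form proof of \cref{prop:BstructGeneral} had to be postponed (\cref{rem:XshangeOfBasisOnlyAfterCF}) because the similarity $V$ need not respect $B$'s block structure. Here that obstacle disappears, since the hypothesis is that $AB$ and $BA$ are normal rather than that $B$ is in CF, and normality is invariant under the simultaneous redefinition of $A$ and $B$ from \cref{rem:XYfreedom}. Everything else is a transcription of the pure‑gauge argument, with the observation ``the leftmost (resp.\ rightmost) factor of a product of Kraus operators carries a persistent zero row (resp.\ column) block'' playing the role that the canonical‑form block structure played there.
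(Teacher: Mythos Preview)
Your argument is correct and follows the same route as the paper: reduce \cref{eq:BRightGroupTrans} to elementary $B$ blocks and apply \cref{prop:ElemBBlock}, then use that a zero row-block (resp.\ column-block) in every $B^{k;m,n}$ propagates through any product of $BA$'s (resp.\ $AB$'s) and obstructs normality. Your assignment of which normality is invoked where --- $BA$ for item~2 and $AB$ for item~3 --- is the right one (the paper's terse proof has these swapped), and your remark that \cref{rem:XYfreedom} preserves normality of $AB$ and $BA$, so that the block-diagonalization of $X(g),Y(g)$ is harmless here, is a point the paper leaves implicit.
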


	  By constructing tensors $A$ and $B$ that transform as in \cref{thm:VirtualGroupTrans} with $X(g)\neq Y(g)$ we show the existence of matter and gauge field MPVs  which have a local symmetry but for which the corresponding matter MPV does not have a global symmetry, nor does the gauge field MPV have a local one:

	 \begin{restatable}{propositionRE}{LocalSymmWOGlobal}
	 	 \label{ex:locSymNotGlob}
	 	There exist tensors $A$ and $B$ such that $\ket{\psi_{AB}}$ has a local symmetry with respect to $\RR(g)\otimes\CC(g)\otimes\LL(g)$, but $\ket{\psi_A}$ does not have a global symmetry with respect to $\CC(g)$. In addition   $\RR(g)\otimes\LL(g) \ket{\psi_{B}} \neq \ket{\psi_{B}} $.
	 \end{restatable}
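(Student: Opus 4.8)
The plan is to construct an explicit example and verify the three claims by hand, using \cref{thm:VirtualGroupTrans} only as a conceptual guide: it shows that the matter chain fails to inherit a global symmetry precisely when the virtual representations $X(g)$ and $Y(g)$ appearing in \cref{eq:BRightGroupTrans}--\cref{eq:AGroupTrans} differ, so the whole task is to realize a consistent instance with $X(g)\ne Y(g)$. I would work with the simplest group admitting inequivalent one-dimensional representations, e.g.\ $G=\mathbb{Z}_n$ or $G=U(1)$ with trivial multiplier, writing its characters as $\chi_k$, and pick integers $r,l$ with $r+l\not\equiv 0$, so that $\chi_r$ and $\chi_{-l}$ are distinct characters to be used for $X(g)$ and $Y(g)$.

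Then I would reverse-engineer the elementary blocks of \cref{prop:ElemBBlock} and \cref{lem:WigEckClassA}. Take $\mathcal{H}_B$ one-dimensional with $\RR(g)=\chi_r(g)$ and $\LL(g)=\chi_l(g)$, bond dimension one, and $B$ given by the scalar $B^1=1$; a one-line check shows this satisfies \cref{eq:BRightGroupTrans} with $X(g)=\chi_r(g)$ and $Y(g)=\chi_{-l}(g)$. Take $\mathcal{H}_A$ one-dimensional with $\CC(g)=\chi_{-(r+l)}(g)$ and $A^1=1$; then \cref{eq:AGroupTrans} becomes the identity $\chi_{-(r+l)}(g)=\chi_r(g)^{-1}\chi_{-l}(g)$. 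After normalizing the two scalars to have modulus one, the tensors $A$, $B$, $AB$ and $BA$ are all normal and in CFII, $X(g)$ and $Y(g)$ are (trivially) in the same cohomology class, and the compatibility conditions of \cref{prop:BstructGeneral} and \cref{prop:RXLYconnectionNormal} are met, so the example sits squarely inside the framework.

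Finally I would verify the three assertions, using that the generated MPVs are the product states $\bigotimes_K\ket{1}$. Local symmetry of $\ket{\psi^N_{AB}}$: the operator $\RR(g)^{[2K]}\otimes\CC(g)^{[2K+1]}\otimes\LL(g)^{[2K+2]}$ multiplies the relevant $B,A,B$ factors by $\chi_r(g)\,\chi_{-(r+l)}(g)\,\chi_l(g)=1$, so the state is fixed; failure of global symmetry of $\ket{\psi^N_A}$: $\CC(g)^{\otimes N}\ket{\psi^N_A}=\chi_{r+l}(g)^{-N}\ket{\psi^N_A}$, and since $\chi_{r+l}$ is a non-trivial character there is a $g$ with $\chi_{r+l}(g)\ne 1$, so already for $N=1$ the vector is not fixed; and likewise $\RR(g)\otimes\LL(g)\ket{\psi^N_B}=\chi_{r+l}(g)\ket{\psi^N_B}\ne\ket{\psi^N_B}$ for such a $g$.

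I do not anticipate a real obstacle: given \cref{thm:VirtualGroupTrans} the construction is elementary. The only genuine care needed is to ensure the example is non-vacuous --- that it actually obeys the standing normality/CFII assumptions and the internal consistency relations among $\RR,\LL,X,Y$ --- which the bond-one abelian choice makes transparent. If one wanted a more striking (non-product, or non-abelian) witness, the extra work would be purely computational: choosing higher-dimensional $X(g)\ne Y(g)$, assembling $B$ and $A$ from several elementary blocks via \cref{prop:ElemBBlock} and the Clebsch--Gordan data of \cref{lem:WigEckClassA}, and checking by a routine computation that the three-site operator still fixes the combined chain while the separated chains move; I would present the abelian example as the proof and, if desired, remark on this generalization.
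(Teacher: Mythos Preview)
Your construction is correct and is genuinely more elementary than the paper's. The paper proves the proposition by exhibiting an example over the non-abelian group $D_{10}$, using the two inequivalent two-dimensional irreducibles $\rho_1,\rho_2$, bond dimension two, and explicit $2\times 2$ matrices for $A$ and $B$ obtained from the Clebsch--Gordan data of $\overline{\rho_1}\otimes\rho_2\cong\rho_1\oplus\rho_2$; the failure of the global and pure-gauge symmetries is then checked by hand on short chains. Your abelian bond-one example achieves the same goal with essentially no computation: since all spaces are one-dimensional the three assertions reduce to the character identity $\chi_r\chi_{-(r+l)}\chi_l=1$ together with $\chi_{r+l}\not\equiv 1$. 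What your route buys is minimality and transparency --- it makes plain that the mechanism is exactly $X(g)\neq Y(g)$ in \cref{eq:BRightGroupTrans}--\cref{eq:AGroupTrans}, with nothing else needed. What the paper's route buys is a witness that is non-abelian, has higher-dimensional irreducibles, and produces genuinely entangled chains, which is closer to the physical lattice-gauge-theory setting the paper is aimed at; as you note yourself, promoting your argument to such a witness would be a routine upgrade via \cref{prop:ElemBBlock} and \cref{lem:WigEckClassA}.
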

	
	We review known results about MPV with global symmetry \cite{Sanz2009}. Let $A$ be a tensor in CFII:
   	\begin{equation} \label{eq:AsBlockStructure}
	   A^i=\oplus_{j=1}^n \oplus_{q=1}^{r_j} \mu_{j,q} A^i_j  \ ,
   	\end{equation}
   	where $\{A_j\}$ are  normal tensors in CFII forming a BNT of $A$ (\cref{def:BNT}) and $\mu_{j,q}$ are constants.

	\begin{restatable}{theoremRE}{GlobalSymmForMPVinCF} \label{thm:GlobalSymmVirtTrans}
	   	A tensor $A$ in CFII which generates a MPV with a  global symmetry with respect to  a representation $\CC(g)$ of a connected Lie group $G$, transforms under the representation matrix as:
	   	\begin{equation} \label{eq:AGlobalSymm}
		   	\begin{tikzpicture}[baseline=-1mm]
		   	\pic[pic text = $A$] at (0,0) {mpstensorBIG};
		   	\pic[pic text = $\CC(g)$] at (0,1) {verticalmatrixBIG};
		   	\end{tikzpicture} \ = \
		   	\begin{tikzpicture}[baseline=-1mm]
		   	\pic[pic text =  $\myinv{X (g)}$] at (0,0) {horizontalmatrixWIDE}; 
		   	\pic[pic text = $A$] at (1.15,0) {mpstensorBIG};
		   	\pic[pic text = $X(g)$] at (2.3,0) {horizontalmatrixWIDE};	   
		   	\end{tikzpicture} \ ,
	   	\end{equation} 
	   	where  $X(g)$ has the same block structure as $A$:
	   	\begin{equation} \label{eq:XStruct2}
		   	X(g) = \oplus_{j=1}^m \oplus_{q=1}^{r_j} {X_j}(g) \ ,
	   	\end{equation}
	   	 and where each block $X_j(g)$ is a projective   representation,  in the general case, for different $j$ values $X_j(g)$ belong to different cohomology classes.
	\end{restatable}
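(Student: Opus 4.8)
The plan is to localize the $N$-site symmetry condition into a relation on the canonical form of $A$ by means of the uniqueness (``fundamental'') theorem for matrix product states. First I would rewrite the hypothesis: expanding $\CC_g^{\otimes N}\ket{\psi^N_A}$ in the basis $\{\ket{i_1\ldots i_N}\}$ shows it is the MPV generated by the tensor $\hat A_g$ with matrices $\hat A_g^i=\sum_{i'}\CC(g)_{i,i'}A^{i'}$; because $\CC(g)$ is unitary, \cref{prop:ChangeOfBasisDoesntRuinCF} (whose proof only uses unitarity of the coefficient matrix) guarantees $\hat A_g$ is again in CFII with the same block structure \eqref{eq:AsBlockStructure} --- only the BNT elements have had their physical index rotated. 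Thus the global-symmetry hypothesis is exactly $\ket{\psi^N_{\hat A_g}}=\ket{\psi^N_A}$ for all $N\in\mathbb{N}$ and all $g\in G$.

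Next I would invoke the fundamental theorem (uniqueness of CFII up to reordering and gauge, see \cite{Cirac2017,Sanz2009}): two CFII tensors generating the same MPV family have, after a permutation of their weighted normal blocks, corresponding blocks related by an invertible similarity and a scalar, with the weights $\mu_{j,q}$ matched accordingly. Applied to $A$ and $\hat A_g$ this produces a permutation $\pi_g$ of the block labels and invertible matrices $X_j(g)$ with $\sum_{i'}\CC(g)_{i,i'}A_{j}^{i'}=e^{i\phi_j(g)}X_j(g)^{-1}A^i_{\pi_g(j)}X_j(g)$. Two reductions then apply: (i) since we demand exact equality of vectors (no phase), taking traces of $N$-fold products and matching for every $N$ forces $e^{i\phi_j(g)}=1$, using that a normal tensor has non-vanishing such traces; (ii) $\pi_e=\mathrm{id}$ and $g\mapsto\pi_g$ is a continuous map into the discrete symmetric group, so connectedness of $G$ forces $\pi_g=\mathrm{id}$ for every $g$. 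Carrying the multiplicity copies $V_{j,q}^{-1}A_jV_{j,q}$ through the same argument shows the matrix conjugating the $(j,q)$ copy is $V_{j,q}^{-1}X_j(g)V_{j,q}$, so $X(g):=\bigoplus_j\bigoplus_q V_{j,q}^{-1}X_j(g)V_{j,q}$ has precisely the block structure \eqref{eq:XStruct2} and satisfies \eqref{eq:AGlobalSymm}.

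It then remains to check each $X_j(g)$ is a projective representation. Composing $\sum_{i'}\CC(g)_{i,i'}A_j^{i'}=X_j(g)^{-1}A_j^iX_j(g)$ with the analogous identity for $h$ and using $\CC(g)\CC(h)=\CC(gh)$ ($\CC$ being a genuine representation), one finds $X_j(g)X_j(h)X_j(gh)^{-1}$ commutes with every $A_j^i$; by normality the $A_j^i$ generate the full matrix algebra after blocking, so this product is a scalar $\gamma_j(g,h)$. Choosing the $X_j(g)$ unitary (the usual polar-decomposition/averaging argument, keeping the dependence on $g$ measurable and continuous near the identity for a Lie group) puts $\gamma_j(g,h)$ in $U(1)$, and associativity of the group law yields the multiplier cocycle identity, so $X_j$ is a projective representation with multiplier $\gamma_j$. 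Because the $X_j$ are extracted independently block by block, nothing ties the $\gamma_j$ together, which is why in general $X_j(g)$ for different $j$ lie in different cohomology classes.

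I expect the main obstacle to be the bookkeeping attached to the fundamental theorem: treating non-distinct weights $\mu_{j,q}$ (where components of equal weight may mix, reintroducing a within-sector permutation that again must be killed by connectedness), making sure the per-block gauge freedom is propagated so that $X(g)$ inherits $A$'s full block structure \eqref{eq:XStruct2} rather than a coarser one, and controlling the $g$-regularity of $X_j(g)$ so that $\gamma_j$ is an honest (continuous) multiplier and not merely an arbitrary $U(1)$-valued function.
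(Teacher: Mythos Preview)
Your approach is the same as the paper's: apply the fundamental theorem of MPV to $A$ versus $\hat A_g$ to extract a per-$g$ permutation and gauge, eliminate the permutation using connectedness of $G$, and finally use the span/normality property to show each $X_j$ is projective. The projective-representation step and the multiplicity bookkeeping match the paper's argument essentially verbatim.

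There is one genuine gap. You assert that $g\mapsto\pi_g$ is a continuous map into the symmetric group and then use connectedness of $G$ to force $\pi_g\equiv\mathrm{id}$. But the fundamental theorem only provides, for each \emph{fixed} $g$, the existence of some $\Pi(g)$ and $X(g)$; nothing a priori ties these together as $g$ varies. Even granting that $\pi_g$ is uniquely determined at the level of $j$-blocks (because BNT elements are pairwise non-similar), you still need an argument that this assignment is continuous. The paper supplies exactly this in its Lemma~\ref{lem:PiIsRep}: taking $L$-fold products of \eqref{eq:preGroupProp} and passing to the linear combination that yields the block-identity $\Delta[j]$ (via \cref{cor:SpanForCF}), the left-hand side becomes $\Delta[\pi_g^{-1}(j)]$ (the $X(g)$ factors commute with $\Delta[j]$), while the right-hand side is a manifestly smooth function of $g$ through $\CC(g)$. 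Hence $\Delta[\pi_g^{-1}(j)]$ varies smoothly, which for a discrete-valued quantity means it is locally constant, and connectedness finishes the argument. This span-property step is the technical content you are missing; your ``expected obstacles'' paragraph flags the regularity of $X_j(g)$ but not of $\pi_g$, and the latter is what actually needs work here.

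A minor point: by invoking \cref{prop:similarBNTs} rather than \cref{thm:fundMPV}/\cref{thm:GlobalSymmUnitary} directly, you introduce phases $e^{i\phi_j(g)}$ that the paper never sees, because the full fundamental theorem (for CFII tensors generating the \emph{same} MPV, with identical weights $\mu_{j,q}$ as guaranteed by \cref{lem:BNTThetaA}) already yields $\hat A_g = X(g)^{-1}\Pi(g)^{-1}A\,\Pi(g)X(g)$ with no residual scalar. Your argument for killing the phases is plausible but unnecessary.
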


	In the case when all $X_j(g)$ obtained in \cref{thm:GlobalSymmVirtTrans} are from  the same cohomology class, we can find a gauge field  tensor $B$ and projective  representations $\RR(g)$ and $\LL(g)$  that gauge the symmetry:

	\begin{restatable}{propositionRE}{GaugeGlobSymm} \label{thm:GaugeGlobSymm}
		Let $A$ be a tensor in CFII generating a MPV with a global symmetry i.e., satisfying \cref{thm:GlobalSymmVirtTrans}. Let $X(g)$ (in \cref{eq:AGlobalSymm}) be a projective representation (i.e.\ all $X_j(g)$ in \cref{eq:XStruct2} are in the same cohomology class). Then there exist a tensor $B$ and projective representations $\RR(g)$ and $\LL(g)$ with inverse multipliers such that both local symmetries:  \cref{def:BABSymm} for $\ket{\psi^N_{AB}}$ and \cref{def:BBSymm} for $\ket{\psi^N_B}$ are satisfied.
	\end{restatable}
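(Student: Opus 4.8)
The plan is to recycle the virtual representation $X(g)$ produced by the global symmetry of $A$ and turn it into the bond representation of a ``maximally entangled'' gauge field tensor. By \cref{thm:GlobalSymmVirtTrans}, the tensor $A$ obeys $\sum_{i'}\CC(g)_{ii'}A^{i'} = X(g)^{-1}A^i X(g)$ for all $g\in G$, and by hypothesis $X(g)$ is a genuine projective representation on $\mathbb{C}^{D}$ (with $D$ the bond dimension of $A$) carrying a single multiplier $\gamma$. I would first, if convenient, apply a basis change $A^i\mapsto W^{-1}A^iW$ on the bond space --- which leaves all the MPVs appearing in the statement unchanged while replacing $X(g)$ by $W^{-1}X(g)W$ --- so that $X(g)=\bigoplus_a D_\gamma^{r_a}(g)$ is block diagonal in irreducible projective representations; this step is cosmetic and not essential.

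Next I would take the bond space of $B$ to be $\mathbb{C}^{D}$ as well (so that $AB$ and $BA$ are well defined), set the physical space $\mathcal{H}_B=\mathbb{C}^{D}\otimes\mathbb{C}^{D}$, and define
\begin{equation*}
	B^{(\mu,\nu)}=\ket{\mu}\bra{\nu}\,,\qquad \RR(g)=\II\otimes X(g)\,,\qquad \LL(g)=\overline{X(g)}\otimes\II\,.
\end{equation*}
Since multipliers are $U(1)$-valued, $\overline{X(g)}$ is a projective representation with multiplier $\overline{\gamma}=\gamma^{-1}$; hence $\RR(g)$ and $\LL(g)$ have inverse multipliers and $\RR(g)\otimes\LL(g)$, $\RR(g)\otimes\CC(g)\otimes\LL(g)$ are honest (non-projective) representations, as required by \cref{def:BBSymm} and \cref{def:BABSymm}. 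A one-line computation using $X(g)^{-1}=X(g)^{\dagger}$ shows that acting with $\RR(g)$ on the physical index of $B$ right-multiplies $B^{(\mu,\nu)}$ by $X(g)$, while acting with $\LL(g)$ left-multiplies it by $X(g)^{-1}$; that is, $B$ satisfies \cref{eq:BRightGroupTrans} with $Y(g)=X(g)$ --- the very same $X(g)$ that appears in $A$'s relation \cref{eq:AGlobalSymm}. (If one prefers a tensor of the shape \cref{eq:BErez}, one may instead restrict to $\mathcal{H}_B=\bigoplus_a\mathcal{H}_{\overline{r_a}}\otimes\mathcal{H}_{r_a}$ and put $B^{a;m,n}=\beta_a\ket{a,m}\bra{a,n}$ with the $\ket{a,m}$ a basis of the $a$-th irreducible block of $\mathbb{C}^{D}$; choosing $|\beta_a|=1/\sqrt{\dim r_a}$ makes this $B$ normal and in CFII.)

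Finally I would verify the two local symmetries by the standard bond-cancellation argument, which suffices to check on a single site-group since all the MPVs are translation invariant. For $\ket{\psi^N_B}$: acting with $\RR^{[K]}_g$ inserts $X(g)$ on the bond immediately to the right of the $K$-th $B$-tensor and acting with $\LL^{[K+1]}_g$ inserts $X(g)^{-1}$ on the bond immediately to the left of the $(K+1)$-st $B$-tensor; these land on the same bond of the MPV, multiply to the identity, and leave $\ket{\psi^N_B}$ unchanged, giving \cref{def:BBSymm}. For $\ket{\psi^N_{AB}}$, around a window $\mathcal{H}_B\otimes\mathcal{H}_A\otimes\mathcal{H}_B$: $\RR^{[2K]}_g$ puts $X(g)$ on the $B$--$A$ bond; $\CC^{[2K+1]}_g$ puts $X(g)^{-1}$ on the $B$--$A$ bond and $X(g)$ on the $A$--$B$ bond (by \cref{eq:AGlobalSymm}); and $\LL^{[2K+2]}_g$ puts $X(g)^{-1}$ on the $A$--$B$ bond. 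Both bonds are neutralized, so $\ket{\psi^N_{AB}}$ is invariant, giving \cref{def:BABSymm}.

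The construction is essentially forced, and the only point requiring care --- rather than a real obstacle --- is the bookkeeping of complex conjugates: left multiplication of $B^{(\mu,\nu)}=\ket{\mu}\bra{\nu}$ by $X(g)^{-1}=X(g)^{\dagger}$ acts on the ket index by $\overline{X(g)}$, which is why $\LL(g)$ must be built from $\overline{X(g)}$, and it is precisely this choice that makes $\RR(g)$ and $\LL(g)$ carry inverse multipliers. This is also where the hypothesis that $X(g)$ be a genuine projective representation (all $X_j(g)$ from \cref{eq:XStruct2} in one cohomology class) enters: it is what allows a single multiplier to be assigned to $X$, and hence to $\RR$ and $\LL$; were the $X_j(g)$ spread over different cohomology classes, $\bigoplus_a\overline{X^a(g)}\otimes\II$ would fail to be a projective representation with a well-defined multiplier and the argument would break down.
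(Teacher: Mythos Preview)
Your proof is correct and follows essentially the same route as the paper's: both construct $B$ as the maximally-entangled tensor $B^{m,n}=\ket{m}\bra{n}$ on the bond space, set $\RR(g)=\II\otimes X(g)$ and $\LL(g)=\overline{X(g)}\otimes\II$, and rely on the same-cohomology-class hypothesis to ensure $X(g)$ carries a single well-defined multiplier. The only cosmetic difference is that the paper works directly in the irreducible-block (K--S-like) form $\mathcal{H}_B=\bigoplus_{j,q,a_j}\mathcal{H}_{\overline{a_j}}\otimes\mathcal{H}_{a_j}$ with $B$ block diagonal, whereas you first write the full $\mathbb{C}^D\otimes\mathbb{C}^D$ version and then mention the block-diagonal restriction as an alternative; your explicit bond-cancellation verification of the two local symmetries is left implicit in the paper.
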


\section{MPV with a global symmetry} \label{sec:GlobSymm}
	In the next section we shall  present the derivation of the  previously described results. Before that, however, we review MPVs basics not covered in the formalism section, needed for the  derivation of of the classification of MPVs with a global symmetry, originally shown in \cite{Sanz2009}. In order for the paper to be self contained, we derive the result   from the fundamental theorem of MPV (see \cref{thm:fundMPV}), following \cite{Cirac2017} and references therein.

	\begin{proposition} \label{prop:InjInv}
		\cref{def:Injectivity} is equivalent to the existence of a one-sided inverse  tensor $A^{-1}$ which satisfies:
		\begin{equation*}
			\begin{tikzpicture}[baseline=-1mm]
			\pic[pic text = $A$] at (0,0) {mpstensor};
			\pic [rotate=180] at (0,1) {mpstensor};
			\node at (0,1) {$\myinv{A}$};
			\end{tikzpicture} \ = \ 
			\begin{tikzpicture}[baseline=-1mm]
			\draw (0,1)--(0.25,1)--(0.25,0)--(0,0);
			\draw (1,1)--(0.75,1)--(0.75,0)--(1,0);
			\end{tikzpicture} \ ,
		\end{equation*}
		that is:
		\begin{equation*}
			\sum_i A^i_{\alpha\beta} {(A^{-1})}^i_{\alpha\beta} = \delta_{\alpha,\alpha^\prime}\delta_{\beta,\beta^\prime} 
		\end{equation*} 
	\end{proposition}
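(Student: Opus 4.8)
The plan is to ``reshape'' the tensor $A$ into an ordinary linear map and then recognize both sides of the claimed equivalence as two faces of the single statement ``this map is surjective''. Concretely, I would associate to $A$ the linear map $\mathcal{A}:\mathbb{C}^d\to\mathbb{C}^D\otimes\mathbb{C}^D$ defined on the physical basis by $\mathcal{A}\ket{i}=\sum_{\alpha,\beta}A^i_{\alpha\beta}\,\ket{\alpha}\otimes\ket{\beta}$, i.e.\ $\ket{i}$ is sent to the ``vectorization'' of the matrix $A^i$. Under the identification $\mathbb{C}^D\otimes\mathbb{C}^D\cong\mathcal{M}_{D\times D}$ given by $\ket{\alpha}\otimes\ket{\beta}\leftrightarrow\ket{\alpha}\bra{\beta}$, the image of $\mathcal{A}$ is exactly $span\{A^i\mid i=1,\ldots,d\}$, so \cref{def:Injectivity} is by definition equivalent to the surjectivity of $\mathcal{A}$.

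Next I would rewrite the defining relation of $A^{-1}$ in the same language. A candidate inverse tensor $A^{-1}$, consisting of matrices $\{(A^{-1})^i\}$, is precisely the same data as a linear map $\mathcal{B}:\mathbb{C}^D\otimes\mathbb{C}^D\to\mathbb{C}^d$, via $(A^{-1})^i_{\alpha\beta}=\bra{i}\,\mathcal{B}\,\ket{\alpha}\otimes\ket{\beta}$. Reading off the graphical equation — contract the physical legs of $A$ and $A^{-1}$ and require the result to be the identity on the pair of virtual spaces (left-to-left, right-to-right) — gives, for all $\alpha,\beta,\alpha',\beta'$,
\[
	\sum_i A^i_{\alpha\beta}\,(A^{-1})^i_{\alpha'\beta'}=\delta_{\alpha,\alpha'}\,\delta_{\beta,\beta'}\ ,
\]
which is exactly the operator identity $\mathcal{A}\,\mathcal{B}=\id_{\mathbb{C}^D\otimes\mathbb{C}^D}$. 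Hence the existence of a one-sided inverse tensor $A^{-1}$ satisfying the stated relation is equivalent to $\mathcal{A}$ possessing a right inverse.

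Finally I would close the argument with the elementary fact that a linear map between finite-dimensional vector spaces admits a right inverse if and only if it is surjective: for ``$\Leftarrow$'' pick any linear complement of $\ker\mathcal{A}$ and invert $\mathcal{A}$ on it, or — to produce $A^{-1}$ explicitly — choose a subset $\{A^{i_k}\}_{k=1}^{D^2}$ that is a basis of $\mathcal{M}_{D\times D}$, take $\{M_k\}$ the dual basis with respect to the bilinear pairing $\langle X,Y\rangle=\sum_{\alpha,\beta}X_{\alpha\beta}Y_{\alpha\beta}$, and set $(A^{-1})^{i_k}=M_k$ and $(A^{-1})^i=0$ otherwise. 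Chaining the two equivalences yields: $A$ injective $\iff$ $\mathcal{A}$ surjective $\iff$ $\mathcal{A}$ has a right inverse $\iff$ an inverse tensor $A^{-1}$ exists, which is the statement. I do not expect a genuine obstacle here; the only step needing care is the index/graphical bookkeeping in the reshaping — matching the left (resp.\ right) virtual leg of $A$ with the left (resp.\ right) virtual leg of $A^{-1}$ so that the contracted diagram really corresponds to $\mathcal{A}\mathcal{B}$ rather than to some transposed or partially-swapped variant.
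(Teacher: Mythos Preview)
Your argument is correct and is exactly the standard one: reshape $A$ into the linear map $\mathcal{A}:\mathbb{C}^d\to\mathcal{M}_{D\times D}$, observe that \cref{def:Injectivity} is precisely surjectivity of $\mathcal{A}$, and that the inverse-tensor relation is precisely $\mathcal{A}\mathcal{B}=\id$, so the equivalence reduces to ``surjective $\iff$ admits a right inverse'' in finite dimensions.

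The paper itself does not supply a proof of this proposition; it is stated as a basic fact and left to the reader. So there is no alternative approach to compare against --- your write-up simply fills in what the paper omits. One small remark: you correctly caught (and silently fixed) the index typo in the paper's displayed formula, where the second factor should carry primed indices, $\sum_i A^i_{\alpha\beta}\,(A^{-1})^i_{\alpha'\beta'}=\delta_{\alpha,\alpha'}\delta_{\beta,\beta'}$, as is clear from the graphical equation above it.
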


	\begin{definition} [Span of matrix products] \label{def:SpanOfProducts}
		For a tensor $A$ with bond dimension $D$ let  $S_L \subseteq \mathcal{M}_{D\times D} $ be the space spanned by all possible $L$-fold matrix products:
		\begin{equation*}  
			\mathcal{S}_L:= span\left\{ A^{i_1}A^{i_2}\ldots A^{i_L}\left|\right.{i_1,i_2,\ldots,i_L}=1,\ldots, d  \right\} \ .
		\end{equation*}
	\end{definition}

	\begin{definition} \label{def:GammaMap}
		Let 	$\Gamma^L_A : \mathcal{M}_{D\times D} \rightarrow \mathcal{H}^{\otimes L}$ be defined by:
		\begin{equation*} \label{eq:GammaDef}
			\Gamma^L_A (X) = \sum \tr\left(XA^{i_1}A^{i_2}\ldots A^{i_L}\right)\ket{i_1i_2\ldots i_L} \ .
		\end{equation*}
	\end{definition}
	For a normal tensor, according to \cref{def:NormalTensor}, for $L$ large enough, $S_L =\mathcal{M}_{D\times D}$. For tensors in CF the following holds:  

	\begin{proposition} [Span property of BNT] \label{prop:SpanMatAlg}
		Let $A$ be in CF with each block being a unique element of its BNT, i.e.\ there is no $q$ summation in \cref{eq:BNTexpansion}.
		Then for $L$ large enough, $\mathcal{S}_L$		is the entire matrix algebra  
		$\mathcal{M}:=\oplus_{j=1}^m\mathcal{M}_{D_j\times D_j} $
		where $\mathcal{M}_{D_j\times D_j}$ is the algebra of $D_j\times D_j$ matrices and $D_j$ is the bond dimension of $A_j$. \textup{\cite{Perez-Garcia2007}}
	\end{proposition}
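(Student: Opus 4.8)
The plan is to reduce the statement to an injectivity assertion about the map $\Gamma^L_A$ (\cref{def:GammaMap}) and then to separate a ``per–block'' part, handled by normality, from a ``cross–block'' part, which is the only place the hypothesis that the blocks are distinct elements of the BNT is used. First I would observe that in the expansion \cref{eq:BNTexpansion} the similarity matrices $V_{j,q}$ merely conjugate each block of $A$ and the constants $\mu_{j,q}$ merely rescale them, so neither can affect whether $\mathcal{S}_L$ (\cref{def:SpanOfProducts}) equals the full block–diagonal matrix algebra; hence we may assume $A^i=\oplus_{j=1}^m\mu_j A^i_j$ with $A_1,\dots,A_m$ pairwise inequivalent normal tensors of bond dimensions $D_1,\dots,D_m$. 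Since products of block–diagonal matrices are block–diagonal, $\mathcal{S}_L\subseteq\mathcal{M}:=\oplus_{j=1}^m\mathcal{M}_{D_j\times D_j}$ for every $L$, and the task is the reverse inclusion for large $L$. Using the trace pairing $(X,Y)\mapsto\tr(XY)$, which is non-degenerate on $\mathcal{M}$, one has $X\in\mathcal{M}$ orthogonal to $\mathcal{S}_L$ if and only if $\Gamma^L_A(X)=0$; so it suffices to show that $\Gamma^L_A$ restricted to $\mathcal{M}$ is injective for $L$ large.

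For $X=\oplus_j X_j\in\mathcal{M}$ one computes $\Gamma^L_A(X)=\sum_{j=1}^m\mu_j^L\,\Gamma^L_{A_j}(X_j)$, so this injectivity splits into two claims: (i) each $\Gamma^L_{A_j}$ is injective on $\mathcal{M}_{D_j\times D_j}$; and (ii) the subspaces $\Gamma^L_{A_j}(\mathcal{M}_{D_j\times D_j})\subseteq\mathcal{H}^{\otimes L}$ are in direct sum. Claim (i) is just normality of $A_j$ (\cref{def:NormalTensor}): for $L$ large the products $A^{i_1}_j\cdots A^{i_L}_j$ span $\mathcal{M}_{D_j\times D_j}$, hence $\Gamma^L_{A_j}(X_j)=0$ forces $X_j\perp\mathcal{M}_{D_j\times D_j}$, i.e.\ $X_j=0$. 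Claim (ii) is the heart of the matter: it is precisely the linear independence, for pairwise inequivalent normal tensors, of the boundary–decorated MPV families, which is the content of the fundamental theorem of MPV (\cref{thm:fundMPV}).

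To make (ii) self-contained I would argue by contradiction using the algebra $\mathcal{S}:=\sum_{L\ge1}\mathcal{S}_L\subseteq\mathcal{M}$, which is closed under multiplication. Each projection $\pi_j\colon\mathcal{S}\to\mathcal{M}_{D_j\times D_j}$ is onto (by normality of $A_j$) and carries the Jacobson radical of $\mathcal{S}$ into that of the simple algebra $\mathcal{M}_{D_j\times D_j}$, which is $0$; since $\cap_j\ker\pi_j=0$, the radical of $\mathcal{S}$ vanishes, so $\mathcal{S}$ is semisimple. If $\mathcal{S}\neq\mathcal{M}$, the structure of semisimple subalgebras of a product of matrix algebras (a subdirect–product argument together with Skolem--Noether) forces two distinct blocks $j\neq j'$ with $D_j=D_{j'}$ and an invertible $W$ such that $\pi_{j'}(s)=W\pi_j(s)W^{-1}$ for all $s\in\mathcal{S}$. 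Evaluating this on $s=A^{i_1}\cdots A^{i_L}$ and on $s=A^{i_0}A^{i_1}\cdots A^{i_L}$, and using that $\{A^{i_1}_j\cdots A^{i_L}_j\}$ already spans $\mathcal{M}_{D_j\times D_j}$ (hence contains invertible matrices), one peels off a single tensor factor to obtain $A^i_{j'}=\lambda\,W A^i_j W^{-1}$ for all $i$, with scalar $\lambda=\mu_j/\mu_{j'}$. Then $E_{A_{j'}}$ has spectral radius $|\lambda|^2$, and since every normal tensor has spectral radius $1$ (\cref{def:NormalTensor}) we get $|\lambda|=1$; thus $A_j$ and $A_{j'}$ agree up to a similarity and a phase, contradicting that they are distinct elements of the BNT (\cref{def:BNT}). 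Hence $\mathcal{S}=\mathcal{M}$; that the full algebra is in fact attained by a single $\mathcal{S}_L$ for all $L$ beyond some bound (and not merely by the sum over all $L$) is a routine consequence of the per-block spans $\mathcal{S}^{(j)}_L$ being eventually constant. The main obstacle is claim (ii): a diagonal–type embedding into $\mathcal{M}$ can surject onto every factor without being all of $\mathcal{M}$, so one must genuinely exploit the inequivalence of the BNT blocks — equivalently, invoke the fundamental theorem — and the technical care lies in promoting the resulting length-$L$ identity between the blocks to a single-site relation between the two normal tensors.
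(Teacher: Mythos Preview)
The paper does not prove this proposition; it is stated with a citation to \cite{Perez-Garcia2007} and no argument is given. So your attempt is really being compared against the standard proof in the literature, and your route is genuinely different: rather than establishing claim (ii) via spectral estimates on the mixed transfer maps $E_{A_j,A_{j'}}$ (which is how \cref{prop:OrthogBNT}, also merely cited in the paper, is usually proved), you attack the span question through the algebra $\mathcal{S}=\sum_{L\ge 1}\mathcal{S}_L$ and a semisimple-subalgebra/Skolem--Noether argument. That part is elegant and correct: the Jacobson radical vanishes, a proper semisimple subalgebra surjecting onto every simple factor forces two factors to be linked by conjugation, and evaluating on $A^i\in\mathcal{S}_1$ immediately gives $A^i_{j'}=\lambda\,WA^i_jW^{-1}$ with $|\lambda|=1$, contradicting the BNT hypothesis. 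So $\mathcal{S}=\mathcal{M}$.

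There is, however, a real gap at the very last step. Passing from $\mathcal{S}=\mathcal{M}$ to $\mathcal{S}_L=\mathcal{M}$ for a single large $L$ is not the ``routine consequence of the per-block spans being eventually constant'' that you claim. A clean counterexample: take two one-dimensional blocks $A_1=(1)$ and $A_2=(\omega)$ with $|\omega|=1$ not a root of unity; then $\mathcal{S}_L=\mathrm{span}\{(1,\omega^L)\}$ is one-dimensional for every $L$, the per-block spans are trivially constant, and $\mathcal{S}=\mathbb{C}\oplus\mathbb{C}=\mathcal{M}$, yet $\mathcal{S}_L\neq\mathcal{M}$ for all $L$. This example violates the BNT hypothesis (the two blocks differ by a phase) --- which is exactly the point: your algebra argument uses the BNT hypothesis to obtain $\mathcal{S}=\mathcal{M}$, but your ``routine'' step then drops it, and without it the step fails. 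In fact, by \cref{prop:SpanAndInjectivity} the assertion $\mathcal{S}_L=\mathcal{M}$ is \emph{equivalent} to the injectivity of $\Gamma^L_A|_{\mathcal{M}}$, i.e.\ to your claim (ii); so you cannot sidestep (ii) on the way to a fixed-$L$ statement. To close the argument you must either invoke \cref{prop:OrthogBNT} (whose standard proof shows that for inequivalent normal tensors the cross transfer map has spectral radius strictly less than one, so the images $\Gamma^L_{A_j}(\mathcal{M}_{D_j\times D_j})$ become independent for large $L$), or supply a direct algebraic substitute that actually uses the BNT inequivalence at this final step.
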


	\begin{proposition} \label{prop:SpanAndInjectivity}
		Let $A$ be a tensor consisting of  block diagonal matrices: $A^i \in \mathcal{M}:= \oplus^m_j \mathcal{M}_{D_j\times D_j}$, and let $\mathcal{S}_L$ and $\Gamma^L_A$ be as in \cref{def:SpanOfProducts} and \cref{def:GammaMap} respectively. Then $\mathcal{S}_L=\mathcal{M}$ iff $\Gamma^L_A|_{\mathcal{M}}$ is injective.
	\end{proposition}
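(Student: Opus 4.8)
The plan is to prove \cref{prop:SpanAndInjectivity} by establishing the two directions separately, using the structure of the map $\Gamma^L_A$ restricted to the block-diagonal algebra $\mathcal{M} = \oplus_{j=1}^m \mathcal{M}_{D_j\times D_j}$. The key observation is that $\Gamma^L_A$ is linear, its domain $\mathcal{M}$ has dimension $\sum_j D_j^2$, and by \cref{def:GammaMap} we have $\Gamma^L_A(X) = \sum_{i_1,\ldots,i_L} \tr(X A^{i_1}\cdots A^{i_L})\ket{i_1\cdots i_L}$; since each $A^{i_1}\cdots A^{i_L}$ lies in $\mathcal{M}$, the value $\tr(XM)$ depends only on the component of $M$ in $\mathcal{S}_L \subseteq \mathcal{M}$, and only on the component of $X$ in $\mathcal{S}_L$ as well (using that the trace form is nondegenerate on each block $\mathcal{M}_{D_j\times D_j}$, hence on $\mathcal{M}$).

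First I would prove the direction $\mathcal{S}_L = \mathcal{M} \Rightarrow \Gamma^L_A|_\mathcal{M}$ injective. Suppose $X \in \mathcal{M}$ with $\Gamma^L_A(X) = 0$. Since $\{\ket{i_1\cdots i_L}\}$ is an orthonormal basis, this means $\tr(X A^{i_1}\cdots A^{i_L}) = 0$ for every tuple $(i_1,\ldots,i_L)$, hence $\tr(XM) = 0$ for all $M \in \mathcal{S}_L = \mathcal{M}$. Because the bilinear form $(X,M) \mapsto \tr(XM)$ is nondegenerate on $\mathcal{M}$ (it restricts to the nondegenerate trace pairing on each diagonal block $\mathcal{M}_{D_j\times D_j}$, and blocks are trace-orthogonal), this forces $X = 0$. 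Thus $\Gamma^L_A|_\mathcal{M}$ has trivial kernel and is injective.

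For the converse, $\Gamma^L_A|_\mathcal{M}$ injective $\Rightarrow \mathcal{S}_L = \mathcal{M}$, I would argue by contraposition. Suppose $\mathcal{S}_L \subsetneq \mathcal{M}$. Then there is a nonzero $X \in \mathcal{M}$ orthogonal to $\mathcal{S}_L$ with respect to the trace pairing: indeed, $\mathcal{S}_L^{\perp} \cap \mathcal{M} \neq \{0\}$ since the pairing is nondegenerate on $\mathcal{M}$ and $\dim \mathcal{S}_L < \dim \mathcal{M}$. For this $X$ we have $\tr(X A^{i_1}\cdots A^{i_L}) = 0$ for all tuples, hence $\Gamma^L_A(X) = 0$ with $X \neq 0$, so $\Gamma^L_A|_\mathcal{M}$ is not injective. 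Combining the two directions gives the equivalence.

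The main obstacle — really the only subtlety — is making precise that the trace form $\tr(XM)$ is nondegenerate on $\mathcal{M}$ and, more importantly, that a matrix in $\mathcal{M}$ orthogonal to all of $\mathcal{S}_L$ with respect to this form must vanish when $\mathcal{S}_L = \mathcal{M}$. This is just the statement that $\mathcal{M}$, being a direct sum of full matrix algebras, carries a nondegenerate symmetric bilinear form $(X,Y)\mapsto\tr(XY)$ (equivalently, $\tr(E_{ab}E_{ba}) = 1$ for matrix units within a block, and products of units from different blocks have zero trace). One should be mildly careful that $X$ ranges over $\mathcal{M}$ and not over all of $\mathcal{M}_{D\times D}$ where $D = \sum_j D_j$: a matrix with nonzero off-block entries could be trace-orthogonal to all of $\mathcal{S}_L$ without being zero, which is exactly why the proposition restricts $\Gamma^L_A$ to $\mathcal{M}$. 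Once this linear-algebra point is stated cleanly, both implications follow in a line each.
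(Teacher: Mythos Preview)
Your argument is correct and follows essentially the same route as the paper's: both directions are proved by exploiting a nondegenerate pairing on $\mathcal{M}$ to identify $\ker(\Gamma^L_A|_{\mathcal{M}})$ with the annihilator of $\mathcal{S}_L$ in $\mathcal{M}$. The only cosmetic difference is that the paper uses the Hilbert--Schmidt inner product $\tr(X^\dagger M)$ (hence a small adjoint bookkeeping step) whereas you use the bilinear trace form $\tr(XM)$ directly.
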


	\begin{proof}
		Assume injectivity of $\Gamma^L_A|_{\mathcal{M}}$, then any element $X\in \mathcal{S}^\perp \cap \mathcal{M}$   satisfies $\Gamma^L_A(X^\dagger)=0$ because the coefficients of the the vector $\Gamma^L_A(X^\dagger)$ are inner products of $X$ with elements in $\mathcal{S}$.  This implies $X=0$. If $\mathcal{S}=\mathcal{M}$, then for every non zero $X\in \mathcal{M}$, $X^\dagger$ has  a non vanishing inner product with at least one element $A^{i_1}A^{i_2}\ldots A^{i_L}$, and therefore $\Gamma^L_A(X)$ is non zero.
	\end{proof}

	\begin{proposition} \label{cor:SpanForCF}
		For a tensor $A$ in CF as in \cref{eq:BNTexpansion}, for $L$ large enough
		the space $\mathcal{S}_L$  (\cref{def:SpanOfProducts}) has the form:
		\begin{equation} \label{eq:SLCF}
			\mathcal{S}_L=\left\{ \oplus_{j=1}^m\oplus_{q=1}^{r_j}\mu^L_{j,q}V_{j,q}^{-1}M_jV_{j,q} \left| \right. M_j \in \mathcal{M}_{D_j\times D_j} \right\}
		\end{equation}
	\end{proposition}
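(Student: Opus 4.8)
The plan is to derive the stated description of $\mathcal{S}_L$ from the span property of the BNT (\cref{prop:SpanMatAlg}) by stripping off, one at a time, the conjugations $V_{j,q}$, the scalars $\mu_{j,q}$, and the multiplicities $r_j$, reducing everything to the bare direct sum of the normal blocks $A_j$. Write $\mathcal{S}_L(T)$ for the span of $L$-fold products associated to a tensor $T$. The first step is the bookkeeping identity: since $A^i=\oplus_{j=1}^m\oplus_{q=1}^{r_j}\mu_{j,q}V_{j,q}^{-1}A^i_jV_{j,q}$ is block diagonal, for any coefficients $c_\alpha$ and index strings $(i_1^{(\alpha)},\dots,i_L^{(\alpha)})$ one has
\[
\sum_\alpha c_\alpha\, A^{i_1^{(\alpha)}}\cdots A^{i_L^{(\alpha)}}
=\bigoplus_{j=1}^m\bigoplus_{q=1}^{r_j}\mu_{j,q}^{\,L}\,V_{j,q}^{-1}\Bigl(\sum_\alpha c_\alpha\, A^{i_1^{(\alpha)}}_j\cdots A^{i_L^{(\alpha)}}_j\Bigr)V_{j,q}\ .
\]
Setting $M_j:=\sum_\alpha c_\alpha A^{i_1^{(\alpha)}}_j\cdots A^{i_L^{(\alpha)}}_j\in\mathcal{M}_{D_j\times D_j}$, this already yields the inclusion ``$\subseteq$'' in \cref{eq:SLCF}, valid for every $L\ge 1$.

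For the reverse inclusion I would introduce the auxiliary tensor $\hat A:=\oplus_{j=1}^m A_j$, which is in CF with each block a distinct element of its BNT $\{A_j\}$, so \cref{prop:SpanMatAlg} applies: for $L$ large enough, $\mathcal{S}_L(\hat A)=\oplus_{j=1}^m\mathcal{M}_{D_j\times D_j}$. Since applying the \emph{same} linear combination to the matrices $\hat A^i$ produces precisely $\oplus_{j=1}^m M_j$, this says that, as $\{c_\alpha\}$ and the index strings range over all choices, the tuple $(M_1,\dots,M_m)$ attains \emph{every} tuple of matrices $M_j\in\mathcal{M}_{D_j\times D_j}$. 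Feeding such a combination back into the displayed identity for $A$ exhibits $\oplus_{j,q}\mu_{j,q}^{\,L}V_{j,q}^{-1}M_jV_{j,q}$ as an element of $\mathcal{S}_L(A)$ for arbitrary prescribed $M_j$, which is exactly the inclusion ``$\supseteq$''. Taking $L$ at least as large as the threshold furnished by \cref{prop:SpanMatAlg} makes both inclusions hold, giving \cref{eq:SLCF}.

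The step I expect to require the most care is the one connecting $\hat A$ and $A$: the surjectivity of the assignment $(\{c_\alpha\},\,\text{strings})\mapsto(M_1,\dots,M_m)$ is what \cref{prop:SpanMatAlg} delivers for $\hat A$, and one must check it transfers verbatim to $A$. It does, precisely because the block decomposition of a product of the $A^i$ uses one and the same combination in every block $(j,q)$, while the extra factors $\mu_{j,q}^{\,L}$ and the conjugations by $V_{j,q}$ are invertible and are applied block-by-block afterwards, so they do not change which matrices are reachable inside each block. The remaining points — that $\hat A$ meets the hypotheses of \cref{prop:SpanMatAlg} and that the threshold $L$ can be taken uniform — are routine.
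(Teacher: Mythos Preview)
Your proof is correct and follows essentially the same approach as the paper: introduce the auxiliary tensor $\hat A=\oplus_j A_j$ (the paper calls it $\tilde A$), apply \cref{prop:SpanMatAlg} to it, and transfer the resulting span back to $A$ using the same linear combinations. The paper's proof is terser---it only spells out the ``$\supseteq$'' direction---while you also write out the ``$\subseteq$'' inclusion and the reasons the transfer is lossless, but the argument is the same.
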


	\begin{proof}
		Consider a tensor $\tilde{A}$ which consists of the BNT of $A$ without multiplicities (as in \cref{prop:SpanMatAlg}). An element  $S=\oplus_{j=1}^m\oplus_{q=1}^{r_j}\mu^L_{j,q}V_{j,q}^{-1}M_jV_{j,q}$  in $\mathcal{S}_L$ is obtained by taking the same linear combination of the matrix products  $A^{i_1}A^{i_2}\ldots A^{i_L}$ as the one which generates $\tilde{S}= \oplus_{j=1}^m M_j$  from  the matrix products $\tilde{A}^{i_1}\tilde{A}^{i_2}\ldots \tilde{A}^{i_L}$.
	\end{proof}

	\begin{proposition} \label{prop:OrthogBNT}
		Let $\{A_j\}_{j=1}^m$ be a BNT of $A$, and let each $A_j$ appear in $A$ with no multiplicities, i.e.\ $A^i=\oplus_{j=1}^m \nu_j A_j^i$. For $L$ large enough the image of the algebra of block diagonal matrices 
		$\mathcal{M} :=\oplus_{j=1}^m \mathcal{M}_{D_j\times D_j}$,
		where $D_j$ is the bond dimension of $A_j$,
		under the map $\Gamma_A^L$ is a direct sum: 
		\begin{equation*}
			\Gamma_A^L \left( \mathcal{M} \right) := 
			\left\{ \Gamma_A^L(X) \left|\right. X\in  \mathcal{M}  
			\right\} = 
			\bigoplus_{j=1}^m  \Gamma_{A_j}^L (\mathcal{M}_{D_j\times D_j}) \ .
		\end{equation*} 
		In particular 
		$
		\sum_{j=1}^m \Gamma_{A_j}^L(X_j) = 0$ implies $ X_j=0 \;\;\forall j = 1,\ldots,m$. 
		\textup{\cite{Perez-Garcia2007}}
	\end{proposition}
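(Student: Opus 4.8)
The statement is essentially a repackaging of the two preceding propositions, so the plan is to reduce it to the injectivity of $\Gamma^L_A$ restricted to the block‑diagonal algebra $\mathcal{M}$ and then unwind the direct‑sum structure by hand.

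First I would fix $L$ large enough to apply \cref{prop:SpanMatAlg} to $A$: its hypothesis is met here because, by assumption, each normal block $A_j$ occurs in $A$ exactly once (there is no $q$‑summation), so for such $L$ we have $\mathcal{S}_L = \mathcal{M} := \bigoplus_{j=1}^m \mathcal{M}_{D_j\times D_j}$. Then \cref{prop:SpanAndInjectivity} immediately upgrades this to the injectivity of $\Gamma^L_A|_{\mathcal{M}}$.

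Next I would transfer this to the individual blocks. From the block‑diagonality of matrix products, $A^{i_1}\cdots A^{i_L} = \bigoplus_{j} \nu_j^L\, A_j^{i_1}\cdots A_j^{i_L}$, together with \cref{def:GammaMap}, one obtains for every $X = \bigoplus_j X_j \in \mathcal{M}$
\begin{equation*}
	\Gamma^L_A\!\left(\bigoplus_{j=1}^m X_j\right) = \sum_{j=1}^m \nu_j^L\, \Gamma^L_{A_j}(X_j)\ .
\end{equation*}
Since each $\nu_j\neq 0$, as the $X_j$ vary the right‑hand side sweeps out $\sum_{j=1}^m \Gamma^L_{A_j}\!\left(\mathcal{M}_{D_j\times D_j}\right)$, which identifies $\Gamma^L_A(\mathcal{M})$ with that sum of subspaces. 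To see the sum is direct — equivalently, to obtain the ``in particular'' clause — I would start from $\sum_j \Gamma^L_{A_j}(X_j)=0$, put $Y_j:=\nu_j^{-L}X_j$ so that $\Gamma^L_A(\bigoplus_j Y_j)=0$, and apply the injectivity just established to get $\bigoplus_j Y_j=0$, hence $X_j=0$ for all $j$.

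I do not expect a genuine obstacle: essentially all the content sits in \cref{prop:SpanMatAlg} (the span property of a BNT, into which normality of the $A_j$ feeds through \cref{def:NormalTensor}), while \cref{prop:SpanAndInjectivity} merely restates it as injectivity. The only point demanding a little care is keeping track of the scalars $\nu_j$, so that the block‑wise conclusion is genuinely about the family $\{A_j\}$ and not only about $A$; as shown above this is a one‑line rescaling.
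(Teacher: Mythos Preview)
Your argument is correct. The paper does not supply its own proof of this proposition; it simply cites \cite{Perez-Garcia2007}. So there is nothing to compare against directly, and your derivation --- reducing the claim to \cref{prop:SpanMatAlg} and \cref{prop:SpanAndInjectivity}, then using the block identity $\Gamma^L_A(\oplus_j X_j)=\sum_j \nu_j^L\,\Gamma^L_{A_j}(X_j)$ together with the rescaling $Y_j=\nu_j^{-L}X_j$ --- is a clean and self-contained proof. The only tacit assumption is $\nu_j\neq 0$, which is harmless since a vanishing block contributes nothing and can be discarded from the BNT expansion.
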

	
	\cref{prop:OrthogBNT} allows us to prove the  following lemma:
	\begin{lemma}  \label{lem:BAAAA_CFequ}
		Let $A$ be a tensor in CF with BNT $\{A_j\}$,  and let  $S$ and $T$ be tensors with the exact same block structure as $A$:
		\begin{equation*} 
			A^i=\oplus_{j=1}^m\oplus_{q=1}^{r_j} \mu_{jq} V_{j,q}^{-1} A^i_j V_{j,q}  
		\end{equation*}
		\begin{equation*}
			S^i=\oplus_{j=1}^m\oplus_{q=1}^{r_j} \mu_{jq} V_{j,q}^{-1} S^i_j V_{j,q}  
		\end{equation*}
		\begin{equation*}
			T^i=\oplus_{j=1}^m\oplus_{q=1}^{r_j} \mu_{jq} V_{j,q}^{-1} T^i_j V_{j,q}  \ .
		\end{equation*}
		If the following equality holds for any length $N$:
		\begin{equation} \label{eq:STSTST}
			\sum_{\{i\}} \tr\left(S^{i_1}A^{i_2}\ldots A^{i_N} \right) \ket{i_1,i_2,\ldots,i_N} = 
			\sum_{\{i\}} \tr\left(T^{i_1} A^{i_2}\ldots A^{i_N} \right) \ket{i_1,i_2,\ldots,i_N} \ ,
		\end{equation}
		which in tensor notation reads:
		\begin{equation*}
			\begin{tikzpicture}[baseline=-1mm]
			\foreach \i in {1,2,4} \pic[pic text = $A$] at (\i,0) {mpstensor};
			\foreach \i in {0} \pic[pic text = $S$] at (\i,0) {mpstensor};
			\node at (3,0) {$\ldots$}; 
			\draw (-.5,0)--(-.5,-.5)--(4.5,-.5)--(4.5,0);
			\end{tikzpicture} \ = \ 
			\begin{tikzpicture}[baseline=-1mm]
			\foreach \i in {1,2,4} \pic[pic text = $A$] at (\i,0) {mpstensor};
			\foreach \i in {0} \pic[pic text = $T$] at (\i,0) {mpstensor};
			\node at (3,0) {$\ldots$}; 
			\draw (-.5,0)--(-.5,-.5)--(4.5,-.5)--(4.5,0);
			\end{tikzpicture}  \ ,
		\end{equation*} 
		then $S=T$.
	\end{lemma}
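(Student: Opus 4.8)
The plan is to set $R := S-T$. Componentwise $R^i := S^i - T^i$ inherits the block structure of $A$, namely $R^i=\oplus_{j=1}^m\oplus_{q=1}^{r_j}\mu_{j,q}V_{j,q}^{-1}R^i_jV_{j,q}$ with $R^i_j := S^i_j - T^i_j \in \mathcal{M}_{D_j\times D_j}$, and it suffices to prove $R=0$. Subtracting the two sides of \cref{eq:STSTST} and using orthonormality of the basis $\{\ket{i_1\cdots i_N}\}$ yields $\tr(R^{i_1}A^{i_2}\cdots A^{i_N})=0$ for all indices $i_1,\dots,i_N$ and all $N$. Holding $i_1$ fixed and writing $L:=N-1$, this is exactly the statement that $\Gamma^L_A(R^{i_1})=0$ for every $i_1$ and every $L$, in the notation of \cref{def:GammaMap}.

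Next I would expand $\Gamma^L_A(R^{i_1})$ using the common block structure. Since $R^{i_1}$ and each $A^{i_l}$ are block diagonal of the above form, the product $R^{i_1}A^{i_2}\cdots A^{i_{L+1}}$ has $(j,q)$-block $\mu_{j,q}^{L+1}V_{j,q}^{-1}\bigl(R^{i_1}_jA^{i_2}_j\cdots A^{i_{L+1}}_j\bigr)V_{j,q}$, and cyclicity of the trace removes the $V_{j,q}$. Hence
\[
	\Gamma^L_A(R^{i_1}) \;=\; \sum_{j=1}^m c_j^{(L)}\,\Gamma^L_{A_j}\!\bigl(R^{i_1}_j\bigr)\ ,\qquad c_j^{(L)} := \sum_{q=1}^{r_j}\mu_{j,q}^{L+1}\ .
\]
Now \cref{prop:OrthogBNT}, applied to the multiplicity-free tensor $\oplus_{j}A_j$ (which has the same BNT as $A$), says that for $L$ above a fixed threshold $L^*$ the identity $\sum_j\Gamma^L_{A_j}(X_j)=0$ forces every $X_j=0$. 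Taking $X_j = c_j^{(L)}R^{i_1}_j$ gives $c_j^{(L)}R^{i_1}_j = 0$ for all $j$ and all $L\ge L^*$.

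It then remains to show that for each $j$ there is at least one $L\ge L^*$ with $c_j^{(L)}\neq 0$, which forces $R^{i_1}_j=0$; this is the one point that needs care. When a block occurs with multiplicity $r_j>1$, the differing phases of the scalars $\mu_{j,q}$ (which we may assume nonzero, since a block with $\mu_{j,q}=0$ contributes nothing to $S$, to $T$, or to any MPV) can make $c_j^{(L)}$ vanish for particular exponents. I would rule out simultaneous vanishing over the $t$ consecutive exponents $L\in\{L^*,\dots,L^*+t-1\}$ by a Vandermonde argument: writing the distinct values among $\{\mu_{j,q}\}_q$ as $\lambda_1,\dots,\lambda_t$ with multiplicities $n_s\ge 1$, one has $c_j^{(L)}=\sum_{s=1}^t n_s\lambda_s^{L+1}$, so vanishing at $t$ consecutive exponents, together with invertibility of the Vandermonde matrix built from the distinct $\lambda_s$ and the fact that $\lambda_s\neq 0$, gives $n_s=0$, a contradiction. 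Hence such an $L$ exists, $R^{i_1}_j=0$, and since $j$ and $i_1$ are arbitrary we conclude $R^i=0$ for all $i$, i.e.\ $S=T$.

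The trace/block bookkeeping of the second step and the invocation of \cref{prop:OrthogBNT} are routine; the only genuinely delicate point is the non-vanishing of the coefficients $c_j^{(L)}$ above the threshold $L^*$, handled by the elementary linear-recurrence/Vandermonde observation above.
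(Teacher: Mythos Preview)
Your proof is correct and follows essentially the same route as the paper's: expand the trace over the common block structure, invoke \cref{prop:OrthogBNT} to decouple the $j$-blocks, and then argue that the scalar prefactor $\sum_q\mu_{j,q}^{N}$ cannot vanish for all sufficiently large $N$. The only difference is that the paper simply asserts this last non-vanishing (``since $\{\mu_{j,q}\}$ are non-zero, there exists an $N\geq L_0$ such that $\sum_q\mu_{j,q}^N\neq 0$''), whereas you supply the explicit Vandermonde justification; otherwise the arguments are identical.
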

	\begin{proof}
		Plugging in the  block structure of the tensors  into \cref{eq:STSTST} we obtain:
		\begin{equation*}
			\begin{split}
			0 =	& \sum_{\{i\}} \tr\left(
			\oplus_{j=1}^m\oplus_{q=1}^{r_j}\mu^{N}_{j,q} 
			\left[
			T_{j}^{i_1} - S_{j}^{i_1}
			\right]
			A_j^{i_2}\ldots A_j^{i_N}  
			\right) 
			\ket{i_1,i_2,\ldots,i_N}
			\\ 
			= &   
			\sum_{j=1}^m  \sum_{q=1}^{r_j} \mu^{N}_{j,q} 
			\sum_{\{i\}}
			\tr\left(
			\left[
			T_{j }^{i_1} - S_{j }^{i_1}
			\right]
			A_j^{i_2}\ldots A_j^{i_N}  
			\right) 
			\ket{i_1,i_2,\ldots,i_N} \ .
			\end{split} 
		\end{equation*}
		Plugging in the definition of the map $\Gamma_A$ (\cref{def:GammaMap}) 
		\begin{equation*}
			\sum_{j=1}^m \sum_{i_1}  \Gamma_{A_j}^{N-1} 
			\left(
			\sum_{q=1}^{r_j} \mu^{N}_{j,q}
			\left[ 	T_{j}^{i_1} - S_{j}^{i_1} \right] \right) \otimes \ket{i_1} =0 \ .
		\end{equation*}
		According to  \cref{prop:OrthogBNT}, for $N$ large enough ($\geq L_0$)  we have for all $i_1$ and all $j$  
		\begin{equation*}
			\sum_{q=1}^{r_j} \mu^{N}_{j,q}
			\left[ 	T_{j}^{i_1} -S_{j}^{i_1} \right]   =0 \ .
		\end{equation*}
		For all $j$, since $\{\mu_{j,q}\}_{q=1}^{r_j}$ are non-zero, there exists an
		$N\geq L_0$ such that 
		$\sum_{q=1}^{r_j} \mu^{N}_{j,q}\neq0$. Therefore for all $j$ we have:
		\begin{equation} \label{eq:BAAAA_CFequ}
			T_{j}^{i} = S_{j}^{i} \ .
		\end{equation} 	
	\end{proof}

	We review the fundamental theorem of MPV \cite{Cirac2017} and apply it to the case of a MPV with a global symmetry.
	
	\begin{proposition}  \label{prop:similarBNTs} \textup{\cite{Cirac2017}}
		Let $A$ and $B$ be tensors in CF (\cref{eq:BNTexpansion}) with BNT $\{A_{j}\}_{j=1}^{g_a}$ and $\{B_{k}\}_{k=1}^{g_b}$ respectively. If for all $N$ the tensors $A$ and $B$ generate MPVs proportional to each other, then   $g_a = g_b$ and for every $j$ there is a unique $k(j)$, a unitary matrix  $X_j$ and a phase $e^{i\phi_j}$ such that: 
		\begin{equation*}
			A^i_j=e^{i\phi_j}X_j^{-1}B_{k(j)}^iX_j \ . 
		\end{equation*} 
	\end{proposition}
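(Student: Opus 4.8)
The plan is to bootstrap from the fundamental theorem of MPV for \emph{normal} (injective) tensors \cite{Cirac2017}: two normal tensors in CFII whose MPVs are proportional for all $N$ differ by a unitary gauge transformation and a phase, and the MPVs of pairwise inequivalent normal tensors are linearly independent on sufficiently long chains. The bridge to the CF case is that the MPV of a CF tensor is a finite ``generalized exponential sum'' of the MPVs of its BNT elements. At the outset I would observe that, replacing each normal block by a CFII copy similar to it up to a phase — this alters only the matrices $V_{j,q}$ in \cref{eq:BNTexpansion} and leaves every MPV unchanged (cf.\ \cref{prop:wlogCF}) — we may assume the BNT elements $\{A_j\}$ and $\{B_k\}$ are themselves in CFII. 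Then, using cyclicity of the trace (so the $V_{j,q}$ in \cref{eq:BNTexpansion} cancel), I would write
\begin{equation*}
  \ket{\psi^N_A}=\sum_{j=1}^{g_a} f_j(N)\,\ket{\psi^N_{A_j}},\qquad f_j(N)=\sum_{q=1}^{r_j}\mu_{j,q}^{N},
\end{equation*}
and similarly $\ket{\psi^N_B}=\sum_{k}g_k(N)\,\ket{\psi^N_{B_k}}$. Since $r_j\ge1$ and the $\mu_{j,q}$ are nonzero, $f_j$ is a nonzero exponential sequence (after collecting equal bases at least one term survives, with positive integer coefficient), so by a Vandermonde argument $f_j$ cannot vanish on $r_j$ consecutive integers, in particular not on a tail of $\mathbb{N}$; likewise each $g_k$.

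Next I would sort the blocks by similarity: partition $\{A_j\}\cup\{B_k\}$ into classes under the relation ``related by an invertible similarity and a nonzero scalar'' (for CFII normal tensors that scalar is forced to be a phase, so this is the equivalence used in \cref{def:BNT}), and pick a CFII representative $C_1,\dots,C_s$ of each class. Because a BNT consists of pairwise inequivalent tensors, each class contains at most one $A$-block and at most one $B$-block. If $A_j=e^{i\alpha_j}W_j^{-1}C_\ell W_j$ lies in class $\ell$ (for some invertible $W_j$), then cyclicity of the trace gives $\ket{\psi^N_{A_j}}=e^{iN\alpha_j}\ket{\psi^N_{C_\ell}}$, and likewise for the $B_k$; hence $\ket{\psi^N_A}=\sum_{\ell=1}^{s} h^A_\ell(N)\,\ket{\psi^N_{C_\ell}}$ and $\ket{\psi^N_B}=\sum_{\ell} h^B_\ell(N)\,\ket{\psi^N_{C_\ell}}$, where $h^A_\ell=e^{iN\alpha_j}f_j$ if class $\ell$ contains the $A$-block $A_j$ and $h^A_\ell\equiv0$ otherwise (and analogously for $h^B_\ell$).

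Then I would feed in the hypothesis. Writing $\ket{\psi^N_A}=\lambda_N\ket{\psi^N_B}$ with the (possibly $N$-dependent) proportionality constants $\lambda_N$, the difference gives $\sum_{\ell}\big(h^A_\ell(N)-\lambda_N h^B_\ell(N)\big)\ket{\psi^N_{C_\ell}}=0$ for every $N$. Since $\ket{\psi^N_{C_\ell}}=\Gamma^N_{C_\ell}(\id)$ and the $C_\ell$ are pairwise inequivalent normal tensors, \cref{prop:OrthogBNT} applied to the CF tensor $\bigoplus_\ell C_\ell$ (with $X_\ell=(h^A_\ell(N)-\lambda_N h^B_\ell(N))\id$) forces every coefficient to vanish once $N$ exceeds the injectivity lengths of the $C_\ell$; thus $h^A_\ell(N)=\lambda_N h^B_\ell(N)$ for all $\ell$ and all large $N$. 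If some class $\ell$ carried an $A$-block $A_j$ but no $B$-block, then $h^B_\ell\equiv0$ would force $f_j(N)=0$ for all large $N$ — impossible, since $f_j$ does not vanish on a tail of $\mathbb{N}$. By symmetry, no class carries a $B$-block without an $A$-block. Hence the classes with an $A$-block are precisely those with a $B$-block, each such class holds exactly one of each, and this is the asserted bijection $j\mapsto k(j)$, with $g_a=g_b$. Finally, for a matched pair $A_j$, $B_{k(j)}$ — both CFII normal tensors lying in one class, and therefore generating proportional MPVs for all $N$ — the fundamental theorem for normal tensors, together with the rigidity of the CFII normalization, yields a unitary $X_j$ and a phase $e^{i\phi_j}$ with $A^i_j=e^{i\phi_j}X_j^{-1}B^i_{k(j)}X_j$, which is the claim.

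The hard part will be the normal-tensor input invoked above — that inequivalent normal tensors give linearly independent MPVs on long enough chains (which \cref{prop:OrthogBNT} does deliver) and that proportional normal tensors in CFII are related by a \emph{unitary} gauge and a phase; this is exactly the content of the fundamental theorem for normal tensors. Apart from that, the only delicate point is the threshold on $N$: it must exceed the injectivity lengths of all the $C_\ell$ and be large enough for the exponential sequences $f_j$ to be controlled on a tail. The remaining ingredients — cyclicity of the trace and the non-vanishing of a nonzero exponential sequence on a tail — are routine.
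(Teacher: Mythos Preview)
The paper does not give its own proof of this proposition; it states the result with a citation and says explicitly that ``\cref{prop:similarBNTs} was proved in \cite{Cirac2017}''. So there is nothing in the paper to compare your argument against directly.

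That said, your argument is correct and is essentially the standard route taken in \cite{Cirac2017}: expand each CF tensor over its BNT, pool the normal blocks of $A$ and $B$ into equivalence classes, and use the linear independence of MPVs generated by pairwise inequivalent normal tensors (precisely \cref{prop:OrthogBNT}, applied to $\bigoplus_\ell C_\ell$) to read off $h^A_\ell(N)=\lambda_N h^B_\ell(N)$ for all large $N$. The Vandermonde step correctly shows that no $f_j$ (or $g_k$) can vanish on a tail of $\mathbb{N}$, which forces the bijection, and the CFII normalization does pin down the scalar as a phase and the similarity as unitary in the last step. One remark worth making explicit: your ``by symmetry'' appeal requires reading ``proportional'' as a symmetric relation (either both MPVs vanish, or both are nonzero and parallel), so that the roles of $A$ and $B$ can genuinely be swapped; with that reading the argument closes.
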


	\begin{remark} \label{rem:XupToConst}
		 Note that $X_j$ are   determined up to a phase.
	\end{remark}
	
	\Cref{prop:similarBNTs} was proved in \cite{Cirac2017} and was used to prove the following:
	
	\begin{theorem} [The Fundamental Theorem of MPV] \label{thm:fundMPV}
		 Let two tensors $A$ and $B$ in CF (CFII) generate the same MPV for all $N$. Then they have the same block structure, and there exists an invertible (unitary) matrix $X$:
		\begin{equation} \label{eq:XStruct3}
			X = \oplus_{j=1}^m \oplus_{q=1}^{r_j} X_j \ ,
		\end{equation}
		which is block diagonal, with the same block structure as $A$, and a permutation matrix $\Pi$ between the blocks, such that:
		\begin{equation*}
			\begin{tikzpicture}[baseline=-1mm]
			\pic[pic text = $A$] at (0,0) {mpstensor};
			\end{tikzpicture} \ = \
			\begin{tikzpicture}[baseline=-1mm]
			\pic[pic text = $\myinv{X}$] at (0,0) {horizontalmatrix};
			\pic[pic text = $\myinv{\Pi}$] at (1,0) {horizontalmatrix};
			\pic[pic text = $B$] at (2,0) {mpstensor};
			\pic[pic text = $\Pi$] at (3,0) {horizontalmatrix};
			\pic[pic text = $X$] at (4,0) {horizontalmatrix};	   
			\end{tikzpicture} \ .
		\end{equation*} 
	\end{theorem}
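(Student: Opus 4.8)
The plan is to bootstrap from \cref{prop:similarBNTs}, which already matches the bases of normal tensors of $A$ and $B$ up to a similarity and a phase; the work that remains is to separate the distinct normal blocks from one another and to pin down the multiplicities and the scalars $\mu_{j,q}$. First I would write $A$ and $B$ in BNT form (\cref{eq:BNTexpansion}), $A^i=\bigoplus_{j=1}^{m_A}\bigoplus_{q=1}^{r_j}\mu_{j,q}V_{j,q}^{-1}A^i_jV_{j,q}$ and $B^i=\bigoplus_{k=1}^{m_B}\bigoplus_{p=1}^{s_k}\rho_{k,p}W_{k,p}^{-1}B^i_kW_{k,p}$. Since the trace is cyclic the intra-block similarities $V_{j,q},W_{k,p}$ are invisible to the MPV, so $\ket{\psi^N_A}=\sum_j\big(\sum_q\mu_{j,q}^N\big)\ket{\psi^N_{A_j}}$ and likewise for $B$. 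Applying \cref{prop:similarBNTs} with proportionality constant $1$ gives $m_A=m_B=:m$, a bijection $j\mapsto k(j)$, unitaries $X_j$ and phases $e^{i\phi_j}$ with $A^i_j=e^{i\phi_j}X_j^{-1}B^i_{k(j)}X_j$, hence $\ket{\psi^N_{A_j}}=e^{iN\phi_j}\ket{\psi^N_{B_{k(j)}}}$ for all $N$.

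Next I would isolate the blocks. For $N$ large the vectors $\{\ket{\psi^N_{A_j}}\}_{j=1}^m$ are linearly independent: applying \cref{prop:OrthogBNT} to the tensor $\hat A$ whose blocks are the distinct $A_j$ with unit coefficient, a relation $\sum_jc_j\ket{\psi^N_{A_j}}=\Gamma^N_{\hat A}\big(\bigoplus_jc_j\,\id\big)=0$ forces every $c_j=0$; the same holds for the $B_k$. Matching the coefficient of $\ket{\psi^N_{B_k}}$ in $\ket{\psi^N_A}=\ket{\psi^N_B}$, via the bijection $k(j)$ and the phase identity above, yields for each $j$ and all large $N$ the scalar identity $\sum_q\big(\mu_{j,q}e^{i\phi_j}\big)^N=\sum_p\rho_{k(j),p}^N$. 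A sequence $N\mapsto\sum_qa_q^N$ has generating function $\sum_qa_qz/(1-a_qz)$, a proper rational function whose poles, counted with residue, determine the multiset $\{a_q\}$; hence two such sequences agreeing for all large $N$ have the same multiset of bases. Therefore $r_j=s_{k(j)}$ and, after reindexing the $q$'s within each block, $\mu_{j,q}e^{i\phi_j}=\rho_{k(j),q}$. In particular $A$ and $B$ have the same block structure.

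Finally I would assemble $X$ and $\Pi$. Let $\Pi$ be the permutation matrix carrying the $(j,q)$ block of $A$ to the $(k(j),q)$ block; conjugating $B$ by $\Pi$ we may assume $k(j)=j$. On block $(j,q)$, substituting $A^i_j=e^{i\phi_j}X_j^{-1}B^i_jX_j$ and $\mu_{j,q}e^{i\phi_j}=\rho_{j,q}$ shows the $(j,q)$ block of $A^i$ equals $\rho_{j,q}(X_jV_{j,q})^{-1}B^i_j(X_jV_{j,q})$, which is conjugate by $Z_{j,q}:=W_{j,q}^{-1}X_jV_{j,q}$ to the $(j,q)$ block $\rho_{j,q}W_{j,q}^{-1}B^i_jW_{j,q}$ of $B^i$. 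Setting $X:=\bigoplus_j\bigoplus_qZ_{j,q}$ then produces an invertible, block-diagonal matrix with the same block structure as $A$ (in the canonical gauge $V_{j,q}=W_{j,q}=\id$ it is exactly $\bigoplus_j\bigoplus_qX_j$) such that $A^i=X^{-1}\Pi^{-1}B^i\Pi X$, which is the asserted identity. In the CFII case \cref{prop:similarBNTs} provides unitary $X_j$, and since CFII fixes the fixed points of the normal blocks the residual freedom in each $Z_{j,q}$ can be chosen unitary, making $X$ unitary.

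I expect the second step to be the main obstacle: once \cref{prop:similarBNTs} has matched the two bases of normal tensors block by block up to a phase, untangling the multiplicities and the scalars $\mu_{j,q}$ simultaneously needs the linear independence of the MPVs of distinct normal tensors (to isolate a single block) and the uniqueness of the representation of a sequence as a sum of powers (to equate coefficient multisets across $A$ and $B$). Promoting the block-diagonal similarity to a unitary one in the CFII case is a further, but routine, point.
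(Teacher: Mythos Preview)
The paper does not give its own proof of this theorem; it is stated as a quotation from \cite{Cirac2017}, with \cref{prop:similarBNTs} named as the ingredient used there. So there is nothing in the paper to compare your argument against line by line. That said, your proposal is a correct and natural reconstruction of how the theorem follows from the two propositions the paper does supply, \cref{prop:similarBNTs} and \cref{prop:OrthogBNT}: match the two BNTs up to phase and similarity, use linear independence of $\{\ket{\psi^N_{A_j}}\}_j$ to isolate a single $j$, and then read off the multiset of scalars $\{\mu_{j,q}\}_q$ from the power-sum identity. The generating-function step is the standard way to recover a multiset from its power sums, and the final assembly of $\Pi$ and $X$ is straightforward.

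Two small points worth tightening. First, the literal form $X=\bigoplus_j\bigoplus_q X_j$ with $q$-independent blocks only drops out in the gauge $V_{j,q}=W_{j,q}=\id$, as you yourself note; for a general CF tensor your $Z_{j,q}=W_{j,q}^{-1}X_jV_{j,q}$ genuinely depends on $q$. This still yields a block-diagonal invertible $X$ with the same block structure as $A$, which is what the subsequent applications actually use (and in \cref{thm:GlobalSymmUnitary} the two tensors share the same $V_{j,q}$, so the issue disappears). Second, your CFII remark that $Z_{j,q}$ can be taken unitary is a bit quick: beyond the unitarity of $X_j$ from \cref{prop:similarBNTs}, you need that the intra-block similarities $V_{j,q},W_{j,q}$ relating CFII blocks to CFII BNT representatives can themselves be chosen unitary; this holds because two normal tensors in CFII that are similar are in fact unitarily similar (the trace-preserving and positive-fixed-point conditions fix the gauge up to a unitary), but it deserves a sentence.
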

	 
	We now apply the fundamental theorem of MPV to the case when a MPV generated by a tensor $A$ in CFII is invariant under the action of the same unitary operator on every site:

	\begin{corollary} \label{thm:GlobalSymmUnitary}
		Let $A$  be a tensor in CFII (\cref{eq:BNTexpansion}) generating a MPV with a global invariance under a unitary $\CC$:
		\begin{equation*}
			\CC^{\otimes N}\ket{\psi^N_A} = \ket{\psi^N_A} \ ,
		\end{equation*}
		then $A$ transforms under the unitary matrix as:
		\begin{equation*}
			\begin{tikzpicture}[baseline=-1mm]
			\pic[pic text = $A$] at (0,0) {mpstensor};
			\pic[pic text = $\CC$] at (0,1) {verticalmatrix};
			\end{tikzpicture} \ = \
			\begin{tikzpicture}[baseline=-1mm]
			\pic[pic text = $\myinv{X}$] at (0,0) {horizontalmatrix};
			\pic[pic text = $\myinv{\Pi}$] at (1,0) {horizontalmatrix};
			\pic[pic text = $A$] at (2,0) {mpstensor};
			\pic[pic text = $\Pi$] at (3,0) {horizontalmatrix};
			\pic[pic text = $X$] at (4,0) {horizontalmatrix};	   
			\end{tikzpicture} \ ,
		\end{equation*} 
		where $X$ is a unitary matrix with the same block structure as $A$, and is unitary in each block (\cref{eq:XStruct3}), and $\Pi$ is a permutation between the $j$ blocks of $A$ (it does not permute the $q$ blocks). 
	\end{corollary}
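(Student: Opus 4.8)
The plan is to reduce the claim to the Fundamental Theorem of MPV (\cref{thm:fundMPV}). The first step is to rephrase the symmetry hypothesis as an equality of MPVs. Let $A_{\CC}$ denote the tensor obtained by acting with $\CC$ on the physical leg of $A$, i.e.\ the one composed of the matrices $A_{\CC}^i=\sum_{i'}\CC_{i i'}A^{i'}$ (this is the left-hand side of the pictures in the statement, with the index convention of \cref{thm:TrivTensorTrans}). Inserting $\CC$ on every physical index of \cref{eq:MPVdef} and relabelling, exactly as in \cref{rem:changeOfBasis}, gives $\CC^{\otimes N}\ket{\psi^N_A}=\ket{\psi^N_{A_{\CC}}}$ for all $N$. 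Hence the hypothesis $\CC^{\otimes N}\ket{\psi^N_A}=\ket{\psi^N_A}$ is precisely the statement that $A$ and $A_{\CC}$ generate the same MPV for every $N$.

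The second step is to check that $A_{\CC}$ is again in CFII with the same block structure as $A$. Since $\CC$ is unitary, its matrix elements $\{\CC_{i i'}\}$ form a unitary matrix, so $A_{\CC}$ is obtained from $A$ by a unitary mixing of the physical-index matrices; by \cref{prop:ChangeOfBasisDoesntRuinCF} (which rests on \cref{prop:KrusMix}) this preserves the CF/CFII property and does not alter the constants $\mu_{j,q}$ nor the similarity matrices $V_{j,q}$ in the expansion \cref{eq:BNTexpansion} — only the normal blocks $A_j$ get replaced by unitarily mixed normal tensors $(A_j)_{\CC}$.

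The third step is to apply \cref{thm:fundMPV} to the pair $A_{\CC}$ (in the role of ``$A$'') and $A$ (in the role of ``$B$''), which generate the same MPV and are both in CFII. This yields a unitary matrix $X=\oplus_{j=1}^m\oplus_{q=1}^{r_j}X_j$, block diagonal with the same block structure as $A_{\CC}$ and hence as $A$, and a permutation matrix $\Pi$ between the blocks, such that $A_{\CC}=X^{-1}\Pi^{-1}A\,\Pi X$ in graphical notation — which is exactly the asserted relation, since $A_{\CC}$ is the picture ``$\CC$ on top of $A$''. It then remains to observe that $\Pi$ permutes only the $j$-blocks and that $X$ is unitary inside each block: this is read off from the proof of \cref{thm:fundMPV} via \cref{prop:similarBNTs}, whose permutation is the bijection $j\mapsto k(j)$ between the BNT labels of $A_{\CC}$ and of $A$ and whose $X_j$ is the per-block unitary intertwining $(A_j)_{\CC}$ with $A_{k(j)}$, the multiplicity ($q$) labels being matched by $X$ being constant in $q$ within each $j$-block rather than by $\Pi$. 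The only genuine friction I expect is in this last bookkeeping point — pinning down that the permutation furnished by the general statement of \cref{thm:fundMPV} respects the $j$/$q$ splitting of the canonical form — while the rest is a direct substitution into results already established above.
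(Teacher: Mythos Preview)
Your proposal is correct and follows essentially the same route as the paper: define the tensor $\tilde A$ (your $A_{\CC}$), observe it is in CFII with the same block structure because the physical-index mixing is unitary (the paper states this as a separate lemma, \cref{lem:BNTThetaA}, showing $\{(A_j)_{\CC}\}$ is a BNT of $\tilde A$), and then apply the Fundamental Theorem \cref{thm:fundMPV}. The ``friction'' you flag is resolved in the paper by exactly the observation you already made in step two --- that the coefficients $\mu_{j,q}$ are unchanged under the unitary mixing --- from which it follows that $\Pi$ can be taken to permute only among the $j$-labels.
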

	
	\begin{proof}
		The tensor $\tilde{A}$ consisting of the matrices  $\tilde{A}^i:=\sum_{i^\prime}\CC_{i,i^\prime} A^{i^\prime}$ generates $\CC^{\otimes N}\ket{\psi^N_A}$. Before finishing the proof, we shall now prove the following lemma: 
		\begin{lemma} \label{lem:BNTThetaA}
			Let $\{A_j\}$ be the BNT of $A$, then the tensors $\{\tilde{A}_j\}$ composed of  the matrices 
			$\tilde{A}^i_j = \sum_{i^\prime}\CC_{i,i^\prime} A_j^{i^\prime}$ form a BNT of $\tilde{A}$, and $\tilde{A}$ is in CFII.
		\end{lemma}
		 \begin{proof}[Proof: \cref{lem:BNTThetaA}]
		 	 $\tilde{A}_j$ are normal tensors and in CFII because a unitary mixture of the Kraus operators gives the same CP map (\cref{prop:KrusMix}), and they are a basis because $\{A_j\}$ is.
		 \end{proof}
		We can now apply the fundamental theorem of MPV to $A$ and $\tilde{A}$. In this case, however, because the coefficients $\mu_{j,q}$ in \cref{eq:BNTexpansion} are the same for $A$ and $\tilde{A}$, $\Pi$  permutes only between $j$ blocks.	
	\end{proof}
	
	Next we apply the above to  a MPV with a  global symmetry as in \cref{def:OneDofGlobalSymm}:
	\begin{equation*}
		\CC(g)^{\otimes N}\ket{\psi^N_A} = \ket{\psi^N_A} \ .
	\end{equation*}

\GlobalSymmForMPVinCF*
	\begin{proof}
		According to \cref{thm:GlobalSymmUnitary}, for every $g\in G$ we have:
		\begin{equation} \label{eq:preGroupProp}
			\sum_{i^\prime}\CC(g)_{i,i^\prime} A^{i^\prime} 
			= X(g)^{-1} \Pi(g)^{-1} A^i \Pi(g) X(g) \ .
		\end{equation}
		Consider the action of the group element $gh\in G$ in two ways using \cref{eq:preGroupProp}:
		\begin{equation*}
			\begin{split}
			X(gh)^{-1} \Pi(gh)^{-1} A^i \Pi(gh) X(gh) = & \sum_{i^\prime}\CC(gh)_{i,i^\prime} A^{i^\prime} \\
			=&\sum_{i^\prime,k}\CC(g)_{i,k} \CC(h)_{k,i^\prime} A^{i^\prime} \\
			=&\sum_{k}\CC(g)_{i,k}  X(h)^{-1} \Pi(h)^{-1} A^k \Pi(h) X(h) \\
			=& X(h)^{-1} \Pi(h)^{-1} X(g)^{-1} \Pi(g)^{-1} A^i \Pi(g) X(g) \Pi(h) X(h) \ .
			\end{split}
		\end{equation*}
		Taking the $L$-fold product  of the LHS and RHS for different indices $i_1,i_2,\ldots,i_L$ we obtain:
		\begin{equation} \label{eq:XPiAAAPiX}
			X(gh)^{-1} \Pi(gh)^{-1} \left(A^{i_1}A^{i_2}\ldots A^{i_L}\right) \Pi(gh) X(gh) 
			= X(h)^{-1} \Pi(h)^{-1} X(g)^{-1} \Pi(g)^{-1} \left(A^{i_1}A^{i_2}\ldots A^{i_L}\right) \Pi(g) X(g) \Pi(h) X(h) \ . 
		\end{equation}
		We shall now prove the following lemma, and then continue with the proof.
		\begin{lemma} \label{lem:PiIsRep}
			$\Pi(g)$ is a representation of $G$ and is therefore the trivial one.
		\end{lemma}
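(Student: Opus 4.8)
The plan is to extract from \cref{eq:XPiAAAPiX} that $g\mapsto\Pi(g)$ is an honest group homomorphism, and then to invoke the connectedness of $G$ to conclude that this homomorphism is trivial.

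First I would rearrange \cref{eq:XPiAAAPiX}: writing $P:=\Pi(gh)X(gh)$ and $Q:=\Pi(g)X(g)\Pi(h)X(h)$, that identity says $P^{-1}SP=Q^{-1}SQ$ for every $L$-fold product $S=A^{i_1}A^{i_2}\cdots A^{i_L}$, equivalently $Z:=QP^{-1}$ commutes with every such $S$, hence with their linear span. Taking $L$ large enough and using that $A$ is in CF, \cref{cor:SpanForCF} identifies that span as the ``twisted block-diagonal'' algebra $\mathcal{S}_L$ of \cref{eq:SLCF}; in particular, for each $j$ this algebra contains an element supported only on the $j$-th super-block and invertible there (take $M_j=\id$, giving $\oplus_q\mu_{j,q}^L\id$, which is invertible since the $\mu_{j,q}$ are nonzero). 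I would then observe that any matrix commuting with all of $\mathcal{S}_L$ must be block diagonal with respect to the $j$-super-blocks: an off-diagonal block $Z_{j',j}$ satisfies $Z_{j',j}S=0$ for every $S$ supported on super-block $j$, and feeding in the invertible element above forces $Z_{j',j}=0$. So $Z$ does not mix or permute the $j$-super-blocks.

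Next I would use $Q=ZP=Z\,\Pi(gh)\,X(gh)$. Since $Z$ and $X(gh)$ are block diagonal with respect to the $j$-super-blocks ($X(gh)$ by its block structure, and recall from \cref{thm:GlobalSymmUnitary} that $\Pi$ permutes only the $j$-blocks, never the $q$-blocks), the super-block permutation realized by $Q$ is exactly that of $\Pi(gh)$. For the same reason, the super-block permutation realized by $Q=\Pi(g)X(g)\Pi(h)X(h)$ is the composition of those of $\Pi(g)$ and of $\Pi(h)$. Comparing the two descriptions gives $\Pi(gh)=\Pi(g)\Pi(h)$ as permutation matrices (and $\Pi(e)=\id$, by the same commutant argument applied to \cref{eq:preGroupProp} with $g=e$). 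Thus $\Pi$ is a permutation representation of $G$, and its image is a subgroup of a finite group.

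The final step is triviality: a homomorphism from a connected Lie group into a finite group is trivial. Since $G$ is a (compact) connected Lie group, the exponential map is surjective, so any $g$ can be written $g=\exp(\xi)=\exp(\xi/n)^n$ for every $n\in\mathbb{N}$; choosing $n$ to be the order of the finite group $\mathrm{image}(\Pi)$ makes $\Pi(g)=\Pi(\exp(\xi/n))^n=\id$ by Lagrange's theorem. (For a general connected Lie group one instead uses that $G$ is generated by the one-parameter subgroups $\{\exp(t\xi)\}$ together with the fact that a homomorphism from the divisible group $\mathbb{R}$ into a finite group is trivial.) Hence $\Pi(g)=\id$ for all $g\in G$. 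I expect the only delicate point to be the commutant computation in the first step — pinning down that the CF/BNT structure is exactly strong enough to force $Z$ to respect the super-block decomposition — but \cref{cor:SpanForCF} supplies precisely the description of $\mathcal{S}_L$ that makes this work.
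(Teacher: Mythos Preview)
Your argument is correct. For the homomorphism property, your commutant approach and the paper's are two phrasings of the same idea: the paper plugs the elements $\Delta[j_0]:=\oplus_{j,q}\mu_{j,q}^L\delta_{j,j_0}\,\II$ (obtained from \cref{cor:SpanForCF}) directly into \cref{eq:XPiAAAPiX}, notes that the $X$'s commute with them, and reads off $(gh)^{-1}(j)=h^{-1}(g^{-1}(j))$ from how conjugation by $\Pi$ moves $\Delta[j]$ between super-blocks; your ``$Z$ commutes with $\mathcal{S}_L$'' argument is an abstract repackaging of exactly that computation.

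The triviality step is where you genuinely diverge from the paper. The paper extracts $\Pi(g)$ as a function of $g$ from \cref{eq:preGroupProp}: taking the same $\Delta[j]$ linear combinations on both sides of the $L$-fold product identity (the paper's \cref{eq:PiIsARep}) produces $\Delta[g^{-1}(j)]$ on the left---a discrete quantity---equal to a manifestly smooth function of $g$ on the right, so $g\mapsto\Pi(g)$ is continuous into a finite set and hence constant on the connected group $G$. Your route bypasses continuity entirely: once $\Pi$ is a homomorphism into a finite group, writing $g=\exp(\xi/n)^n$ and applying Lagrange's theorem forces $\Pi(g)=\II$. Both work; yours is cleaner in that it avoids the somewhat indirect smoothness extraction, while the paper's continuity argument does not rely on surjectivity of $\exp$ and so applies verbatim to any connected Lie group without your parenthetical patch.
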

		\begin{proof}[Proof: \cref{lem:PiIsRep}]
			According to  \cref{cor:SpanForCF}, by taking  appropriate linear combinations of \cref{eq:XPiAAAPiX}  we can  obtain:
			\begin{equation} \label{eq:PermutationGroupPropForBlocks}
				X(gh)^{-1} \Pi(gh)^{-1} 
				\left(\Delta[{j}]\right) 
				\Pi(gh) X(gh) 
				= X(h)^{-1} \Pi(h)^{-1} X(g)^{-1} \Pi(g)^{-1}
				\left(\Delta[{j}]\right) 
				\Pi(g) X(g) \Pi(h) X(h) \ ,
			\end{equation}
			where
			$\Delta[{j_0 }]$ is a matrix consisting of multiples of $\II$ in the  $j_0$ block and zero in all the rest: 
			$\Delta[{j_0 }]:= \oplus_{j=1}^m\oplus_{q=1}^{r_j}\mu^L_{j,q}\delta_{j,j_0} \II_{D_j\times D_j}$. This is achieved by setting $M_j= \delta_{j,j_0}\II$ in \cref{eq:SLCF}.
			Denote by $g(j)$ the image of the block $j$ under the permutation $\Pi(g)$, then 
			$\Pi(g)^{-1} \Delta[{j}] \Pi(g) = \Delta[{g^{-1}(j)}] $. 
			Plugging this into \cref{eq:PermutationGroupPropForBlocks} we get:
			\begin{equation*}
				\begin{split}
				LHS=& X(gh)^{-1}\left(\Delta[{{(gh)}^{-1}(j)}]\right) X(gh)\\ 
				= &\Delta[{{(gh)}^{-1}(j)}] = \\
				RHS=& X(h)^{-1} \Pi(h)^{-1} X(g)^{-1}\left( \Delta[{g^{-1}(j)}]\right)X(g) \Pi(h) X(h)\\
				=&	 X(h)^{-1} \Pi(h)^{-1} \left(\Delta[{g^{-1}(j)}]\right) \Pi(h) X(h) \\
				=& \Delta[{h^{-1}(g^{-1}(j))}] \ ,
				\end{split}
			\end{equation*}
			where in each step the $X$s commute with the $\Delta$s because they have the same block structure and the $\Delta$s are proportional to $\II$ in each block. We conclude that $  {{(gh)}^{-1}(j)} $ and ${h^{-1}(g^{-1}(j))}$ are the same block number and therefore $\Pi(g)$ is a group homomorphism.  It remains to show that  $\Pi(g)$ depends on $g$ smoothly. From \cref{eq:preGroupProp} we obtain:
			\begin{equation} \label{eq:PiIsARep}
				X(g)^{-1} \Pi(g)^{-1} A^{i_1} A^{i_2} \ldots A^{i_L} \Pi(g)  X(g) =   
				\sum_{\{i^\prime\}}
				\left( \CC(g)_{i_1,i_1^\prime}A^{i_1^\prime} \right)
				\left(\CC(g)_{i_2,i_2^\prime} A^{i_2^\prime} \right)
				\ldots 
				\left( \CC(g)_{i_L,i_L^\prime} A^{i_L^\prime} \right) \ .
			\end{equation}
			As above, we can take a linear combination of the $A$s to get a $\Delta[j]$ between the permutations in the LHS. Knowing how the permutation acts on each  $\Delta[j]$ determines $\Pi(g)$ completely. The $X$s on the LHS commute with all $\Delta[j]$ as before. The RHS will then be a linear combination of $\{\CC(g)A\}$, and will thus depend on $g$ smoothly. 	Since we assumed $G$ is a connected Lie group we conclude that $\Pi(g)\equiv \II$.
		\end{proof}
		
		We now repeat the step leading to \cref{eq:PermutationGroupPropForBlocks} but this time with an arbitrary matrix $M$ in the $j$ block: $\Delta^M_{j_0}:= \oplus_{j=1}^m\oplus_{q=1}^{r_j}\delta_{j,j_0}\mu^L_{j,q}M$. 
		Equation \cref{eq:PermutationGroupPropForBlocks} becomes:
		\begin{equation*} 
			X(gh)^{-1} 
			\left(\Delta^M_{j}\right) X(gh)
			= X(h)^{-1}  X(g)^{-1}
			\left(\Delta^M_{j}\right)  X(g)X(h) \ .
		\end{equation*}
		This means that for any $j$ block we have:
		\begin{equation*} 
			\oplus_{q=1}^{r_j} \mu^L_{j,q} {X_j}(gh)^{-1} 
			MX_{j}(gh)
			=\oplus_{q=1}^{r_j} \mu^L_{j,q} X_{j}(h)^{-1}  X_{j}(g)^{-1}
			M X_{j}(g)X_{j}(h) \ . 
		\end{equation*}
		We see that $X_{j}(g)X_{j}(h)(X_{j})^{-1}(gh)$ commutes with every matrix $M$ and is therefore proportional to the identity.   $X_{j}(g)$ is therefore a projective  representation.	
	\end{proof}
	
	\begin{remark}
		Note that different blocks of $X(g)$ can belong to different equivalence classes of projective representations. We could construct such an example by taking the direct sum of two normal tensors $A$ and $\tilde{A}$, which transform under a given representation $\CC(g)$ with $X(g)$ and $\tilde{X}(g)$, projective representations from different cohomology classes. $X(g)\oplus \tilde{X}(g)$  is then not   a  projective representation because $X(gh)\oplus \tilde{X}(gh)$  differs from $X(g)X(h)\oplus \tilde{X}(g)\tilde{X}(h)$ by a diagonal matrix and not a scalar one.
	\end{remark}

\section{Derivation and Proofs of the Results} \label{sec:detProofs}
	In this section we prove the theorems stated in \cref{sec:RsultsOverview}.

\subsection{Matter MPV with local symmetry} \label{sub:Onedegree of freedom}

	\propSinglets*  
	
	\begin{proof}
		 Write $\ket{\psi^N}$ in the irreducible representation basis which satisfies:
		\begin{equation*}
		\CC(g) \ket{j,m} = \sum_n {D^j(g)}_{n,m} \ket{j,n}  \ , 
		\end{equation*}
		where ${D^j(g)}$ are irreducible representation matrices.
		\begin{equation*}
		\ket{\psi^N}=\sum c_{j_1,m_1,\ldots,j_N,m_N}\ket{j_1,m_1,\ldots,j_N,m_N} \  .
		\end{equation*}
		The local symmetry condition implies:
		\begin{equation*}
		\sum_{n_1} {D^{j_1}(g)}_{m_1,n_1} c_{j_1,n_1,\ldots,j_N,m_N} = c_{j_1,m_1,\ldots,j_N,m_N} \ ,
		\end{equation*}
		which means that the vector of coefficients $\overrightarrow{c}_{j_1,(\cdot),\ldots,j_N,m_N}$ is either zero or an invariant subspace of ${D^{j_1}(g)}$, in which case ${D^{j_1}(g)}$ is the trivial representation. This implies that the coefficients $c_{j_1,m_1,\ldots,j_N,m_N}$ are zero whenever any one of the $j_k$s corresponds to a non trivial representation.
	\end{proof}

	\TrivTensorTrans*

	\begin{proof}
		We apply \cref{lem:BAAAA_CFequ} with $S^i:= \sum_{i^\prime} \CC(g)_{ii^\prime}A^{i^\prime}$ and $T^i:= A^i$.		
	\end{proof}

	\begin{remark} \label{rem:NotOnlyReps}
		We have never used any properties of $\CC(g)$ as a representation. The same proof is valid for any operator $\CC$.
	\end{remark}

	According to \cref{rem:changeOfBasis}, the MPV generated by $A$ can be written in terms of a tensor $\tilde{A}$, composed of the matrices $\{\tilde{A}^{j,m}\}$, corresponding to the irreducible representation basis $\{\ket{j,m}\}$ on which $\CC(g)$ acts as 
	$\CC(g) \ket{j,m}=\sum_n {D^j}(g)_{n,m}\ket{j,n}$. According to \cref{prop:ChangeOfBasisDoesntRuinCF}, $\tilde{A}$ is also in CF. Applying \cref{thm:TrivTensorTrans} to $\tilde{A}$ leads to the following:
	
	\decomposeTheta*
	\begin{proof}
		From  \cref{eq:thetaA_eq_A} we deduce that each vector of matrix elements of $A$:
		$\vec{A}_{\alpha,\beta}^{j} = \left( A_{\alpha,\beta}^{j,1},A_{\alpha,\beta}^{j,2},\ldots, A_{\alpha,\beta}^{j,dim(j)}\right)^T$  
		is invariant under  ${D^j(g)}$ for all $g\in G$. This implies that   either 	$\vec{A}_{\alpha,\beta}^{j}$ is zero  or that ${D^j(g)}$ is the one dimensional trivial representation.
	\end{proof}

\subsection{Pure gauge field MPV}
	In order to prove \cref{thm:OnlyFieldGaugeGroup} we shall proceed as in \cref{sec:GlobSymm}: we shall first  prove a lemma which describes the case when $\RR$ and $\LL$ are just unitary operators, and later use that to prove the case when they are representations.

	\begin{lemma} \label{thm:InvFieldTensor}
		Let $B$ be a tensor in CFII:
		\begin{equation*} 
			B^i=\oplus_{j=1}^m\oplus_{q=1}^{r_j}\mu_{j,q}B^i_j \ ,
		\end{equation*}
		and let $\RR$ and $\LL$ be two unitary operators such that for all $K$
		\begin{equation*}
			\RR^{[K]}\LL^{[K+1]}\ket{\psi^N_{B}}= \ket{\psi^N_{B}} \ .
		\end{equation*}
		Then  $B$ transforms under the unitary matrices as follows:
		\begin{equation}   \label{eq:BLRTransformation}
			\begin{tikzpicture}[baseline=-1mm]
			\pic[pic text = $B$] at (0,0) {mpstensor};
			\pic[pic text = $\RR$] at (0,1) {verticalmatrix};
			\end{tikzpicture} \ = \ 
			\begin{tikzpicture}[baseline=-1mm]
			\pic[pic text = $B$] at (0,0) {mpstensor};
			\pic[pic text = $X$] at (1,0) {horizontalmatrix};
			\end{tikzpicture} \;\;\;\;\; ; \;\;\;\;\;
			\begin{tikzpicture}[baseline=-1mm]
			\pic[pic text = $B $] at (0,0) {mpstensor};
			\pic[pic text = $\LL$] at (0,1) {verticalmatrix};
			\end{tikzpicture} \ = \ 
			\begin{tikzpicture}[baseline=-1mm]
			\pic[pic text = $\myinv{{X}}$] at (0,0) {horizontalmatrix};
			\pic[pic text = $B$] at (1,0) {mpstensor}; 
			\end{tikzpicture} \ , 
		\end{equation} 
		where $X$ is a unitary matrix with the same block structure as $B^i$, as in \cref{eq:XStruct3}.
	\end{lemma}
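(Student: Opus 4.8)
The plan is to first reduce the two‑site invariance to an identity for a single blocked tensor, then classify that tensor block by block using normality — mirroring the global‑symmetry argument (\cref{thm:GlobalSymmUnitary}). First I would block two sites: put $C:=B_{\times 2}$, so that $C^{(i_1,i_2)}=B^{i_1}B^{i_2}$ and $\ket{\psi^{2M}_B}=\ket{\psi^M_C}$ for every $M$. The CP map of each block of $C$ is $E_{B_j}^2$, which is again primitive and trace preserving with fixed point $\Lambda_j$, so $C$ is again in CFII, with the same block structure as $B$ except that $\mu_{j,q}$ is replaced by $\mu_{j,q}^2$ and the blocks become $(B_j)_{\times 2}$. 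Now apply the invariance hypothesis with $N=2M$ and $K=1$: the operator $\RR^{[1]}\LL^{[2]}$ acts entirely within the first $C$-site, so if $\tilde C$ denotes the tensor with $\tilde C^{(i_1,i_2)}=(\RR B)^{i_1}(\LL B)^{i_2}$, the hypothesis says exactly that the MPV generated by $\tilde C$ on the first site and $C$ on all other sites equals $\ket{\psi^M_C}$, for every $M$. The crucial point is that $\tilde C$ has the \emph{exact} same block structure as $C$ (same scalars $\mu_{j,q}^2$, trivial similarities), because $\tilde C^{(i_1,i_2)}=\oplus_{j,q}\mu_{j,q}^2(\RR B_j)^{i_1}(\LL B_j)^{i_2}$. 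Hence \cref{lem:BAAAA_CFequ}, applied with $A=C$, $S=\tilde C$, $T=C$, yields $\tilde C=C$, i.e.
\begin{equation*}
	(\RR B)^{i_1}(\LL B)^{i_2}=B^{i_1}B^{i_2}\qquad\forall\, i_1,i_2,
\end{equation*}
equivalently $(\RR B_j)^{i_1}(\LL B_j)^{i_2}=B_j^{i_1}B_j^{i_2}$ for every BNT element $B_j$ of $B$, of bond dimension $D_j$.

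It remains to split this relation. Fix a block $B_j$ and append $L-1$ extra copies of $B_j$ on the right, giving
\begin{equation*}
	(\RR B_j)^{i_1}\bigl[(\LL B_j)^{i_2}B_j^{i_3}\cdots B_j^{i_{L+1}}\bigr]=B_j^{i_1}\bigl[B_j^{i_2}B_j^{i_3}\cdots B_j^{i_{L+1}}\bigr]
\end{equation*}
for all indices. For $L$ large enough normality makes the products $B_j^{i_2}\cdots B_j^{i_{L+1}}$ span $\mathcal{M}_{D_j\times D_j}$, and since $\LL$ is invertible so do the twisted products $(\LL B_j)^{i_2}B_j^{i_3}\cdots B_j^{i_{L+1}}$; pick coefficients $c_I$, $I=(i_2,\dots,i_{L+1})$, with $\sum_I c_I\,(\LL B_j)^{i_2}B_j^{i_3}\cdots B_j^{i_{L+1}}=\id$. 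Taking that combination of the displayed relation yields $(\RR B_j)^{i_1}=B_j^{i_1}X_j$ for all $i_1$, where $X_j:=\sum_I c_I\,B_j^{i_2}\cdots B_j^{i_{L+1}}$. Since $\RR$ is unitary and $\sum_i (B_j^i)^\dagger B_j^i=\id$ (because $E_{B_j}$ is trace preserving), one gets $\id=\sum_i (\RR B_j)^{i\dagger}(\RR B_j)^i=X_j^\dagger X_j$, so $X_j$ is unitary. Substituting $\RR B_j=B_j X_j$ back into the block relation gives $B_j^{i_1}\bigl(X_j(\LL B_j)^{i_2}-B_j^{i_2}\bigr)=0$ for all $i_1,i_2$; left‑multiplying by products of $B_j$'s that span $\mathcal{M}_{D_j\times D_j}$, in particular by $\id$, forces $\LL B_j=X_j^{-1}B_j$. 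Collecting the blocks, $X:=\oplus_{j,q}X_j$ is block diagonal with the block structure of $B$ and unitary in each block, and $\RR B=BX$, $\LL B=X^{-1}B$, which is \cref{eq:BLRTransformation}.

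I expect the splitting step to be the main obstacle: passing from the product identity $(\RR B_j)^{i_1}(\LL B_j)^{i_2}=B_j^{i_1}B_j^{i_2}$ to a statement about the single tensor $\RR B_j$ versus $B_j$ genuinely uses normality (injectivity after blocking, \cref{def:NormalTensor} and \cref{prop:InjInv}), and one must check that the $\LL$-twisted matrix products still span $\mathcal{M}_{D_j\times D_j}$, so that the identity lies in their span. A minor bookkeeping point is that $B_{\times 2}$ must be seen to inherit the CFII structure of $B$ in the form needed by \cref{lem:BAAAA_CFequ}; if two blocks $(B_j)_{\times 2}$ happened to become equivalent, they would just be merged into a single BNT element with larger multiplicity.
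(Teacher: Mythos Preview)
Your argument is correct, and the first step (blocking two sites and applying \cref{lem:BAAAA_CFequ} to get $(\RR B_j)^{i_1}(\LL B_j)^{i_2}=B_j^{i_1}B_j^{i_2}$) is exactly what the paper does via \cref{thm:TrivTensorTrans} applied to $BB$. The genuine difference is in the splitting step. The paper takes a detour: from the two-site invariance it first extracts a \emph{global} symmetry of $\ket{\psi^{2N}_B}$ under $(\LL\otimes\RR)^{\otimes N}$ (reversed order), invokes \cref{thm:GlobalSymmUnitary} to get $(\LL B)(\RR B)=X^{-1}\Pi^{-1}(BB)\Pi X$, then runs a string argument to kill the permutation $\Pi$, and finally uses the inverse of the injective tensor $B_jB_j$ to split the resulting tensor-product identity. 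Your route is more elementary: you never leave the single block $B_j$, you obtain $X_j$ directly by using that the $\LL$-twisted products still span $\mathcal M_{D_j\times D_j}$ (correct, since $\LL$ is invertible), and you get unitarity of $X_j$ cleanly from $\sum_i (B_j^i)^\dagger B_j^i=\id$ in CFII. What the paper's approach buys is that it stays within the existing machinery (the fundamental theorem of MPV) and produces the block-diagonal $X$ in one shot; what yours buys is that it bypasses the permutation bookkeeping and the global-symmetry step entirely. The technical caveat you flag --- that $\{(B_j)_{\times 2}\}_j$ might fail to be a BNT if two blocks become equivalent after blocking --- is real but equally present in the paper's proof (which just asserts ``$BB$ is in CF if $B$ is in CF''); your handling of it is adequate for the purposes here.
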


	\begin{proof}
		Applying  \cref{thm:TrivTensorTrans} (recall \cref{rem:NotOnlyReps}) to the tensor $BB$ and the unitary $\RR\otimes\LL$ ($BB$ is in CF if $B$ is in CF), we obtain:
		\begin{equation}  \label{eq:RLB}
			\begin{tikzpicture}[baseline=-1mm]
			\pic[pic text = $B $] at (0,0) {mpstensor};
			\pic[pic text = $B $] at (1,0) {mpstensor};
			\pic[pic text = $\RR$] at (0,1) {verticalmatrix};
			\pic[pic text = $\LL$] at (1,1) {verticalmatrix};
			\end{tikzpicture} \ = \ 
			\begin{tikzpicture}[baseline=-1mm]
			\pic[pic text = $B$] at (0,0) {mpstensor};
			\pic[pic text = $B$] at (1,0) {mpstensor};
			\end{tikzpicture} \ .
		\end{equation}
		Applying the pair of operators to every site on the chain (for even $N$) we conclude that the MPV is invariant under the global application of the operators in reversed order: $(\LL\otimes\RR )^{\otimes N}\ket{\psi^{2N}_{B}} = \ket{\psi^2N_{B}}$. 
		Using \cref{thm:GlobalSymmUnitary} we obtain:
		\begin{equation} \label{eq:LRB}
			\begin{tikzpicture}[baseline=-1mm]
			\pic[pic text = $B$] at (0,0) {mpstensor};
			\pic[pic text = $B$] at (1,0) {mpstensor};
			\pic[pic text = $\LL$] at (0,1) {verticalmatrix};
			\pic[pic text = $\RR$] at (1,1) {verticalmatrix};
			\end{tikzpicture} \ = \
			\begin{tikzpicture}[baseline=-1mm]
			\pic[pic text = $\myinv{X}$] at (0,0) {horizontalmatrix};
			\pic[pic text = $\myinv{\Pi}$] at (1,0) {horizontalmatrix};
			\pic[pic text = $B$] at (2,0) {mpstensor};
			\pic[pic text = $B$] at (3,0) {mpstensor};
			\pic[pic text = $\Pi$] at (4,0) {horizontalmatrix};
			\pic[pic text = $X$] at (5,0) {horizontalmatrix};	   
			\end{tikzpicture} \ ,
		\end{equation} 
		where  $X$ is  unitary  and  $\Pi$  is a permutation, as in  \cref{thm:GlobalSymmUnitary}. Next consider the following tensor:
		\begin{equation*}
			\begin{tikzpicture}[baseline=-1mm]
			\foreach \i in {0,1,2,3,5} \pic[pic text = $B$] at (\i,0) {mpstensor};
			\foreach \i in {0,2} \pic[pic text = $\LL$]  at (\i,1) {verticalmatrix};
			\foreach \i in {1,3,5} \pic[pic text = $\RR$]  at (\i,1) {verticalmatrix};	
			\node at (4,0) {$\ldots$};
			\end{tikzpicture} \ .
		\end{equation*} 
		According to \cref{eq:RLB} this tensor is equal to the LHS of the following,  and according to \cref{eq:LRB} - to the RHS:
		\begin{equation} \label{eq:XPiBBBPiX}
			\begin{split}
			LHS= &
			\begin{tikzpicture}[baseline=-1mm]
			\foreach \i in {2,3,4,5,7} \pic[pic text = $B$] at (\i,0) {mpstensor};
			\foreach \i in {2} \pic[pic text = $\LL$]  at (\i,1) {verticalmatrix};
			\foreach \i in {7} \pic[pic text = $\RR$]  at (\i,1) {verticalmatrix};	
			\node at (6,0) {$\ldots$};
			\node at (-0.35,0) {$ $};
			\end{tikzpicture} = 
			\\
			\\
			RHS =&
			\begin{tikzpicture}[baseline=-1mm]
			\foreach \i in {2,3,4,5,7} \pic[pic text = $B$] at (\i,0) {mpstensor};
			\pic[pic text = $\myinv{X}$] at (0,0) {horizontalmatrix};
			\pic[pic text = $\myinv{\Pi}$] at (1,0) {horizontalmatrix};
			\pic[pic text = $\Pi$] at (8,0) {horizontalmatrix};
			\pic[pic text = $X$] at (9,0) {horizontalmatrix};
			\node at (6,0) {$\ldots$};
			\end{tikzpicture} \ .
			\end{split}	
		\end{equation} 
		Using the same argument as in equation \cref{eq:PermutationGroupPropForBlocks}, we show that the permutation must act trivially: use \cref{cor:SpanForCF} on the string of consecutive $B$s, excluding the extreme right and left ones, to obtain multiples of  $\II$ in a single $j$ block and zeros elsewhere. Note that $\RR$ and $\LL$ do not change the block structure of the tensors they act on. Now compare  the RHS with the LHS block-wise, if $\Pi$ acts non trivially on a block $j$, then we get that  $B_jB_j$ is zero, which is a contradiction to $B_j$ being normal.
		Next, having eliminated the possibility of a permutation, project \cref{eq:XPiBBBPiX} to any $(j,q)$ block to obtain:
		\begin{equation*}  
			\begin{split}
			&
			\begin{tikzpicture}[baseline=-1mm]
			\foreach \i in {1,2,3,4,6} \pic[pic text = $B_{j}$] at (\i,0) {mpstensor};
			\foreach \i in {1} \pic[pic text = $\LL$]  at (\i,1) {verticalmatrix};
			\foreach \i in {6} \pic[pic text = $\RR$]  at (\i,1) {verticalmatrix};	
			\node at (5,0) {$\ldots$};
				\node at (-.35,0) {$ $};
			\end{tikzpicture} = 
			\\
			\\
			=&
			\begin{tikzpicture}[baseline=-1mm]
			\foreach \i in {1,2,3,4,6} \pic[pic text = $B_{j}$] at (\i,0) {mpstensor};
			\pic[pic text = $\myinv{X}_j$] at (0,0) {horizontalmatrix};
			\pic[pic text = $X_j$] at (7,0) {horizontalmatrix};
			\node at (5,0) {$\ldots$};
			\end{tikzpicture} \ ,
			\end{split}	
		\end{equation*} 
		where $B_{j}$ is a normal tensor by assumption. We can now apply the inverse on the string of $B$s in the middle  ($BB$ is normal if $B$ is normal) to  obtain:
		\begin{equation*}  
			\begin{tikzpicture}[baseline=-1mm]
			\pic[pic text = $B_j$] at (0,0) {mpstensor};
			\pic[pic text = $\LL$] at (0,1) {verticalmatrix};
			\end{tikzpicture} 
			\otimes 
			\begin{tikzpicture}[baseline=-1mm]
			\pic[pic text = $B_j$] at (0,0) {mpstensor};
			\pic[pic text = $\RR$] at (0,1) {verticalmatrix};
			\end{tikzpicture} 
			\ = \ 
			\begin{tikzpicture}[baseline=-1mm]
			\pic[pic text = $\myinv{{X}}_j$] at (1,0) {horizontalmatrix};
			\pic[pic text = $B_j$] at (2,0) {mpstensor}; 
			%\node at (0,0) {$\left(x_j^{-1}\right)$};
			\end{tikzpicture} 		
			\otimes
			\begin{tikzpicture}[baseline=-1mm]
			\pic[pic text = $B_j$] at (1,0) {mpstensor};
			\pic[pic text = $X_j$] at (2,0) {horizontalmatrix};
			%\node at (0,0) {$\left(x_j\right)$};
			\end{tikzpicture} 
			\ . 
		\end{equation*} 
		According to \cref{rem:XupToConst}, the matrices $X_j$ are determined up to a constant. We now  choose a representative from the projective unitary class of $X_j$. 
		The above implies that for any such choice there is a constant $x_j$ such that:
		\begin{equation*}   
			\begin{tikzpicture}[baseline=-1mm]
			\pic[pic text = $B_j$] at (0,0) {mpstensor};
			\pic[pic text = $\RR$] at (0,1) {verticalmatrix};
			\end{tikzpicture} \ = \ 
			\begin{tikzpicture}[baseline=-1mm]
			\pic[pic text = $B_j$] at (1,0) {mpstensor};
			\pic[pic text = $X_j$] at (2,0) {horizontalmatrix};
			\node at (0,0) {$\left(x_j\right)$};
			\end{tikzpicture} 
			\;\;\;\;\; ; \;\;\;\;\;
			\begin{tikzpicture}[baseline=-1mm]
			\pic[pic text = $B_j$] at (0,0) {mpstensor};
			\pic[pic text = $\LL$] at (0,1) {verticalmatrix};
			\end{tikzpicture} \ = \ 
			\begin{tikzpicture}[baseline=-1mm]
			\pic[pic text = $\myinv{{X}}_j$] at (1,0) {horizontalmatrix};
			\pic[pic text = $B_j$] at (2,0) {mpstensor}; 
			\node at (0,0) {$\left(x_j^{-1}\right)$};
			\end{tikzpicture} \ . 
		\end{equation*} 
		Therefore the desired $X$ is  
		$X = \oplus_{j=1}^m \oplus_{q=1}^{r_j} x_jX_j$.
	\end{proof}

	\OnlyFieldGaugeGroup*

	\begin{proof}
		As we have seen in the proof of \cref{thm:InvFieldTensor},  \cref{eq:BLRTransformation} holds for each block of $B$, so for every group element $g\in G$ we have:
		\begin{equation} \label{eq:BgroupTransPerBlock}
			\begin{tikzpicture}[baseline=-1mm]
			\pic[pic text = $B_{j}$] at (0,0) {mpstensorBIG};
			\pic[pic text = $\RR(g)$] at (0,1) {verticalmatrixBIG};
			\end{tikzpicture} \ = \ 
			\begin{tikzpicture}[baseline=-1mm]
			\pic[pic text = $B_{j}$] at (0,0) {mpstensorBIG};
			\pic[pic text = ${X_j}(g)$] at (1,0) {horizontalmatrixBIG};
			\end{tikzpicture} \;\;\;\;\; ; \;\;\;\;\;
			\begin{tikzpicture}[baseline=-1mm]
			\pic[pic text = $B_{j} $] at (0,0) {mpstensorBIG};
			\pic[pic text = $\LL(g)$] at (0,1) {verticalmatrixBIG};
			\end{tikzpicture} \ = \ 
			\begin{tikzpicture}[baseline=-1mm]
			\pic[pic text = $\myinv{{X_j}(g)}$] at (0,0) {horizontalmatrixWIDE};
			\pic[pic text = $B_{j}$] at (1.15,0) {mpstensorBIG};
			\end{tikzpicture} \ . 
		\end{equation} 
		
		We write the action of the group element $\RR(gh)$ on $B$ in two ways:
		\begin{equation*} 
			\begin{split}
			\begin{tikzpicture} [baseline=-1mm]
			\pic[pic text = $B_j$] at (0,0) {mpstensorBIG};
			\pic[pic text = $X_j{(gh)}$] at (1.15,0) {horizontalmatrixWIDE};
				\node at (-1.2,0) {$\gamma(g,h)\times$};
			\end{tikzpicture} \ = \
			\begin{tikzpicture}[baseline=-1mm]
			\pic[pic text = $B_j$] at (0,0) {mpstensorBIG};
			\pic[pic text = $\RR{(gh)}$] at (0,1) {verticalmatrixWIDE};
				\node at (-1.2,0) {$\gamma(g,h)\times$};
			\end{tikzpicture} \ = \ 
			\begin{tikzpicture}[baseline=-1mm]
			\pic[pic text = $B_j$] at (0,0) {mpstensorBIG};
			\pic[pic text = $\RR{(g)}$] at (0,2) {verticalmatrixBIG};
			\pic[pic text = $\RR{(h)}$] at (0,1) {verticalmatrixBIG};
			\end{tikzpicture} \ = \ \\ \\
			=
			\begin{tikzpicture}[baseline=-1mm]
			\pic[pic text = $B_j$] at (0,0) {mpstensorBIG};
			\pic[pic text = $\RR{(g)}$] at (0,1) {verticalmatrixBIG};
			\pic[pic text = $X_j{(h)}$] at (1,0) {horizontalmatrixBIG};
			\end{tikzpicture} \ = \
			\begin{tikzpicture}[baseline=-1mm]
			\pic[pic text = $B_j$] at (0,0) {mpstensorBIG};
			\pic[pic text = $X_j{(g)}$] at (1,0) {horizontalmatrixBIG};
			\pic[pic text = $X_j{(h)}$] at (2,0) {horizontalmatrixBIG}; 
			\end{tikzpicture}  \ .
			\end{split}
		\end{equation*} 
		Now by contracting with the tensor $B_jB_j\ldots B_j$ from the left, and taking the appropriate linear combination which results in the identity matrix ($B_j$ is normal), we obtain $\gamma(g,h)X_j(gh)= X_j(g)X_j(h)$. 
		This means that for all $j$ $X_j(g)$ is a projective representation with the same multiplier as $\RR(g)$ ($\gamma$). Therefore $X(g)$ is a projective representation.
	\end{proof}

	\HBstructure*

	\begin{proof}
		Even though $\ket{\psi_{B}}$ is defined in terms of the basis $\{\ket{j}\}$ in $\mathcal{H}_B$,  it is sufficient to consider only vectors of the form:
		\begin{equation*}
			\ket{\phi_{\alpha,\beta}} = \sum_i \bra{\alpha} B^i \ket{\beta} \ket{i} \ \in \mathcal{H}_B \ .
		\end{equation*}
		Let $\mathcal{H}:= span\{\ket{\phi_{\alpha,\beta}} \}_{\alpha,\beta}$.
		The group transformations $\LL(g)$ and $\RR(g)$ preserve $\mathcal{H}$:
		\begin{equation*}
			\begin{split}
			\RR(g) \ket{\phi_{\alpha,\beta}} = 
			&\sum_i \bra{\alpha}B^i X(g) \ket{\beta}\ket{i} = 
			\sum_{i,\gamma} \bra{\alpha}B^i \ket{\gamma}\bra{\gamma} X(g) \ket{\beta}\ket{i} = 
			\sum_{\gamma} \bra{\gamma} X(g) \ket{\beta} \ket{\phi_{\alpha,\gamma}} \\
			\LL(g) \ket{\phi_{\alpha,\beta}} = 
			&\sum_i \bra{\alpha}Y(g)^{-1}B^i  \ket{\beta}\ket{i} = 
			\sum_{i,\gamma} \bra{\alpha}Y(g)^{-1} \ket{\gamma}\bra{\gamma}B^i  \ket{\beta}\ket{i} = 
			\sum_{\gamma} \bra{\alpha}Y(g)^{-1} \ket{\gamma}\ket{\phi_{\gamma,\beta}} \ ,
			\end{split}
		\end{equation*}
		where \cref{eq:BRightGroupTransPre}  was used.
		Performing a Schmidt decomposition of $\ket{\psi_{AB}}$ (or $\ket{\psi_{B}}$, the argument is the same) with respect to any partition where one gauge field Hilbert space  is split off from the rest of the system:
		\begin{equation*}
			\begin{split}
			\ket{\psi_{AB}} & = 
			\sum_{\{i\},\{j\},\alpha,\beta} 
			\left(
			\bra{\alpha} B^{j_1} \ket{\beta}\ \bra{\beta} A^{i_2}B^{j_2}\ldots A^{i_N}B^{j_N}A^{i_1} \ket{\alpha} 
			\right)
			\;
			\ket{i_1} \otimes  	 \ket{j_1} \otimes \ket{i_2\ldots i_N j_N} \\
			& = \sum_{\alpha,\beta} \ket{\phi_{\alpha,\beta}}_{[2]} \ket{\psi_{\beta,\alpha}}_{[3,\ldots,2N,1]} \ ,
			\end{split}
		\end{equation*}
		we see that only vectors from ${\mathcal{H}}$ appear.
		Therefore it is sufficient to restrict ourselves to $\mathcal{H}_B = \mathcal{H}$.
		Next we show that $\mathcal{H}$ has a representation space structure. 	 
		 \Cref{eq:BRightGroupTransPre} implies that $\RR(g)$ and $\LL(h)$ commute on $\mathcal{H}$:
		\begin{equation*}
			\LL(g)\RR(h) \ket{\phi_{\alpha,\beta}} = \sum_i \bra{\alpha} Y(g)^{-1} B^i X(h) \ket{\beta} \ket{i} = \RR(h)\LL(g) \ket{\phi_{\alpha,\beta}} \ . 
		\end{equation*}
		Thus $\mathcal{H}$ forms a projective representation space of $G\times G$ with the projective representation map 
		$(g,h) \mapsto  \LL(g)\RR(h)$
		 with multiplier
		  $\gamma^{-1} \times \gamma$ of $G\times G$ defined by 
		  $\gamma^{-1} \times \gamma: ((g,h),(g^\prime,h^\prime)) \mapsto \gamma^{-1} (g,g^\prime)\gamma(h,h^\prime)$  \cite{Chuangxun}:
		\begin{equation*}
			\LL(g)\RR(h)  \LL({g^{\prime}})\RR({h^{\prime}})|_{\mathcal{H}}  = 
			\LL(g)\LL({g^{\prime}})\RR(h)\RR({h^{\prime}})|_{\mathcal{H}} =
			\gamma^{-1}(g,g^\prime)\gamma(h,h^\prime) \LL({gg^{\prime}})\RR({hh^{\prime}})|_{\mathcal{H}} 
			\ ,
		\end{equation*}
		where we used the fact that $	\LL(g)$ and  $\RR(h)$ commute  and   preserve $\mathcal{H}$; . 
		For finite or compact groups $\mathcal{H}$ decomposes into a direct sum of irreducible projective representations of $G\times G$ with multiplier $\gamma^{-1}\times \gamma$, each one of which is equivalent to a projective representation of the form $(g,h)\mapsto D^l_{\myinv{\gamma}}(g) \otimes D_\gamma^r(h)$  \cite{Chuangxun},
		which proves the proposition.
	\end{proof}

	Recall the definition of an elementary $B$ block:
	\ElementaryBblock*
	
	\ElemBBlock*

	\begin{proof}
		Write $B$ as a map $B:\mathbb{C}^{D_2} \rightarrow \mathbb{C}^{D_1}\otimes\mathcal{H}_B $:
		\begin{equation*}
			B= \sum_{m,n} B^{m,n}\otimes \ket{m}\ket{n}=\sum_{m,n,\alpha,\beta} B_{\alpha,\beta}^{m,n} \ket{\alpha}\bra{\beta}\otimes \ket{m}\ket{n}
		\end{equation*}	
		By hypothesis $B$ satisfies (\cref{eq:BRightGroupTransPre}):
		\begin{equation*}
			\left[ \II \otimes \left(\RR(g)\LL(h)\right) \right]B =   
			\left[
			\II \otimes \left({D_{\myinv{\gamma}}^l(h)}\otimes {D_\gamma^r(g)}\right)
			\right]
			B  =
			\left[ {Y(h)}^{-1}  \otimes \II \right] B  \left[X(g)  \otimes \II\right] \ .
		\end{equation*} 
		Write the above equality explicitly (repeated indices are summed over):
		\begin{equation*}
			\begin{split}
			LHS = &\sum 
			B^{m,n}_{\alpha,\beta} \ket{\alpha}\bra{\beta} 
			\otimes {D_{\myinv{\gamma}}^l(h)}\ket{m} {D_\gamma^r(g)} \ket{n} =\\
			&\sum 
			B^{m,n}_{\alpha,\beta} \ket{\alpha}\bra{\beta} 
			\otimes {D_{\myinv{\gamma}}^l(h)}_{m^\prime,m}\ket{m^\prime} {D_\gamma^r(g)}_{n^\prime,n} \ket{n^\prime} =\\
			RHS= &\sum
			B_{\alpha,\beta}^{m,n}{Y(h)}^{-1} \ket{\alpha}\bra{\beta} X(g) 
			\otimes \ket{m}  \ket{n} = \\
			&\sum
			B_{\alpha,\beta}^{m,n}\overline{{Y(h)}_{\alpha,\alpha^\prime}} \ket{\alpha^\prime}\bra{\beta^\prime} {X(g)}_{\beta,\beta^\prime} 
			\otimes \ket{m}  \ket{n} \ .
			\end{split}
		\end{equation*} 
		Projecting both LHS and RHS to $\ket{\hat{\alpha }}\bra{\hat{\beta} }\otimes\ket{\hat{m} }\ket{\hat{n}}$ we obtain
		\begin{equation*}
			\sum_{m,n} {D_{\myinv{\gamma}}^l(h)}_{\hat{m} ,m} {D_\gamma^r(g)}_{\hat{n},n}
			B^{m,n}_{\hat{\alpha},\hat{\beta}} 
			=
			\sum_{\alpha,\beta}
			B_{\alpha,\beta}^{\hat{m},\hat{n}}\overline{{Y(h)}_{\alpha,\hat{\alpha} }}  {X(g)}_{\beta,\hat{\beta}} \ .
		\end{equation*} 	
		The LHS is a multiplication from the left (summing the indices $m,n$) of the matrix $\mathbf{B}$, with entries $\mathbf{B}_{(m,n),(\alpha,\beta)}:=B^{m,n}_{\alpha,\beta}$,
		with the matrix ${D_{\myinv{\gamma}}^l(h)} \otimes {D_\gamma^r(g)}$, which is an irreducible projective representation of $G\times G$. The RHS is a multiplication of $\mathbf{B}$ from the right (summing the indices $\alpha,\beta$) with the  matrix $\overline{Y(h)}\otimes X(g)$, which is also an irreducible  projective representation of $G\times G$ (with the same multiplier). By Schur's lemma (\cref{thm:Schur}) $\mathbf{B}\propto \II$ 
		(i.e.\ $B^{m,n}_{\alpha,\beta} \propto \delta_{\alpha,m} \delta_{\beta,n}$)
		if  ${D_{\myinv{\gamma}}^l(h)}  \otimes {D_\gamma^r(g)} = \overline{Y(h)}\otimes X(g)$, and zero otherwise.
	\end{proof}

	\BstructGeneral*

	\begin{proof}
		Recall the structure of the tensor $B$ and the projective representation $X(g)$:		
		\begin{equation*}  
			\begin{split}
			B^{k;m,n} =& \oplus_{j=1}^m \oplus_{q=1}^{r_j}\mu_{j,q} B_j^{k;m,n} \\
			X(g) =& \oplus_{j=1}^m \oplus_{q=1}^{r_j} {X_j}(g) \ ,
			\end{split}
		\end{equation*}
		where $\{B_j\}$ are normal tensors.
		Project \cref{eq:BLRTGroupransformation} to a block $j,q$ of the virtual space to obtain:
		\begin{equation*}    
			\begin{tikzpicture}[baseline=-1mm]
			\pic[pic text = $B_j$] at (0,0) {mpstensorBIG};
			\pic[pic text = $\RR(g)$] at (0,1) {verticalmatrixBIG};
			\end{tikzpicture} \  = \ 
			\begin{tikzpicture}[baseline=-1mm]
			\pic[pic text = $B_j$] at (0,0) {mpstensorBIG};
			\pic[pic text =  $X_j(g)$] at (1,0) {horizontalmatrixBIG};
			\end{tikzpicture} \;\;\;\;\;\;\;\;\; ;  \;\;\;\;\;\;\;\;\; 
			\begin{tikzpicture}[baseline=-1mm]
			\pic[pic text = $B_j $] at (0,0) {mpstensorBIG};
			\pic[pic text = $\LL(g)$] at (0,1) {verticalmatrixBIG};
			\end{tikzpicture} \   = \ 
			\begin{tikzpicture}[baseline=-1mm]
			\pic[pic text =  $\myinv{X_j(g)}$] at (0,0) {horizontalmatrixWIDE};
			\pic[pic text = $B_j$] at (1.15,0) {mpstensorBIG}; 
			\end{tikzpicture} \ .
		\end{equation*} 
		Let ${X_j}(g) =\oplus_a X_j^a(g)$ be a block  of $X(g)$.  We shall prove each item in the statement:
		\begin{enumerate}
			\item
				Let $B^k_j$ be the projection of the tensor $B_j$ to the $k$ sector of the physical Hilbert space. If for a certain $k$ there exist no   $a$ and $b$ such that $X_j^b(g) = {D_\gamma^{r_k}(g)} $ and  
				$\overline{ X_j^a(g)} = {{D_{\myinv{\gamma}}^{l_k}(g)}}$, then according to \cref{prop:ElemBBlock}, for all $a,b$ the $a,b$ block of $B^k_j$, consisting of the matrices $B_{j,a,b}^{k,m,n}$, is zero. This means $B_j^k$ is zero.
			\item \label{item:Y}
				If there is a $Y^a(g)$ for which there is no  appropriate $k$ then according to \cref{prop:ElemBBlock}, $B_j^{k,m,n}$ all have a zero row which is a contradiction to the normality of $B_j$.  
			\item 
				As in \cref{item:Y}, $B_j^{k,m,n}$ now would have a zero column, which contradicts the normality of $B_j$.
		\end{enumerate} 
	\end{proof}
 
	 The proof of \cref{prop:invGauge1} will be presented in the next section after we derive the structure of the symmetric matter tensor $A$.

	\InverseGaugeNES*
	\begin{proof}
		We use the local symmetry condition around every $A$:
		\begin{equation*}
			\begin{tikzpicture}[baseline=-1mm]
			\foreach \i in {0,2} \pic[pic text = $B$] at (\i,0) {mpstensorBIG};
			\foreach \i in {1,3,5} \pic[pic text = $A$] at (\i,0) {mpstensorBIG};
			\foreach \i in {0,1,2} \pic at (\i,1) {verticalmatrixBIG};
			\foreach \i in {2,3,0,5} \pic at (\i,2) {verticalmatrixBIG};
			\node at (0,1) {$\RR(g)$};
			\node at (1,1) {$\CC(g)$};
			\node at (2,1) {$\LL(g)$};
			\node at (2,2) {$\RR(g)$};
			\node at (3,2) {$\CC(g)$};
			\node at (0,2) {$\LL(g)$};
			\node at (5,2) {$\CC(g)$};
			\node at (4,0) {$\ldots$};\node at (4,1) {$\ldots$};
			\node at (4,2) {$\ldots$};
			\draw (1,1.5)--(1,2.5);
			\draw (5,1.5)--(5,.5);
			\draw (3,1.5)--(3,.5);
			\draw (-.65,0)--(-.65,-.5)--(5.65,-.5)--(5.65,0);
			\end{tikzpicture} \ = \ 
			\begin{tikzpicture}[baseline=-1mm]
			\foreach \i in {0,2 } \pic[pic text = $B$] at (\i,0) {mpstensorBIG};
			\foreach \i in {1,3,5} \pic[pic text = $A$] at (\i,0) {mpstensorBIG};
			\node at (4,0) {$\ldots$};	
			\draw (-.65,0)--(-.65,-.5)--(5.65,-.5)--(5.65,0);
			\end{tikzpicture} \ .
		\end{equation*} 
		According to the transformation laws for $B$, the LHS of the above equals:
		
		\begin{equation*}
			= \ \begin{tikzpicture}[baseline=-1mm]
			\foreach \i in {1.15,5.45 } \pic[pic text = $B$] at (\i,0) {mpstensorBIG};
			\foreach \i in {3.15,7.45,11.6} \pic[pic text = $A$] at (\i,0) {mpstensorBIG};
			\foreach \i in {3.15,7.45,11.6} \pic[pic text = $\CC(g)$] at (\i,1) {verticalmatrixBIG};
			\foreach \i in {2.15,6.45,10.6} \pic[pic text = $X(g)$]  at (\i,0) {horizontalmatrixBIG};
			\foreach \i in {0,4.3,8.6} \pic[pic text = $\myinv{X(g)}$]  at (\i,0) {horizontalmatrixWIDE};
			\node at (9.68,0) {$\ldots$};	
			\draw (-.8,0)--(-.8,-.5)--(12.25,-.5)--(12.25,0);
			\end{tikzpicture} \ .
		\end{equation*} 
		We can now use the assumption $\II\in span\{B^{k;m,n}\}$ to eliminate the $B$s from the equation, the $X$s  then cancel out and we obtain the desired global symmetry:
		\begin{equation*}
			\begin{tikzpicture}[baseline=-1mm]
			\foreach \i in {0,1,3} \pic[pic text = $A$] at (\i,0) {mpstensorBIG};
			\foreach \i in {0,1,3} \pic[pic text = $\CC(g)$] at (\i,1) {verticalmatrixBIG};
			\node at (2,0) {$\ldots$};	
			\draw (-.65,0)--(-.65,-.5)--(3.65,-.5)--(3.65,0);
			\end{tikzpicture} \ = \ 
			\begin{tikzpicture}[baseline=-1mm]
			\foreach \i in {0,1,3} \pic[pic text = $A$] at (\i,0) {mpstensorBIG};
			\node at (2,0) {$\ldots$};	
			\draw (-.65,0)--(-.65,-.5)--(3.65,-.5)--(3.65,0);
			\end{tikzpicture} \ .
		\end{equation*}	
		If in addition $A$ is in CF, we can apply \cref{thm:GlobalSymmVirtTrans} to obtain transformation relations for $A$. 
		To show the rest of the claim (if $A$ in addition has the block structure of $B$) we write the symmetry condition and again use the transformation rules for $B$:
		
		\begin{equation*}
			\begin{tikzpicture}[baseline=-1mm]
			\foreach \i in {0,4.3} \pic[pic text = $B$] at (\i,0) {mpstensorBIG};
			\foreach \i in {2,5.3,7.15} \pic[pic text = $A$] at (\i,0) {mpstensorBIG};
			\foreach \i in {1 } \pic[pic text = $X(g)$] at (\i,0) {horizontalmatrixBIG};		
			\foreach \i in {3.15 } \pic[pic text = $\myinv{X(g)}$] at (\i,0) {horizontalmatrixWIDE};		
			\pic[pic text = {$\CC(g)$}] at (2,1) {verticalmatrixBIG};
			\node at (6.2,0) {$\ldots$};
			\draw (-.65,0)--(-.65,-.5)--(7.8,-.5)--(7.8,0);
			\end{tikzpicture}  \ = \ 
			\begin{tikzpicture}[baseline=-1mm]
			\foreach \i in {0,2 } \pic[pic text = $B$] at (\i,0) {mpstensorBIG};
			\foreach \i in {1,3,4.8} \pic[pic text = $A$] at (\i,0) {mpstensorBIG};
			\node at (3.9,0) {$\ldots$};	
			\draw (-.65,0)--(-.65,-.5)--(5.45,-.5)--(5.45,0);
			\end{tikzpicture} \ .
		\end{equation*} 
		We eliminate all $B$s as before and are left with:
		\begin{equation*}
			\begin{tikzpicture}[baseline=-1mm]
			\foreach \i in {1,3,5} \pic[pic text = $A$] at (\i,0) {mpstensorBIG};
			\foreach \i in {0,2 } \pic at (\i,0) {horizontalmatrixBIG};		
			\pic[pic text = {$\CC(g)$}] at (1,1) {verticalmatrixBIG};
			\node at (0,0) {$X(g)$};
			\node at (2,0) {$\myinv{X}_{(g)}$};
			\node at (4,0) {$\ldots$};
			\draw (-.65,0)--(-.65,-.5)--(5.65,-.5)--(5.65,0);
			\end{tikzpicture}  \ = \ 
			\begin{tikzpicture}[baseline=-1mm]
			\foreach \i in {0,1,3} \pic[pic text = $A$] at (\i,0) {mpstensorBIG};
			\node at (2,0) {$\ldots$};	
			\draw (-.65,0)--(-.65,-.5)--(3.65,-.5)--(3.65,0);
			\end{tikzpicture} \ .
		\end{equation*}	
		We can now use \cref{lem:BAAAA_CFequ} with $S^i=A^i$ and $T^i= X(g) \sum_{i^\prime } \CC(g)_{ii^\prime}A^{i^\prime} X(g)^{-1}$ to finish the proof (this is where we use the assumption about the block structure of $A$, the crucial thing is that $X(g)$ is compatible with $A$'s blocks as in \cref{lem:BAAAA_CFequ}).
	\end{proof}

\subsection{Matter and gauge field MPV}

	\VirtualGroupTrans*
	\begin{proof}
		Apply \cref{thm:OnlyFieldGaugeGroup} on the tensor $AB$ and the representations $\tilde{\RR}(g):=\II\otimes\RR(g)$ and $\tilde{\LL}(g):=\CC(g)\otimes\LL(g)$ to obtain:
		\begin{equation}  \label{eq:RRRRXXXX}
			\begin{tikzpicture}[baseline=-1mm]
			\pic[pic text = $A$] at (0,0) {mpstensorBIG};
			\pic[pic text = $B$] at (1,0) {mpstensorBIG};
			\pic[pic text = $\RR(g)$] at (1,1) {verticalmatrixBIG};
			\end{tikzpicture} \  = \ 
			\begin{tikzpicture}[baseline=-1mm]
			\pic[pic text = $A$] at (0,0) {mpstensorBIG};
			\pic[pic text = $B$] at (1,0) {mpstensorBIG};
			\pic[pic text =  $X(g)$] at (2,0) {horizontalmatrixBIG};
			\end{tikzpicture} \ , 
		\end{equation} 
		and
		\begin{equation}  
			\begin{tikzpicture}[baseline=-1mm]
			\pic[pic text = $A$] at (0,0) {mpstensorBIG};
			\pic[pic text = $B $] at (1,0) {mpstensorBIG};
			\pic[pic text = $\CC(g)$] at (0,1) {verticalmatrixBIG};
			\pic[pic text = $\LL(g)$] at (1,1) {verticalmatrixBIG};
			\end{tikzpicture} \   = \ 
			\begin{tikzpicture}[baseline=-1mm]
			\pic[pic text =  $\myinv{X{(g)}}$] at (0,0) {horizontalmatrixWIDE};
			\pic[pic text = $A$] at (1.15,0) {mpstensorBIG};
			\pic[pic text = $B$] at (2.15,0) {mpstensorBIG};
			\end{tikzpicture} \ ,
		\end{equation} 
	where $X(g)$ is a projective representation with the same multiplier as ${\RR}(g)$.	Apply \cref{thm:OnlyFieldGaugeGroup} once more, this time on the tensor $BA$ and the representations $\tilde{\RR}(g):=\RR(g)\otimes\CC(g)$ and $\tilde{\LL}(g):=\LL(g)\otimes\II$ to obtain:

		 \begin{equation}  \label{eq:AAAAPRE}
			\begin{tikzpicture}[baseline=-1mm]
			\pic[pic text = $B$] at (0,0) {mpstensorBIG};
			\pic[pic text = $A $] at (1,0) {mpstensorBIG};
			\pic[pic text = $\RR(g)$] at (0,1) {verticalmatrixBIG};
			\pic[pic text = $\CC(g)$] at (1,1) {verticalmatrixBIG};
			\end{tikzpicture} \   = \ 
			\begin{tikzpicture}[baseline=-1mm]
			\pic[pic text = $B$] at (0,0) {mpstensorBIG};
			\pic[pic text = $A$] at (1,0) {mpstensorBIG};
			\pic[pic text =  $Y(g)$] at (2,0) {horizontalmatrixBIG};
			\end{tikzpicture} \ ,
		\end{equation} 
		 and
		\begin{equation}     \label{eq:LLLLYYYY}
			\begin{tikzpicture}[baseline=-1mm]
			\pic[pic text = $B$] at (0,0) {mpstensorBIG};
			\pic[pic text = $A$] at (1,0) {mpstensorBIG};
			\pic[pic text = $\LL(g)$] at (0,1) {verticalmatrixBIG};
			\end{tikzpicture} \  = \ 
			\begin{tikzpicture}[baseline=-1mm]
			\pic[pic text =  $\myinv{Y(g)}$] at (0,0) {horizontalmatrixWIDE};
			\pic[pic text = $B$] at (1.15,0) {mpstensorBIG};
			\pic[pic text = $A$] at (2.15,0) {mpstensorBIG};
			\end{tikzpicture}  \ ,
		\end{equation} 
		where $Y(g)$ is a projective representation with inverse multiplier to $\LL(g)$.
		By contracting  \cref{eq:RRRRXXXX}  from the left with the tensor $BA\ldots B$, and taking the appropriate linear combination to obtain the identity matrix out of the tensor $BA\ldots BA$ (using the normality of $BA$), we eliminate the the $A$ in \cref{eq:RRRRXXXX}). By contracting  \cref{eq:LLLLYYYY} with $BA\ldots B$ from the right - we eliminate the $A$ in  \cref{eq:LLLLYYYY} (using the normality of $AB$). This proves the transformation rule for $B$ - \cref{eq:BRightGroupTrans}.
	 	Next plug in the transformation rules of $B$ under $\RR(g)$  into \cref{eq:AAAAPRE} to obtain:
		\begin{equation}  
		 	 \begin{tikzpicture}[baseline=-1mm]
		 	 \pic[pic text = $B$] at (0,0) {mpstensorBIG};
		 	 \pic[pic text = $X(g)$] at (1,0) {mpstensorBIG}; 
		 	 \pic[pic text = $A $] at (2,0) {mpstensorBIG};
		 	 \pic[pic text = $\CC(g)$] at (2,1) {verticalmatrixBIG};
		 	 \end{tikzpicture} \   = \ 
		 	 \begin{tikzpicture}[baseline=-1mm]
		 	 \pic[pic text = $B$] at (0,0) {mpstensorBIG};
		 	 \pic[pic text = $A$] at (1,0) {mpstensorBIG};
		 	 \pic[pic text =  $Y(g)$] at (2,0) {horizontalmatrixBIG};
		 	 \end{tikzpicture} \ .
		\end{equation} 
	 	 Finally, eliminate the $B$ from the equation as in the previous steps to obtain the transformation rule for $A$ and finish the proof.
	\end{proof}
	
	\RedToNormal*
	
	\begin{proof}
		We argue similarly to  \cite{Cirac2017} where it is described how to obtain, from an arbitrary tensor, a tensor in CF generating the same MPV. Begin by finding all of $AB$'s minimal invariant subspaces $S_\alpha$, such that $A^iB^jP_\alpha = P_\alpha A^iB^jP_\alpha$ for all $i$ and $j$, where $P_\alpha$ is the orthogonal projection to $S_\alpha$. Let $\hat{P_\alpha}$ be the partial isometry $\hat{P}_\alpha:\mathbb{C}^{D_1} \rightarrow S_\alpha$ such that $\hat{P}_\alpha^\dagger \hat{P_\alpha} = P_\alpha$ and $\hat{P}_\alpha \hat{P}_\alpha^\dagger = \II|_{S_\alpha}$. Define $A^i_\alpha :=\hat{P}_\alpha A^i$ and $ B^j_\alpha :=  B^j \hat{P}_\alpha^\dagger$. Then
		\begin{equation*}
			\begin{split}
			\ket{\psi^N_{AB}} = 
			&\sum_{\{i\},\{j\}} \tr \left( A^{i_1}B^{j_1} \ldots A^{i_N}B^{j_N} \right) \ket{i_1 j_1 \ldots i_N j_N}  \\
			= &\sum_{\{i\},\{j\},\alpha} \tr \left( P_\alpha A^{i_1}B^{j_1} \ldots  A^{i_N}B^{j_N} P_\alpha \right) \ket{i_1 j_1 \ldots i_N j_N}  \\
			= &\sum_{\{i\},\{j\},\alpha} \tr \left( P_\alpha A^{i_1}B^{j_1} P_\alpha  \ldots P_\alpha  A^{i_N}B^{j_N} P_\alpha \right) \ket{i_1 j_1 \ldots i_N j_N}  \\
			= &\sum_{\{i\},\{j\},\alpha} \tr \left( \hat{P}_\alpha A^{i_1}B^{j_1} \hat{P}_\alpha^\dagger \hat{P_\alpha} \ldots  \hat{P}_\alpha^\dagger \hat{P_\alpha} A^{i_N}B^{j_N} \hat{P}_\alpha^\dagger \right) \ket{i_1 j_1 \ldots i_N j_N}  \\
			= & \sum_{\alpha} \ket{\psi^N_{A_\alpha B_\alpha}}	\ .	
			\end{split}
		\end{equation*}
		Note that the bond dimension of the tensor $A_\alpha B_\alpha$ is $dim(S_\alpha)$ which is smaller than the original bond dimension $D_2$.
		Now $A_\alpha B_\alpha$ has no invariant subspaces but $B_\alpha A_\alpha $ might, therefore, perform the same for   $B_\alpha A_\alpha$ - for each $\alpha$  find all minimal invariant subspaces $T_{\alpha\beta}$ of $B_\alpha A_\alpha$. Let $Q_{\alpha\beta}$ be the orthogonal projections to the invariant subspaces and $\hat{Q}_{\alpha\beta}$   the partial isometries.
		Define  
		$A^i_{\alpha\beta} := A^i_\alpha \hat{Q}_{\alpha\beta}^\dagger = \hat{P}_\alpha A^i \hat{Q}_{\alpha\beta}^\dagger$,
		and 
		$ B^j_{\alpha\beta} := \hat{Q}_{\alpha\beta} B^j_\alpha = \hat{Q}_{\alpha\beta} B^j \hat{P}_\alpha^\dagger$.
		For each $\alpha$ we have 
		\begin{equation*}
			\ket{\psi^N_{A_\alpha B_\alpha}} =\sum_{\beta} \ket{\psi^N_{A_{\alpha\beta} B_{\alpha\beta} }}		\ ,
		\end{equation*}
		and thus 
		\begin{equation*}
			\ket{\psi^N_{AB}} =  \sum_{\alpha} \ket{\psi^N_{A_\alpha B_\alpha}}	= 	\sum_{\alpha\beta} \ket{\psi^N_{A_{\alpha\beta} B_{\alpha\beta}}} \ .
		\end{equation*}
		Now each $A_{\alpha \beta}B_{\alpha \beta}$ might be reducible. 		
		Continue iterating this decomposition, once for $AB$ and once for $BA$. Since the bond dimension of the tensors obtained at each step decreases, this procedure is bound to end after a finite number of steps. In the final step, we obtain the tensors 
		$A^i_\chi = \hat{P}_\chi A^i \hat{Q}_{\chi}^\dagger $
		and $B^j_\chi = \hat{Q}_{\chi} B^j \hat{P}_\chi^\dagger$, where $\chi$ incorporates all the previous indices, such that both $A_\chi B_\chi$ and $B_\chi A_\chi$ have no non trivial invariant subspaces.
		We can then perform the second step (as in \cite{Cirac2017}) which involves blocking the tensors in order to eliminate the periodicity of the associated CP maps. The blocking scheme is the following: $\tilde{A}^{ijk}:= A^iB^jA^k$ and $\tilde{B}^{lmn} := B^lA^mB^n$.
		%Under this scheme $\tilde{B}\tilde{A}$ is $BABABA$ and $\tilde{A}\tilde{B}$ is $ABABAB$. 
		We can find the least common multiple of the length needed to eliminate the periodicity of  all CP maps, and perform step 1 again if needed (after blocking the CP maps again become reducible \cite{Wolf2012a}). We can repeat these steps as many times as needed. The process terminates at some point because the bond dimension decreases at each step. Finally, rescale the matrices $A_\chi B_\chi$ by a constant $\mu_\chi$ to make the spectral radius  of $E_{A_\chi B_\chi}$ and $E_{B_\chi A_\chi}$ equal to $1$. The following lemma is required:
		 \begin{lemma}
		 	$E_{A_\chi B_\chi}$ and $E_{B_\chi A_\chi}$ have the same spectral radius.
		 \end{lemma}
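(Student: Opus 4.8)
The plan is to write both maps as the two composites of a single pair of completely positive maps and then invoke the elementary fact that, for linear operators $P$ and $Q$, the products $PQ$ and $QP$ have the same nonzero spectrum. This is the natural way to handle the fact that $A_\chi B_\chi$ and $B_\chi A_\chi$ act on bond spaces of generally different dimension, so that a similarity transformation is not available.

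Concretely, let $\{A^i_\chi\}$ and $\{B^j_\chi\}$ be the (rectangular) matrices of the tensors $A_\chi$ and $B_\chi$, of sizes $D_1\times D_2$ and $D_2\times D_1$ respectively, and define the CP maps $E_{A_\chi}:\mathcal{M}_{D_2\times D_2}\to\mathcal{M}_{D_1\times D_1}$, $E_{A_\chi}(X)=\sum_i A^i_\chi X {A^i_\chi}^\dagger$, and $E_{B_\chi}:\mathcal{M}_{D_1\times D_1}\to\mathcal{M}_{D_2\times D_2}$, $E_{B_\chi}(Y)=\sum_j B^j_\chi Y {B^j_\chi}^\dagger$. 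Since the matrices of $A_\chi B_\chi$ are $A^i_\chi B^j_\chi$, a direct computation gives
\begin{equation*}
	E_{A_\chi B_\chi}(X)=\sum_{i,j}A^i_\chi B^j_\chi X {B^j_\chi}^\dagger {A^i_\chi}^\dagger = \sum_i A^i_\chi \Bigl(\sum_j B^j_\chi X {B^j_\chi}^\dagger\Bigr){A^i_\chi}^\dagger = E_{A_\chi}\bigl(E_{B_\chi}(X)\bigr),
\end{equation*}
that is $E_{A_\chi B_\chi}=E_{A_\chi}\circ E_{B_\chi}$, and in exactly the same way $E_{B_\chi A_\chi}=E_{B_\chi}\circ E_{A_\chi}$.

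It then remains to note that for linear maps $P:V\to W$ and $Q:W\to V$ between finite dimensional spaces, $PQ$ and $QP$ have the same nonzero eigenvalues: if $PQw=\lambda w$ with $\lambda\neq0$ then $Qw\neq0$ and $QP(Qw)=\lambda(Qw)$, and symmetrically; and if one composite is nilpotent so is the other, since $(PQ)^k=0$ forces $(QP)^{k+1}=Q(PQ)^kP=0$. Hence $PQ$ and $QP$ have equal spectral radius, and applying this with $P=E_{A_\chi}$, $Q=E_{B_\chi}$ together with the previous display yields that $E_{A_\chi B_\chi}$ and $E_{B_\chi A_\chi}$ have the same spectral radius. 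I do not anticipate a real obstacle here; the only point deserving care is the rectangular dimension bookkeeping, and if one prefers to avoid eigenvalue multiplicities altogether one can instead argue from $\rho(PQ)=\lim_n\|(PQ)^n\|^{1/n}$ and $(PQ)^{n+1}=P(QP)^nQ$, which gives $\rho(PQ)\le\rho(QP)$ and, by symmetry, equality.
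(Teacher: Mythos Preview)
Your proof is correct and follows essentially the same approach as the paper: factor $E_{A_\chi B_\chi}=E_{A_\chi}\circ E_{B_\chi}$ and $E_{B_\chi A_\chi}=E_{B_\chi}\circ E_{A_\chi}$, then transfer eigenvectors via $w\mapsto Qw$ to conclude that the two composites share nonzero spectrum. Your write-up is in fact slightly more careful than the paper's, since you explicitly note that $Qw\neq0$ when $\lambda\neq0$ and you handle the nilpotent case separately (the paper simply asserts ``same spectrum'' without these checks).
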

		 \begin{proof}
		 	Let $X$ be an eigenvector of $E_{A_\chi B_\chi}$ with eigenvalue $\lambda$: 
		 	$E_{A_\chi B_\chi}(X)=E_{A_\chi} E_{B_\chi}(X) = \lambda X$. Apply $E_{B_\chi}$ to both sides to obtain $E_{B_\chi A_\chi} E_{B_\chi}(X)= \lambda E_{B_\chi}(X)$, i.e., $E_{B_\chi}(X)$ is an eigenvector of $E_{B_\chi A_\chi}$ with eigenvalue $\lambda$. Interchanging $A$ and $B$ we obtain that $E_{A_\chi B_\chi}$ and $E_{B_\chi A_\chi}$ have the same spectrum, and therefore the same spectral radius.
		 \end{proof}
	\end{proof}
	
	\begin{remark}[Blocking of the symmetry operators]
		In the blocking scheme described in \cref{prop:RedToNormal}, if we start out with a MPV with a local symmetry under the operators $\RR(g)\otimes\CC(g)\otimes\LL(g)$, after blocking we need to redefine the operators to act on the blocked degrees of freedom as follows:
		$\tilde{\RR}(g):= \RR(g)\otimes \CC(g) \otimes ( \LL(g) \RR(g))$,
		$\tilde{\CC}(g):= \CC(g)\otimes(\LL(g) \RR(g))\otimes\CC(g)$ and
		$\tilde{\LL}(g):= (\LL(g) \RR(g)) \otimes \CC(g)\ \otimes  \LL(g)$.
	\end{remark}
	
	\EveryNormCompInv*

	\begin{proof}
		Pick a BNT  $\{ A_j B_j\}$ out of the normal tensors  $\{ A_\chi B_\chi\}$ and construct a new tensor $C$ by blocking the tensors $\{ A_\chi B_\chi\}$ diagonally (possibly changing the order of the blocks):
		\begin{equation*}
			C^{ii^\prime} = \oplus_\chi \mu_\chi A^i_\chi B^{i^\prime}_\chi = \oplus_j \oplus_q \mu_{j,q}  V_{j,q}^{-1}  {A^i_j B^{i^\prime}_j} V_{j,q} \ ,
		\end{equation*}
		where for every $\chi$ there is a $j$ and a $q$ such that  $\mu_\chi A_\chi B_\chi= \mu_{j,q}  V_{j,q}^{-1}  {A^i_j B^{i^\prime}_j} V_{j,q}$. Now $C$  is in CF and generates the same   MPV as $AB$. We have
		\begin{equation*}
			O\ket{\psi^N_C}=	O\ket{\psi^N_{AB}} = \ket{\psi^N_{AB}}  = \ket{\psi^N_C} \ .
		\end{equation*}
		We can now use \cref{lem:BAAAA_CFequ} 
		(use \cref{eq:BAAAA_CFequ} from the proof of the lemma) for the tensor $C=AB$ to obtain 
		\begin{equation*}
			\begin{tikzpicture}[baseline=-1mm]
			\foreach \i in {0,2} \pic[pic text = $A_j$] at (\i,0) {mpstensor};
			\foreach \i in {1,3,5} \pic[pic text = $B_j$] at (\i,0) {mpstensor};
			\foreach \i in {0,1,2,3,5} \draw (\i,1)--(\i,1.25);
			\draw  (-0.25,1)--(5.25,1)--(5.25,0.5)--(-0.25,0.5)--(-0.25,1) ;
			\node at (4,0) {$\ldots$};
			\node at (2,.75) {$ ( \II \otimes)\, O \, (\otimes \II)$};
			\end{tikzpicture} \ = \ 
			\begin{tikzpicture}[baseline=-1mm]
			\foreach \i in {0,2} \pic[pic text = $A_j$] at (\i,0) {mpstensor};
			\foreach \i in {1,3,5} \pic[pic text = $B_j$] at (\i,0) {mpstensor};
			\node at (4,0) {$\ldots$};
			\end{tikzpicture} \ ,
		\end{equation*} 
		where the operator in the box contains $O$ (we need to extend it by at most one $\otimes \II$ from the right and from the left in order  to occupy a full $AB\ldots AB$ block). Finally, we have
		\begin{equation*}
			O\ket{\psi^N_{A_\chi B_\chi}}=
			O\ket{\psi^N_{V_{j,q}^{-1}  {A_j B_j} V_{j,q}}} 
			=	\ket{\psi^N_{   {A_j B_j} }} = \ket{\psi^N_{A_\chi B_\chi}}
		\end{equation*}
	\end{proof}
	
	Recall the definition of an elementary $A$ block:
	\ElementaryAblock*
	
	\WigEckClassA*
	
	\begin{proof}
		Write out \cref{eq:AGroupTrans}:
		\begin{equation*}
			\sum_{i^\prime} \Theta(g)_{ii^\prime } A^{i^\prime} = X(g)^{-1} A^i Y(g) \ . 
			\end{equation*}
		Taking the complex conjugate of both sides
		\begin{equation*} 
			\sum_{i^\prime} \Theta{(g^{-1})}_{i^\prime i}  \overline{A^{i^\prime}} = 
			\overline{X(g)^{-1}}  \overline{A^i}  \overline{Y(g)}  \, 
		\end{equation*}
		we see that $\vec{\overline{A}}$ satisfies \cref{eq:VecOpDef} for $\vec{v}=\vec{e^i}$ and the group element $g^{-1}$, with $\kappa=\CC(g), \pi = \overline{X(g)}$ and $\eta = \overline{Y(g)}$.
		Therefore $\vec{\overline{A}}$ is a vector operator with respect to the above representations. In the case when $\CC(g)={D^{J_0}(g)},X(g)={D_\gamma^j(g)}$ and $Y(g)={D_{\myinv{\gamma}}^l(g)}$ are irreducible representations, according to \cref{thm:GenWigEck} $\overline{A}$ is of the form: 
		\begin{equation*}
			\overline{A^M} = \sum_{J:{D^j(g)}={D^{J_0}(g)}} \alpha_J \sum_{m,n}\innerCG{\overline{j},m;l,n}{J,M} \ket{m}\bra{n} \ ,
		\end{equation*}
		taking the complex conjugate, we find the desired form of $A$.
	\end{proof}

	\begin{example} \label{ex:Aconstr1}
		A direct calculation using the Clebsch-Gordan series \cite{Klimyk}:
		\begin{equation*}
			{D^j(g)}_{m,m'}{D^l(g)}_{n,n'} = \sum_{L,N,N'}\innerCG{j,m;l,n}{L,N}\innerCG{L,N'}{j,m';l,n'}{D^l(g)}_{N,N'} 
		\end{equation*}
		shows that the tensor composed of the matrices
		\begin{equation*}
			{A^{J,M}} =   \sum_{m,n}\innerCG{J,M}{\overline{j},m;l,n} \ket{m}\bra{n} \ , 
		\end{equation*} 
		for a fixed value of $J$, satisfies 
		\begin{equation*} 
			\begin{tikzpicture}[baseline=-1mm]
			\pic[pic text = $A$] at (0,0) {mpstensorBIG};
			\pic[pic text = ${D^J(g)}$] at (0,1) {verticalmatrixWIDE};
			%\pic[pic text = $\oplus_J {D^J{(g)}}$] at (0,1) {verticalmatrixSuperWIDE};
			\end{tikzpicture} \ = \ 
			\begin{tikzpicture}[baseline=-1mm]
			\pic[pic text =  $\myinv{{D^j}(g)}$] at (0,0) {horizontalmatrixWIDE};
			\pic[pic text = $A$] at (1.15,0) {mpstensorBIG};
			\pic[pic text = ${D^l(g)}$] at (2.3,0) {horizontalmatrixWIDE};
			\end{tikzpicture} \ .
		\end{equation*} 
		Consequently, the tensor composed out of all matrices $\{A^{J,M}\}_{J\in\mathfrak{J},M}$  (all $J$ appearing in the decomposition  $\overline{{D^j(g)}}\otimes {D^l(g)} = \oplus_{J\in\mathfrak{J}} {D^J(g)}$)  satisfies:  
		\begin{equation*} 
			\begin{tikzpicture}[baseline=-1mm]
			\pic[pic text = $A$] at (0,0) {mpstensorBIG};
			%\pic[pic text = ${D^J(g)}$] at (0,1) {verticalmatrixWIDE};
			\pic[pic text =  $\oplus_{J\in\mathfrak{J}} {D^J{(g)}}$] at (0,1) {verticalmatrixSuperWIDE};
			\end{tikzpicture} \ = \ 
			\begin{tikzpicture}[baseline=-1mm]
			\pic[pic text =  $\myinv{{D^j}(g)}$] at (0,0) {horizontalmatrixWIDE};
			\pic[pic text = $A$] at (1.15,0) {mpstensorBIG};
			\pic[pic text = ${D^l(g)}$] at (2.3,0) {horizontalmatrixWIDE};	
			\end{tikzpicture} \ .
		\end{equation*} 
		
		In addition to being a symmetric tensor, this tensor is always injective:  
		let $D:=dim(j)=dim(l)$. Due to the fact that the C-G coefficients are the entries of a unitary matrix, the matrices $A^{J,M}$ satisfy $\tr\left({A^{J,M}}^\dagger A^{J^\prime,M^\prime} \right) = \delta_{J,J^\prime} \delta_{M,M^\prime}$. Since there are $D\times D$ of them, they form an ONB of the space of $D\times D$ matrices. 
	\end{example}
	We can now prove the following proposition, the proof of which we postponed in the previous section.
	\NonTrivAcoupling*
	
	\begin{proof}
		For each ${D_\gamma^{j_k}(g)}$ appearing in  $X(g)=\oplus^s_{k=1} {D_\gamma^{j_k}(g)}$, let  $J(k)$ be an irreducible representation index appearing in the decomposition of  $\overline{D_\gamma^{j_k}(g)}\otimes {{D_\gamma^{j_k}(g)}}$. Let $A^{(k)}$ be the tensor presented in \cref{ex:Aconstr1}, satisfying 
		\begin{equation*} 
			\begin{tikzpicture}[baseline=-1mm]
			\pic[pic text = $A^{(k)}$] at (0,0) {mpstensorBIG};
			\pic[pic text = ${D^{J(k)}(g)}$] at (0,1) {verticalmatrixSuperWIDE};
			\end{tikzpicture} \ = \ 
			\begin{tikzpicture}[baseline=-1mm]
			\pic[pic text =  $\myinv{{D^{j_k}}(g)}$] at (0,0) {horizontalmatrixWIDE1};
			\pic[pic text = $A^{(k)}$] at (1.35,0) {mpstensorBIG};
			\pic[pic text = ${D^{j_k}(g)}$] at (2.5,0) {horizontalmatrixWIDE};
			\end{tikzpicture} \ .
		\end{equation*} 	
		Let the matter Hilbert space be $\mathcal{H}_A := \oplus_k \mathcal{H}_{J(k)}$. Let the tensor $A$ in each sector $J(k)$ of the physical space  be zero except for in the $k,k$ virtual block, such that:
		\begin{equation*}
			\left[X^{-1}(g)A^{J_k,M}X(g) \right]_{l,l^\prime}  =\delta(l,k) \delta(l^\prime,k) D^{J_k}_{M,M^\prime}(g) A^{(k)J_k,M^\prime} \ .
			\end{equation*}
	\end{proof}
	
	\RXLYconnectionNormal*
	
	\begin{proof}
		\begin{enumerate}
		\item
			Assume the contrary is true, then according to \cref{prop:ElemBBlock}, $B^{k,m,n}$ are all zero and this value of $k$ does not contribute to the MPV.
		\item \label{item:Y}
			If there is a $Y^a(g)$ for which there is  not an appropriate $k$ then according to \cref{prop:ElemBBlock}, $B^{k,m,n}$ all have a zero row which is a contradiction to the normality of $AB$.  
		\item 
			As in \cref{item:Y}, $B^{k,m,n}$ now would have a zero column and would contradict normality of $BA$.
		\end{enumerate} 
	\end{proof}

	\LocalSymmWOGlobal*

	The proof is given by the following example:
 
	\begin{example}
		Let $G=D_{10}$ the dihedral group of order 10. It is  the group  generated by two elements: $r$  and $s$ satisfying $ r^5 = s^2 = (sr)^2 =e$. $D_{10}$ has two inequivalent two dimensional irreducible representations $\rho_1$ and $\rho_2$  generated by:
		\begin{equation*}
			\begin{split}
			\rho_1:\; & r\mapsto R_1  := 
			\left(
			\begin{array}{cc}
			e^{i\theta} &0\\
			0 & e^{-i\theta}
			\end{array}
			\right)\\
			& s\mapsto S  := 			
			\left(
			\begin{array}{cc}
			0 & 1\\
			1 & 0
			\end{array}
			\right)\\
			\rho_2:\; & r\mapsto R_2 := 
			\left(
			\begin{array}{cc}
			e^{i2\theta} &0\\
			0 & e^{-i2\theta}
			\end{array}
			\right)\\
			& s\mapsto S  := 			
			\left(
			\begin{array}{cc}
			0 & 1\\
			1 & 0
			\end{array}
			\right)	\ ,				
			\end{split}
		\end{equation*}
		where $\theta = 2\pi/5$.
		The tensor product  $\overline{\rho_1} \otimes \rho_2 $ decomposes into   
		$  \rho_1 \oplus \rho_2$:
		\begin{equation*}
			\begin{split}
			\overline{\rho_1} \otimes \rho_2 :\;  & r\mapsto R_1\otimes R_2  = 
			\left(
			\begin{array}{cccc}
			e^{i\theta} &0&0&0\\
			0 & e^{-i3\theta}&0&0\\
			0&0&e^{i3\theta}&0\\
			0&0&0&e^{-i\theta}
			\end{array}
			\right)\\
			 & s\mapsto S\otimes S  = 			
			\left(
			\begin{array}{cccc}
			0 &0&0& 1\\
			0&0&1 & 0\\
			0&1&0&0\\
			1&0&0&0
			\end{array}  
			\right)	\ .				
			\end{split}
		\end{equation*} 
		It is clear from inspection of the above $4\times 4$ matrices that the unitary transformation realizing the direct sum decomposition is a permutation of the basis elements, the non zero Clebsch-Gordan coefficients are:
		\begin{equation*}
			\begin{split}
			\innerCG{\rho_1, 1 }{\overline{\rho_1}, 1 ; \rho_2, 1} = &1 \\
			\innerCG{\rho_1, 2}{\overline{\rho_1}, 2; \rho_2, 2} = &1 \\
			\innerCG{\rho_2, 1}{\overline{\rho_1}, 1; \rho_2, 2} = &1 \\
			\innerCG{\rho_2, 2}{\overline{\rho_1}, 2; \rho_2, 1} = &1  \ .
			\end{split}
		\end{equation*}	
		Following \cref{ex:Aconstr1}, and using these coefficients, define the tensor $A$:
		
		\begin{equation*}
			A^1=\left(
			\begin{array}{cc}
			1 & 0\\
			0 & 0
			\end{array}
			\right) \;\;\;
			A^2=\left(
			\begin{array}{cc}
			0 & 0\\
			0 & 1
			\end{array}
			\right) \ .
		\end{equation*}
		$A$ satisfies:
		\begin{equation}  \label{eq:ex2A}
			\begin{tikzpicture}[baseline=-1mm]
			\pic[pic text = $A$] at (0,0) {mpstensorBIG};
			\pic[pic text = $\rho_1(g)$] at (0,1) {verticalmatrixBIG};
			\end{tikzpicture} \ = \ 
			\begin{tikzpicture}[baseline=-1mm]
			\pic[pic text = $\myinv{{\rho_1}(g)}$] at (0,0) {horizontalmatrixWIDE};
			\pic[pic text = $A$] at (1.15,0) {mpstensorBIG};
			\pic[pic text = $\rho_2(g)$] at (2.3,0) {horizontalmatrixWIDE};
			\end{tikzpicture} \ .
		\end{equation}
		According to \cref{prop:ElemBBlock} the following tensor $B$:
		
		\begin{equation*}
			\begin{split}
			B^{11}&=\left(
			\begin{array}{cc}
			1 & 0\\
			0 & 0
			\end{array}
			\right) \;\;\;
			B^{12}=\left(
			\begin{array}{cc}
			0 & 1\\
			0 & 0
			\end{array}
			\right)\\
			B^{21}&=\left(
			\begin{array}{cc}
			0 & 0\\
			1 & 0
			\end{array}
			\right) \;\;\;
			B^{22}=\left(
			\begin{array}{cc}
			0 & 0\\
			0 & 1
			\end{array}
			\right)  \ ,
			\end{split}
		\end{equation*}
		satisfies:
		\begin{equation} \label{eq:ex2B}
			\begin{tikzpicture}[baseline=-1mm]
			\pic[pic text = $B$] at (0,0) {mpstensorBIG};
			\pic[pic text = $\rho_1(g)$] at (0,1) {verticalmatrixBIG};
			\end{tikzpicture} \ = \ 
			\begin{tikzpicture}[baseline=-1mm]
			\pic[pic text = $B$] at (1,0) {mpstensorBIG};
			\pic[pic text = $\rho_1(g)$] at (2,0) {horizontalmatrixBIG};
			\end{tikzpicture} \;\;\;\;\;\; ; \;\;\;\;\;  
			\begin{tikzpicture}[baseline=-1mm]
			\pic[pic text = $B$] at (0,0) {mpstensorBIG};
			\pic[pic text = $\overline{\rho_2(g)}$] at (0,1) {verticalmatrixBIG};
			\end{tikzpicture} \ = \ 
			\begin{tikzpicture}[baseline=-1mm]
			\pic[pic text = $\myinv{{\rho_2}(g)}$] at (0,0) {horizontalmatrixWIDE};
			\pic[pic text = $B$] at (1.15,0) {mpstensorBIG};
			%	\pic[pic text = $\rho_1$] at (2,0) {horizontalmatrixBIG};
			\end{tikzpicture} \ .
		\end{equation}  
		
		\cref{eq:ex2A}) and \cref{eq:ex2B} are easily verified for the generators of the group, $r$ and $s$, and therefore hold for any group element.  From these equations it follows that $\ket{\psi^N_{AB}} $ has a local symmetry (\cref{def:BABSymm} with 
		$\RR(g) = \rho_1(g)$, $\CC(g) = \rho_1(g)$ and $\LL(g) = \overline{\rho_2(g)}$);  however, $\rho_1$ is not a global symmetry for $\ket{\psi^N_A}$, 
		%If it were, according to \cref{prop:traceless} the matrices $A^i$ would have to be traceless.
		as is easily verified for a MPV of length $1$.
		Similarly, a direct computation shows $\RR(g)\otimes\LL(g) \ket{\psi^2_{B}} \neq \ket{\psi^2_{B}}$. 
	\end{example}

	\GaugeGlobSymm*
	\begin{proof}
		As $X(g)$ appears in  \cref{eq:AGlobalSymm} together with its inverse, it is defined only up to a phase. As we assumed all $X_j(g)$ are from the same cohomology class, we can lift each one of them to be projective representations with the same multiplier $\gamma$.  We can assume without loss of generality (same argument as in \cref{rem:Xfreedom}) that each  $X_j(g)$  is block diagonal:
		$X(g)=\oplus_j \oplus_q \oplus_{a_j} {D_\gamma^{a_j}(g)}$. Set $\RR(g) = X(g)$, $\LL(g) = \overline{X(g)}$ and let $B$ be completely block diagonal: 
		\begin{equation*}
		B^{j,q,a_j;m,n} = \ket{j,q,a_j;m}\bra{j,q,a_j;n} \ ,
		\end{equation*}
		i.e., for each irreducible block of $X(g)$ there is a corresponding sector in $\mathcal{H}_B$:
		\begin{equation*}
		\mathcal{H}_B = \oplus_j \oplus_q \oplus_{a_j}
		\mathcal{H}_{\overline{a_j}}\otimes \mathcal{H}_{a_j}  \ ,
		\end{equation*}
		where $\overline{a_j}$ is the complex conjugate representation to $a_j$.
	\end{proof}

	\begin{example}[An $SU(2)$ gauge invariant MPV]
		
		For $G=SU(2)$ we demonstrate the construction of a general locally invariant MPV emphasizing the constituents of   physical theories and relating our setting and notation to  \cite{Zohar:2014qma,Zohar:2015jnb}. 
		Write the irreducible representations $D^j(g)$ in terms of their generators:
		\begin{equation*}
			D^j(g) =  \exp \left( i \sum_a  \tau _a^j{\varphi_a(g)} \right) , \; \forall g\in SU(2),
		\end{equation*}
		where $\{\varphi_a(g) \}_{a=1}^3$ are real parameters and  
		$\{ \tau^j_a\}_{a=1}^3$ are Hermitian $(2j+1)\times (2j+1)$ matrices satisfying the  $\mathfrak{su}(2)$ Lie algebra relations:
		\begin{equation*}
			\left[ \tau^j_a , \tau^j_b \right] =  i \varepsilon_{abc}\tau^j_c \ , 
		\end{equation*}
		where $ \varepsilon_{abc}$ is the totally antisymmetric tensor. 
		Let $D^r$ and $D^l$ be two irreducible representations of $SU(2)$ and 
		let $\mathfrak{J}_0$ be the set of   irreducible representation indices appearing in the decomposition of the tensor product: $\overline{D^r(g)}\otimes D^l(g) \cong \oplus_{J\in \mathfrak{J}_0 } D^J(g)$. Let $\mathfrak{J}\subseteq \mathfrak{J}_0$.  
		Define  the representation  $\CC(g)$ as generated by 
		$\{Q_a:= \bigoplus_{J\in\mathfrak{J}} \tau_a^J \}_{a=1}^3$: 
		\begin{equation*}
			\CC(g) =  \bigoplus_{J\in\mathfrak{J}} D^J(g) = 
			\bigoplus_{J\in\mathfrak{J}} \exp \left( i \sum_a  \tau _a^J{\varphi_a(g)} \right) =  
			\exp \left( i \sum_a Q_a{\varphi_a(g)} \right) \ .
		\end{equation*}
		As in  \cref{ex:Aconstr1}, the  tensor $A$, defined by the matrices:
		\begin{equation} \label{eq:A_SUN_ex}
			A^{J,M} =  \sum_{m,n} \alpha_J
			\innerCG{J,M}{\overline{r},m;l,n} \ket{m}\bra{n} \ , J\in \mathfrak{J}, M =1,\ldots,dim(J) \ 
		\end{equation}
		satisfies:
		\begin{equation*} 
			\begin{tikzpicture}[baseline=-1mm]
			\pic[pic text = $A$] at (0,0) {mpstensorBIG};
			\pic[pic text = $\CC(g)$] at (0,1) {verticalmatrixBIG};
			\end{tikzpicture} \ = \ 
			\begin{tikzpicture}[baseline=-1mm]
			\pic[pic text = $\myinv{{D^r}(g)}$] at (0,0) {horizontalmatrixWIDE};
			\pic[pic text = $A$] at (1.15,0) {mpstensorBIG};
			\pic[pic text = ${D^l(g)}$] at (2.3,0) {horizontalmatrixWIDE};
			\end{tikzpicture} \ .
		\end{equation*} 
		This relation, written in terms of the generators, reads:
		\begin{equation*}
			\sum_{M^\prime} \left[\exp \left( i \sum_a  \tau _a^J{\varphi_a(g)} \right)\right]_{M,M^\prime} A^{J,M^\prime}  = 
			\exp \left(- i \sum_a  \tau _a^r{\varphi_a(g)} \right)
			A^{J,M}
			\exp \left( i \sum_a  \tau _a^l{\varphi_a(g)} \right)  	\ .	 
		\end{equation*}
		Differentiating this equation with respect to any one of the  group parameters $\varphi_a $ we obtain the ``virtual Gauss law'' satisfied by $A$:
		\begin{equation*}
			Q_a:  A^{J,M} \mapsto
			\sum_{M^\prime} \left[\tau^J_a \right]_{M,M^\prime} A^{J,M^\prime} = 
			-\tau^r_a A^{J,M} + A^{J,M} \tau^l_a \ . 
		\end{equation*}
		Next,   add a gauge field degree of freedom to the matter MPV, described by a tensor:
		$ B^{m,n}=\ket{m}\bra{n} $,
		and define the transformations:
		\begin{equation*}
			\RR(g) = \II \otimes D^r(g) \;\;\; ; \;\;\; \LL(g) = \overline{D^l(g)} \otimes \II \ .
		\end{equation*}
		The action of $\LL(g)$ on the gauge field Hilbert space is given by:
		\begin{equation*}
			\LL(g) \ket{m,n} = 
			(\overline{D^l(g)} \otimes \II )\ket{m,n} = 
			\sum_{m^\prime} \overline{D^l(g)}_{m^\prime,m} \ket{m^\prime,n}  = 
			\sum_{m^\prime}  {D^l(g^{-1})}_{m,m^\prime} \ket{m^\prime,n} \ ; 
		\end{equation*}
		whereas $\RR(g)$ acts as:
		\begin{equation*}
			\RR(g) \ket{m,n} = 
			\sum_{n^\prime}  {D^r({g})}_{n^\prime,n} \ket{m,n^\prime} \ . 
		\end{equation*}
		
		$\RR(g)$ and $\LL(g)$ can be defined in terms of  right and left  generators  $\{R_a\}_{a=1}^3$ and $\{L_a\}_{a=1}^3$, as described in \cref{sub:GenAndGaussLaw}:
		\begin{equation*}
			\begin{split}
			\RR(g) =& \exp\left( i \sum_a  R_a{\varphi_a(g)}\right) \\
			\LL(g) =& \exp\left( i \sum_a  L_a{\varphi_a(g)}\right) \ . 
			\end{split}
		\end{equation*}
		In our case $R_a$ is simply given by  $\II \otimes \tau^r_a$  but in general $R_a$ and $L_a$ can have a block diagonal structure. 
		Define the generators of the local gauge transformation around lattice site $2K+1$:
		\begin{equation*}
			G_a^{[2K+1]}:=	\left(R_a^{[2K]} + Q_a^{[2K+1]} +L_a^{[2K+2]} \right)  \ .
		\end{equation*}	
		From our construction it follows that   for all $ g\in G$ and for all lattice sites $K$:
		\begin{equation*}
			\RR^{[2K]}(g)\otimes\CC^{[2K+1]}(g)\otimes \LL^{[2K+2]}(g)\ket{\psi^N_{AB}} = \ket{\psi^N_{AB}} \ . 
		\end{equation*}
		Once again, differentiating with respect to the group parameters $\varphi_a$ we obtain:
		\begin{equation} \label{eq:GaussLaw2}  
			\left(R_a^{[2K]} + Q_a^{[2K+1]} +L_a^{[2K+2]} \right) \ket{\psi^N_{AB}} = G_a^{[2K+1]} \ket{\psi^N_{AB}}
			=0 \ .
		\end{equation} 
		This is the lattice version of Gauss' law. 
		In  physical theories   $D^l = \overline{D^r}$ and thus states $\ket{\psi_A}$ have a global symmetry generated by $\{Q_a\}$ - the $SU(2)$ charge operators.    $R_a$ and $L_a$ are identified with right and left electric fields respectively
		\cite{Zohar:2014qma}.
	
		One could generalize the above construction for 
		\begin{equation*}
			\RR(g) = \oplus_k \left( \II \otimes D^{r_k}(g) \right) 
			\;\;\; ; \;\;\; 
			\LL(g) = \oplus_k \left( \overline{D^{l_k}(g)} \otimes \II  \right) \ 
		\end{equation*}
		by constructing $A$ and $B$ as above for each $k$ sector and combining  them together block diagonally (in both physical and virtual dimensions).  Duplicating the virtual representations while keeping the physical ones fixed can be achieved by $B^{m,n} \mapsto (B^{m,n}\oplus B^{m,n})  $, $A^{J,M} \mapsto (A_1^{J,M}\oplus A_2^{J,M})  $. This can  be used to enlarge the number of variational parameters. The tensors $A_1$ and $A_2$ must both have the same structure (\cref{eq:A_SUN_ex}) but can have different parameters $\alpha_J$. The generalization to of the above to $G=SU(N)$ is straightforward.
	\end{example}

\section{Summary}
	In this work, we studied and classified translationally invariant MPVs with a local (gauge) symmetry under arbitrary groups. The states we classified may involve two types of building blocks, $A$ and $B$ tensors, which represent matter and gauge fields respectively.
	
	We showed that matter-only MPVs may have   a local symmetry, when one transforms a single site, only if they are trivial (composed of products of invariant states at each site). We also classified pure gauge states, which involve only $B$ tensors and have local invariance when one transforms two neighboring sites, including the well-known structure of physical states involving only gauge fields. These two building blocks can be combined in a way that allows coupling matter fields (with global symmetry) to gauge fields (with local symmetry) in a locally symmetric manner, as in conventional gauge theory scenarios. Furthermore, we expanded the class of gauge invariant states to include ones that involve matter and gauge fields which do not possess the known symmetry properties when decoupled. We classified the structure and properties of such MPVs as well. 
	
	Further work shall include a generalization to further dimensions, i.e.\ using PEPS. In our work we were able to connect some of the results to the symmetry properties and structure of previous gauge invariant PEPS constructions \cite{Zohar:2015jnb,Tagliacozzo2014,Haegeman:2014maa} when the space dimension is reduced to one, and therefore higher dimensional generalizations in the spirit of the current work should be possible. In particular, the tensor describing the gauge field, as it resides on the links of a lattice, is a one dimensional object for any spatial dimension, and has shown, in some particular cases, properties known from previous PEPS studies. Another important generalization one should consider is a fermionic representation of the matter, combining the spirit of this work with previous works on fermionic PEPS with gauge symmetry \cite{Zohar:2015eda,Zohar:2016wcf} or with global symmetry \cite{BultinckVerstraete,Bultinck:2017orh}. From the physical point of view, a physical study aiming at understanding the new classes of gauge invariant states introduced in this paper, in which the matter and gauge field do not posses separate symmetries, may also potentially unfold new physical phenomena and phases.

\section{Acknowledgments}
	IK acknowledges the support of the DAAD.

\newpage

\bibliography{library} 
\bibliographystyle{elsarticle-num}

\end{document}